\newif\ifnotes
\newif\ifSTOC
\newif\ifdraft
\newcommand\coolover[2]{\mathrlap{\smash{\overbrace{\phantom{%
    \begin{matrix} #2 \end{matrix}}}^{\mbox{$#1$}}}}#2}
\newcommand\coolleftbrace[2]{%
#1\left\{\vphantom{\begin{matrix} #2 \end{matrix}}\right.}
\newcommand{\research}[1]{}
\newcommand{\cut}[1]{}
\title{The Complexity of Learning (Pseudo)random Dynamics of Black Holes and Other Chaotic Systems}
\author[1]{Lisa Yang}
\author[2]{and Netta Engelhardt}
\affiliation[1]{Computer Science and Artificial Intelligence Laboratory,\\
Massachusetts Institute of Technology, Cambridge, MA 02139, USA}
\affiliation[2]{Center for Theoretical Physics,\\
Massachusetts Institute of Technology, Cambridge, MA 02139, USA}
\emailAdd{lisayang@mit.edu}
\emailAdd{engeln@mit.edu}
\date{\today}
\abstract{It has been recently proposed that the naive semiclassical prediction of non-unitary black hole evaporation can be understood in the fundamental description of the black hole as a consequence of ignorance of high-complexity information. Validity of this conjecture implies that any algorithm which is polynomially bounded in computational complexity cannot accurately reconstruct the black hole dynamics. In this work, we prove that such bounded quantum algorithms cannot accurately predict (pseudo)random unitary dynamics, even if they are given access to an arbitrary set of polynomially complex observables under this time evolution; this shows that ``learning'' a (pseudo)random unitary is computationally hard. We use the common simplification of modeling black holes and more generally chaotic systems via (pseudo)random dynamics. The quantum algorithms that we consider are completely general, and their attempted guess for the time evolution of black holes is likewise unconstrained: it need not be a linear operator, and may be as general as an arbitrary (e.g. decohering) quantum channel. 

}
\newenvironment{manualtheorem}[1]{
  
  \manualtheoreminner
}{\endmanualtheoreminner}
\renewcommand{\[}{\left[}
\renewcommand{\]}{\right]}
\renewcommand{\(}{\left(}
\renewcommand{\)}{\right)}
\renewcommand{\gets}{\leftarrow}
\newcommand{\calR}{\mathcal{R}} 
\newcommand{\N}{\mathbb{N}}
\newcommand{\R}{\mathbb{R}}
\newcommand{\C}{\mathbb{C}}
\newcommand{\cD}{\mathcal{D}} 
\newcommand{\Ndistr}{\mathcal{N}}
\newcommand{\CNdistr}{\mathcal{CN}}
\newcommand{\CN}{R} 
\newcommand{\E}{\mathbb{E}}
\newcommand{\Eover}[1]{\underset{#1}{\E}}
\newcommand{\intover}[1]{\underset{#1}{\int}}
\newcommand{\Id}{\mathrm{I}} 
\newcommand{\SWAP}{\mathrm{SWAP}} 
\newcommand{\Haar}{\mu}
\newcommand{\KLD}{D_{\mathrm{KL}}} 
\newcommand{\md}{m} 
\newcommand{\TV}{\rm TV}
\newcommand{\zo}{\{0,1\}}
\newcommand{\poly}{\mathrm{poly}}
\newcommand{\secp}{\lambda} 
\newcommand{\Hn}{n} 
\newcommand{\Hd}{d} 
\newcommand{\Qgrav}{U}
\newcommand{\BH}{\psi}
\newcommand{\SCgrav}{O} 
\newcommand{\A}{\mathcal{A}} 
\newcommand{\QCC}{\mathsf{QCC}}
\newcommand{\prot}{\pi}
\newcommand{\Qalg}{\mathcal{Q}} 
\newcommand{\Qcir}{\mathcal{Q}} 
\newcommand{\SCcol}{\alpha} 
\newcommand{\Eapp}{\mathsf{APRX}} 
\newcommand{\Ugrav}{\widehat{U}} 
\newcommand{\qd}{q} 
\newcommand{\Vmat}{V} 
\newcommand{\Qmat}{Q} 
\newcommand{\Dist}{\mathsf{Dist}}
\newcommand{\Key}{\mathcal{K}}
\newcommand{\key}{k}
\newcommand{\PRU}{\mathcal{U}} 
\newcommand{\pru}{U} 
\newcommand{\PRS}{\Psi} 
\newcommand{\prs}{\psi} 
\newcommand{\p}{p} 
\newcommand{\neglf}{\eta}
\newcommand{\Hil}{\mathcal{H}}
\newcommand{\qop}{\mathcal{E}} 
\newcommand{\Kraus}{O}
\newcommand{\Knorm}{\beta}
\newcommand{\emm}{m} 
\begin{document}

\maketitle

\ifSTOC
\setcounter{page}{0}
\fi

\section{Introduction}\label{sec:intro}

The past two decades have highlighted a deep connection between computational complexity and fundamental physics (see e.g.~\cite{HarHay13, KimTan20, StaSus14, RobSta14, SusZha14, Ali15, BroRob15, BroRob15b, LehMye16, CouFis16, ChaMar16, CarMye16, CarCha17, EngWal17a, EngWal18, BroGha19, EngPen21a, EngPen21b, AkeEng22}). One such instance includes the dynamics of a large set of chaotic systems -- notably including black holes -- which are generally well-modeled by highly complex (or apparently complex) unitary time evolution~\cite{HayPre07, ChoSha21, HoCho22, RobYos17, Haa91, Pag93a}. Another such instance includes recent work~\cite{KimTan20, AkeEng22} on the black hole information paradox~\cite{Haw75}, which has leveraged high complexity unitary dynamics to show that the tension between semiclassical effective field theory and quantum gravity in an old black hole can be relaxed by limiting the regime of validity of the former using complexity. 

One of the upshots of these recent developments on complexity within gravity is a precise notion of \textit{complexity coarse-graining}. The idea of complexity coarse-graining in gravity in general and black holes in particular was initially described in~\cite{EngWal17b} as a way of quantifying the ignorance of a computationally bounded\footnote{Consistently with quantum information theoretic conventions, we use the term ``computationally bounded'' to mean bounded at polynomial (or subexponential) computational complexity (i.e. running time) in $\log\dim {\cal H}$, or $S$.} observer about the black hole interior. To be precise,~\cite{EngWal17b, EngWal18} (see also the refinement of~\cite{EngPen21a}) defined an entropy called the Simple Entropy for a state $\psi$: define the set ${\cal S}$ of states $\phi$ (where we include density matrices in the definition of a state) on the black hole Hilbert space ${\cal H}$ which agree with $\psi$ on all simple observables ${\cal O}$, i.e. 
\be
\langle {\cal O}\rangle_{\psi}= \langle {\cal O}\rangle_{\phi}
\ee
where simple means that ${\cal O}$ is at most polynomially complex in $\log \dim {\cal H}$. The Simple Entropy of $\psi$ is defined:
\be
S^{\rm simple}[\psi]\equiv \max\limits_{\phi\in {\cal S}} S_{\rm vN}[\phi].
\ee
This Jaynesian coarse-graining~\cite{Jay57a, Jay57b} can be thought of as a protocol that forgets any fine-grained -- meaning high complexity -- information in the state beyond the expectation values of simple operators. In particular, it replaces a (possibly pure) state $\psi$ with a more mixed state $\phi$.\footnote{A notable exception is the vacuum: rigidity theorems protect the gravitational vacuum, so in that case the set $S$ has only a single state in it: $\psi=\ket{0}$~\cite{EngWal17a}.} In AdS/CFT~\cite{Mal97}, the simple entropy is computed by the generalized entropy of the outermost quantum extremal surface~\cite{EngWal14} (QES)~\cite{EngWal17b, EngWal18, BouCha19}. 

In the QES calculation of the Page curve~\cite{Pag93b}~\cite{Pen19, AEMM}, the outermost QES for the radiation is nonminimal after the Page time. In particular, it does not dominate the computation of $S_{\rm vN}$ of the actual state, and its entropy does not follow the Page curve, but rather the so-called  ``Hawking curve''. The Simple Entropy may thus potentially be interpreted as the quantity in quantum gravity that corresponds to the entropy calculation in the non-unitary Hawking analysis of black hole evaporation. This perspective is supported by the Python's Lunch proposal~\cite{BroGha19} that reconstruction of the region between the outermost QES and the minimal QES is exponentially complex. In the case of the evaporating black hole, this corresponds to the entire interior: as predicted by Harlow-Hayden~\cite{HarHay13}, it is exponentially hard to reconstruct the interior from the radiation. The coarse-grained state $\phi$ whose fine-grained entropy computes $S^{\rm simple}[\psi]$ is precisely the ``apparent state'' that a computationally bounded reconstruction procedure should arrive at. The fact that this state also computes Hawking's non-unitarily evolving entropy was demonstrated rigorously in the recent work~\cite{AkeEng22} in the context of a class of non-isometric quantum code models of the bulk-to-boundary map for an evaporating black hole.

How generally does this paradigm extend to the discrepancy between time evolution via a decohering quantum channel predicted by Hawking's analysis and via unitary  dynamics as in the fundamental theory? Does complexity coarse-graining within the fundamental theory \emph{imply} an inconsistency with the fundamental time evolution in a black hole setting?

Consider a computationally bounded quantum algorithm which is given access to the expectation values of any operators in the fundamental theory; these expectation values may be computed including sources, as in the reconstruction protocol of~\cite{EngWal17b, EngWal18, EngPen21a}. Can such a quantum algorithm use this information to accurately predict the time evolution of a black hole (including its interior)? If the Hawking analysis is indeed emergent from complexity coarse-graining, the answer to this question must be \textit{no}. For the closest match to the Hawking calculation, a computationally bounded quantum algorithm's best guess for the time evolution of an evaporating black hole must be a decohering channel rather than a unitary operator. 

What makes black holes special? We expect that bulk reconstruction in spacetimes without horizons\footnote{Assuming some version of cosmic censorship that allows evaporating black holes but forbids more violent naked singularities formed from collapse, see e.g. ~\cite{EngFol20}.} is simple~\cite{BanDou98, HamKab05, HamKab06, HamKab06b, HeeMar, BouFre12, EngPen21b}. Horizons thus appear to be a primary obstacle to a simple reconstruction. Furthermore, black hole dynamics are understood to be chaotic (see e.g.~\cite{SekSus09, MalShe15}) and are often modeled with quantum codes using unitary operators with some degree of randomness, as in~\cite{HayPre07, HarHay13}. It is not uncommon to use Haar random unitaries (see e.g.~\cite{CotGur16, PirSun20, AkeEng22}), though these are surely unrealistic idealizations, and $k$-designs (see e.g.~\cite{HayPre07}). Due to the growing number of connections between chaotic systems and \textit{pseudorandomness} -- the property of being simultaneously efficiently generated while also  indistinguishable from Haar random by any computationally bounded algorithm -- we may also expect that pseudorandomness could be another property that sets black holes apart from other gravitational systems with less random dynamics.\footnote{The connection between chaos and horizons has been explored in a class of cases; however in the absence of a general established connection between a generic horizon and chaos, we will refrain from using the two concepts interchangeably.}

In this article, we show  that computationally bounded quantum learning algorithms cannot accurately predict the time evolution of general chaotic systems whose dynamics are pseudorandom.  Our proof relies on the high `information content' (high entropy) that such dynamics appear to have.\footnote{This refers to the average information content (also called the differential entropy) of the time evolution operator as a random variable.} We consider a learning algorithm whose task is to predict the time evolution of such a system. Given a state $\psi$, the algorithm outputs a state $\phi$ -- its best guess at the time evolution of $\psi$ under the pseudorandom unitary dynamics $U$. There are then two descriptions of the system: the actual exact pseudorandom dynamics and time-evolved state, $U$ applied to $\psi$, and the ``apparent'' low complexity dynamics and state $\phi$. We allow the states $\phi$ to be mixed density matrices and \textit{we will allow the output of the quantum algorithm to be a general quantum channel}. This makes immediate contact with the hypothesis that complexity coarse-graining produces Hawking calculation.\footnote{We leave the question of the actual nature of the output -- single linear operator or quantum channel -- to future work. Addressing this question would significantly clarify the connection between Hawking's calculation and complexity coarse-graining, but it is beyond the scope of the current work.}  Borrowing from the terminology introduced by~\cite{AkeEng22}, we call the actual description of the system the ``fundamental description''. We call the best description of the system as given by a computationally bounded algorithm aiming to produce the fundamental description the ``simple description''.~\footnote{In AdS/CFT, this would correspond to the spacetime volume that can be reconstructed from the boundary with access to arbitrary boundary time.} In the language of the black hole information problem, if Hawking's analysis can in fact be traced back to a bound in computational complexity, then such a coarse-graining should be sufficiently powerful to prevent computationally bounded observers from approximately learning the black hole's S-matrix. 

Our technical result is a set of general bounds on the accuracy of the simple description. On the most general level, we prove a bound on the accuracy in terms of the (differential) entropy of the dynamics; for (pseudo)random dynamics, this yields the desired bounds. While our motivation is inherited from gravity, our results are broadly applicable to  \textit{general} chaotic systems, with no restrictions (e.g. no particular code models) beyond the following three assumptions:
\begin{enumerate}
    \item The time evolution of the system is well-modeled by a Haar random or pseudorandom unitary operator.
    \item The state of the system is well-modeled by a Haar random or pseudorandom state.
    \item The prediction of the simple description is obtained as a result of a quantum learning process of bounded complexity, as described below.
\end{enumerate}

Assumptions 1 and 2 are generally expected to hold for many chaotic systems, in particular, for black holes in quantum gravity. We \textit{define} the simple description to be assumption 3. We model the observations and predictions that can be made within the simple description via a computationally bounded quantum \emph{learning} algorithm (assumption 3 above), as algorithms are a universal model for computation which, e.g. in the context of codes, includes any encoding procedure and (attempts to) decode. The learning algorithm has access to computationally bounded observations of the fundamental time evolution of states (modeled as a (pseudo)random unitary) and uses this to attempt to approximately predict the time evolution of the (pseudo)random state. Any ``approximate'' notions will always mean up to errors of $O(e^{-S})$, where $S=\log\dim\cal{H}$ and ${\cal H}$ is the Hilbert space (in the black hole context, $S$ is the entropy of the black hole, and we only consider black holes after scrambling). Our model uses the probably approximately correct (PAC) model of learning,\footnote{In the language of learning theory, the algorithm's goal is to learn the operator -- here the (pseudo)random unitary -- that it has access to. We use a model for learning quantum operators (the quantum analog of a function) and allow the learner to query the operator on any input states of its choice (rather than only on inputs from a fixed distribution as in the original PAC model). The formal definitions of our model are in Section~\ref{sec:prelim-comp-models}.} a foundational notion in computational learning theory introduced in~\cite{Val84} and widely studied thereafter (see e.g.~\cite{Ans22,AruWol17} for recent reviews on \emph{quantum} learning theory).\footnote{In this work, we use algorithms to refer to the formal notion of computations as is studied in computer science. 
Our results in this work are within the model of algorithms, which we suggest widely encompasses many computational processes and has been shown to include the formally defined class of ``computable functions''.
We will not use any version of the Church-Turing thesis to formally claim that \emph{any} process of logical deductions and calculations, or any physical process, can be simulated by an algorithm (which would then imply that our algorithmic model can be all encompassing). We note that~\cite{Sus20} argues that the Quantum Extended Church-Turing thesis would at least need to be modified to apply only to physical systems that stay outside the horizons of black holes.} We build upon this notion and the techniques of~\cite{AruGri21} to show our results.
We allow the learning algorithm to use any completely-positive trace preserving (CPTP) map, i.e. quantum channel, as its model for the pseudorandom unitary dynamics. This includes the possibility that the algorithm's best guess for the dynamics is a \emph{decohering} channel -- as in the non-unitary naive semiclassical analysis of Hawking's result.\footnote{We thank D. Harlow for discussions on this point.}

Our technical results show that there \textit{must} be a significant discrepancy between the simple prediction and the fundamental time evolution of chaotic systems. Prima facie, this result may seem vacuous: after all, how could a computationally bounded algorithm ever construct a Haar random (or pseudorandom) unitary? Such Haar random unitaries take an exponentially long time to build. The crucial ingredient here is that $U$ is already sampled from the Haar measure, and our algorithm -- much like an experimentalist -- has a potential shortcut: oracle access to $U$, i.e. queries to the sampled unitary which do not cost it any computational complexity.

This oracle access \emph{is} sufficient for a bounded algorithm to learn to predict many of the observables even if the time evolution operator $U$ is an exponentially complex unitary. This was shown in~\cite{HuaChe22}, a development which we discuss further in Section~\ref{sec:subsecpriorwork}. Thus it is possible for bounded algorithms to learn to predict some aspects of exponentially complex dynamics, e.g. the values of `simple' observables.
Our focus will be on whether it can learn the \emph{fine-grained} features of the correct state -- i.e. with the correct values for \emph{all} observables. This statement is not derived solely from the randomness in $\Qgrav$ but also importantly depends on the \emph{granularity} of what the algorithm aims to predict. Intuitively, the algorithm can access $O(\poly\log\dim\cal{H})$ qubits of information about $U$ through its queries; however, we show that even with oracle access its bounded complexity prevents it from learning the exponential amount of (highly entropic) information in $U$, preventing it from accurately predicting the fine-grained time evolution. 

It may seem clear nevertheless that there is a straightforward argument establishing this result. Let us briefly describe this (erroneous) argument: the columns of a Haar random unitary are frequently (naively) thought of as essentially $\Hd$ random vectors. Since our algorithm can at most query a small, $\frac{\poly\log\Hd}{\Hd}$ fraction of the matrix, it would appear clear that it cannot learn even most of the unitary. This simple argument is flawed, however: a random $U$ cannot be thought of as $\Hd$ independent random vectors. Only a small fraction of the columns are actually close to independent~\cite{Jia10,Jia09}. Nevertheless, we are able to take this into account and formalize a proof: we use~\cite{Jia10} to establish the number of columns of $U$ which are approximately independent as random vectors in order to show that the first $\Hd/\log\Hd$ columns cannot be learned with $\ll \Hd/\log\Hd$ queries. This constitutes a significant $1/\log\Hd$ fraction of the matrix $U$ yielding first a $1-\frac{1}{\log\Hd}$ bound, which can be improved to a $1-\Omega(1)$ bound (where $\Omega(1)$ denotes a quantity lower bounded by a constant independent of $\Hd$) via a re-partition of $U$ into $\log \Hd$ sets of columns. Together this yields our final results, that the columns of $U$ cannot be approximated well by any algorithm. Our full results are actually non-trivial quantitative bounds that depend on the operator $\Qgrav$; see Theorem~\ref{thm:intro-fidelity-general-learner} and the discussion thereafter for our full results.

Having established the nontriviality of our result, let us address another potential confusion: how closely tied are our conclusions to the extant results of Harlow and Hayden~\cite{HarHay13}. While the general motivation for the role of complexity in black hole evaporation is rooted in~\cite{HarHay13}, we are addressing a different question -- about the time evolution of an evaporating black hole, not about the existence of firewalls --  using different techniques -- quantum learning theory -- and with different assumptions (in particular, we do not assume the existence of one-way functions or require any assumptions on quantum statistical zero knowledge).

Let us now briefly describe the algorithm process and the definition of success in the learning process, and then state our quantitative results. As above let ${\cal H}$ be the fundamental theory Hilbert space, $\ket{\psi}$ a pseudorandom state in ${\cal H}$ (e.g. representing a black hole post scrambling), and $U$ the actual fundamental time evolution, modeled by a pseudorandom unitary $U$.  The task of our quantum algorithm is to output a state $O\ket{\psi}$ as close as possible to $U\ket{\psi}$: that is, to guess the time evolution of a (pseudo)random state. The algorithm has the following properties:
\begin{enumerate}
    \item Queries and Computations: the algorithm can make multiple queries to the fundamental description for any states in ${\cal H}$. That is, for any $\ket{\phi_i}\in {\cal H}$ it can ask for $U\ket{\phi_i}$. The algorithm can perform any computations between queries, on both the $U\ket{\phi_i}$ from its queries thus far, and on its own registers. This includes computations that entangle $U\ket{\phi_i}$ with its own registers, and measuring $U\ket{\phi_i}$. \footnote{We write $\ket{\phi_i}$ as a pure state only for notational convenience: all of our results actually allow the algorithm to query for mixed states $\rho_i$, receiving $\Qgrav\rho_i\Qgrav^\dagger$.} 
    \item Computationally Bounded: the algorithm is limited to a polynomial number of queries in $\log \dim{\cal H}$ 
    though each query can be for any state in ${\cal H}$ that it has (as ``advice'') or can prepare. The algorithm is also limited in processing power: even given $U\ket{\phi_i}$, it can only compute quantities from it that are polynomially complex in $\log \dim {\cal H}$. 
\end{enumerate}
For concreteness, we describe an example of a quantum algorithm that is allowed in this model: the algorithm has a set of states $\ket{\phi_1},\ldots,\ket{\phi_\ell}$, a combination of ``advice'' states, e.g. states already existing in the physical world, and states that it can prepare efficiently, and of which it may prepare multiple copies. ``Querying'' $\Qgrav$ on any $\ket{\phi_i}$ corresponds to $U$'s time evolution, turning the state into $\Qgrav\ket{\phi_i}$. The algorithm can iteratively query, perform efficient quantum computations on any of its states and ancilla registers, and prepare new $\ket{\phi_i}$. Its computations could be a combination of calculations in effective field theory and other calculations of bounded complexity (running time). It aims to produce a model of $\Qgrav$, e.g. it may produce a quantum circuit model for $\Qgrav$ where the $\SCgrav\in\C^{\Hd\times\Hd}$ that we discuss here is the operator that the circuit implements, although we will also present a more general result for quantum channels. (Note that $O$ stands for the operator that $\A$ learns, not an observable.) Then given a (pseudo)random state $\ket{\BH}$, the algorithm $\A$ would apply this quantum circuit to $\ket{\BH}$ to produce its prediction $\SCgrav\ket{\BH}$. We note an important (and realistic) aspect of our model: there is only a single copy of the state $\ket{\BH}$ representing the black hole.\footnote{The comparison between $\SCgrav\ket{\BH}$ and $\Qgrav\ket{\BH}$ is only in the theoretical analysis of fidelity.} This is given to the learning algorithm. Since there is only a single copy of $\ket{\BH}$, the algorithm cannot e.g. query for $\Qgrav$ applied to $\ket{\BH}$ and discern that its prediction $\SCgrav\ket{\BH}$ differs significantly e.g. using the SWAP test. Its model $\SCgrav$ is also formed before it is given $\ket{\BH}$;  afterwards it no longer has query access to $\Qgrav$ (as otherwise it could simply use $\Qgrav$ to predict $\Qgrav\ket{\BH}$ and never form its own model to begin with).

We define success in the following way: an algorithm \textit{learns} if the fidelity between $O\ket{\psi}$ and  $U\ket{\psi}$ is near maximal -- that is, exponentially close to 1 in $\log\dim\mathcal{H}=S$. 
That is, the learning algorithm is successful if \textit{on average}:
\be \label{eqn:intro-def-success}
\left | \bra{\psi}{O}^{\dagger} U\ket{\psi} \right|^{2}\geq 1-\frac{1}{{\rm exp}(\log\dim\mathcal{H})}=1-\frac{1}{{\rm exp}(S)}.
\ee
The algorithm fails at its task if on average:
\be \label{eqn:intro-def-failure}
\left | \bra{\psi}{O}^{\dagger} U\ket{\psi} \right|^{2}\leq 1-\frac{1}{{\rm poly}(\log \dim {\cal H})}=1-\frac{1}{{\rm poly}(S)}.
\ee
The above definition of failure\footnote{Note that we do not define success or failure for fidelity which is subexponentially but not polynomially bounded from $1$. This distinction is immaterial to our results, which are bounded below 1 by $\Omega(1)$.} may at first appear quite weak: after all, it seems that the simple description can still learn quite a lot about the fundamental theory if the fidelity is 1 up to polynomial corrections in $1/S$. It is indeed true that \textit{some learning} can happen (and indeed, must happen~\cite{HuaChe22}): this is consistent with our expectations that the time evolution of simple observables can be predicted by the simple description; in AdS/CFT, this corresponds to reconstruction exclusively outside of the Python's Lunch. However, as we explain in Appendix~\ref{sec:appendix-POVM}, this bound implies  that there is a positive operator-valued measure (POVM)\footnote{POVMs are the most general kind of measurement in quantum mechanics, generalizing projective measurements~\cite{NieChu11}. They take into account that a system of interest may be a part of a larger system: POVMs can describe how the system of interest is affected by projective measurements on a larger system.} -- i.e. a quantum measurement -- that with significant probability, distinguishes between the algorithm's prediction and the fundamental time evolution of the state, as we had set out to show.

Nevertheless, we may wish for a stronger bound that would definitively imply a large deviation between the simple and fundamental descriptions. Ambitiously, one might aim for an $o(1)$ bound close to `random guessing', however our focus in this work is not foremost on tightly characterizing how limited quantum learning algorithms are (just that they are significantly limited): a $1-1/\poly(\log d)$ bound would suffice for our purposes to show the simple description to be inaccurate past exponentially suppressed corrections. We do better than such a $1-1/\poly(\log d)$ bound, improving to a $1-\Omega(1)$ bound; the latter implies that there is a quantum measurement that distinguishes between the simple description and the fundamental time evolution with \emph{constant} probability. We thus show that the computationally bounded algorithm above cannot learn the time evolution operator $U$. Quantitatively, our technical result for Haar random $U$ is:

\begin{manualtheorem}{1}[Hardness of Learning Haar Random Unitaries]\label{thm:intro-fidelity-random}
Let $\Qgrav$ be a Haar random unitary and $\ket{\BH}$ a Haar random state in a Hilbert space of dimension $\Hd$.
For any quantum algorithm $\A$ that makes $o\(\frac{\Hd^2}{(\log\Hd)^{2}}\)$ quantum queries to $\Qgrav$, produces a single linear operator $\SCgrav$ (aiming to produce $\Qgrav$) and given $\ket{\BH}$ outputs $\SCgrav\ket{\BH}$,  
$$\underset{\Qgrav,\SCgrav,\ket{\BH}}{\mathrm{avg}}\[\left|\bra{\BH}\SCgrav^\dagger \Qgrav\ket{\BH}\right|^2\]\leq 1-\Omega\(1\)$$
where $\underset{\Qgrav,\SCgrav,\ket{\BH}}{\mathrm{avg}}$ denotes averaging over $\Qgrav,\SCgrav,\ket{\BH}$. 
This bound holds for any $\SCgrav$ with column norms upper bounded by $1+\frac{1}{\poly (d)}$. 
\end{manualtheorem} 
Note that the purposes of the averaging above is only to describe the behavior of typical operators: we do \textit{not} invoke fundamental averaging. The fundamental dynamics are described by a \textit{single} unitary operator. Note that the predicted time evolution operator $\SCgrav$ can be highly nonunitary even if the column norms of $O$ are close to $1$, $O^{\dagger}O$ need not even be approximately close to the identity:\footnote{We thank C. Akers for discussions on this point.} a highly non-unitary predicted time evolution. We comment that in all of our theorems, the complexity of the black hole dynamics is of the same order as the complexity of the algorithm: here the algorithm is allowed exponential complexity, when we discuss pseudorandom dynamics below, the algorithm will have subexponential complexity, and analogously for Theorem 3 regarding arbitrary distributions of $U$. Thus the algorithm in principle always has the computational resources we would expect it to require in order to produce $U$.

A close examination of Theorem~\ref{thm:intro-fidelity-random} reveals that it holds even for computationally unbounded learners $\A$ as long as they make a bounded number of queries. This is unsurprising: after all, Haar random dynamics are too random and too complex to match all of the expected behaviors of chaotic systems in general and black holes in particular. A simple way to see this is that Haar random unitaries cannot be efficiently generated, while  we do expect the fundamental dynamics of our system (in particular a black hole) to be efficiently generated. Therefore, once we prove the hardness of learning for random unitaries and states, we relax the restriction to pseudorandom unitaries and states. This is motivated by the connection between pseudorandom states and chaotic systems~\cite{HoCho22} (see also~\cite{CotHun17} and \cite{ChoSha21}) and the expectation that black holes are maximally chaotic systems~\cite{SheSta13, MalShe15}.\footnote{Admittedly, the notion of pseudorandomness developed in~\cite{HoCho22} is not necessarily the same as the pseudorandom notions, for unitaries and states~\cite{JiLiu18}, that we use in this work. However, the heuristic connection between the two still provides reasonable motivation for identifying chaotic systems with the pseudorandom notions that we use.} We note that measures of chaos can even be directly related to relaxations of Haar randomness: ~\cite{RobYos17} shows a (tight) quantitative relationship between out-of-time-order correlators and $k$-designs, which, like the pseudorandom notions~\cite{JiLiu18} that we use, capture some but not all of the properties of Haar random unitaries. For pseudorandom unitaries and states, we show: 

\begin{manualtheorem}{2}[Hardness of Learning Pseudorandom Unitaries]\label{thm:intro-fidelity-pseudorandom}
Let $\Qgrav$ be a pseudorandom unitary and $\ket{\BH}$ a pseudorandom state in a Hilbert space of dimension $\Hd$. For any family of $\poly(\log\Hd)$-size quantum circuits $\A$ with quantum query access to $\Qgrav$ that produces a unitary operator $\SCgrav$, and given $\ket{\BH}$ outputs $\SCgrav\ket{\BH}$,
$$\underset{\Qgrav,\SCgrav,\ket{\BH}}{\mathrm{avg}}\[\left|\bra{\BH}\SCgrav^\dagger \Qgrav\ket{\BH}\right|^2\]\leq 1-\Omega\(1\).$$
\end{manualtheorem} 
Thus the hardness of learning result \textit{still applies} to any unitary and state drawn from ensembles which are pseudorandom as defined in~\cite{JiLiu18}. Indeed, now that $\Qgrav$ is pseudorandom, we \textit{do} require a bound on the learning algorithm's computational complexity, which is consistent with the complexity hypothesis for semiclassical gravity starting with~\cite{HarHay13}. If the unitary and state in Theorem~\ref{thm:intro-fidelity-pseudorandom} are pseudorandom for sub-exponential size circuits,\footnote{Such constructions are possible using a sub-exponential hardness assumption e.g. sub-exponential LWE (Learning with Errors) which is commonly used in this area.} then Theorem~\ref{thm:intro-fidelity-pseudorandom} holds for any family $\A$ of sub-exponential size quantum circuits.  The reader may be surprised to learn that we have here restricted $\SCgrav$ to be unitary, in contrast with the previous result about Haar random $\ket{\psi}$ and $U$. While this may prima facie appear to be a significant restriction, it is an artifact of the proof technique: we will generalize the permitted output of the algorithm to include general quantum channels below.\footnote{This is a consequence of this proof requiring a circuit implementation of the quantum algorithm's output, which we know exists for operators obeying the axioms of quantum mechanics -- unitaries and quantum channels.}

We note a few constructions of pseudorandom states:~\cite{BouFef19} gives a construction of pseudorandom states specifically in the context of AdS black holes. See also the very recent work~\cite{BouFef22} for another construction of pseudorandom states. As pseudorandom unitaries and states are recent notions~\cite{JiLiu18}, there are currently only candidate constructions of pseudorandom unitaries, no constructions proven to be pseudorandom. The notion of pseudorandomness that we and recent works use is specifically for \emph{computationally bounded} algorithms, and seems a fitting model for the fundamental dynamics.

Having relaxed the requirement of Haar random to pseudorandom, we might wonder whether the chaotic or random nature of the system is a sine qua non for our results; that is, is even apparent randomness truly essential for our result? We expect that some amount of apparent randomness \textit{is} in fact necessary. When the system is simple -- e.g. low-energy perturbations of the vacuum on short time scales -- we expect that the coarse-grained theory \textit{can} be successful at this learning task. Indeed, some pseudorandomness of the state that we average over is critical: if $\ket{\psi}$ were not sufficiently (apparently) random (e.g. sampled from an ensemble of polynomially many states), then the algorithm could sample a state $\ket{\psi'}$ from this distribution on its own and with $1/\poly(\log\Hd)$ probability, obtain $\ket{\psi'}=\ket{\psi}$. Then the quantum algorithm can query the oracle $U$ on $\ket{\psi'}$ and simply use its output $U\ket{\psi}$ as the prediction. Similarly, some pseudorandomness of the unitary dynamics is also critical here since if $\Qgrav$ is a fixed (efficient) unitary, then the quantum algorithm can simply apply $\Qgrav$ to $\ket{\psi}$. On a technical level, the results rely on a high differential entropy of $U$; in fact, the bounds presented above can be derived for a unitary drawn from any distribution with sufficiently high differential entropy. We explain in Section~\ref{sec:tech-overview} how such a bound is derived as part of proving these theorems. Thus even if a given chaotic system is not well modeled by any pseudorandom unitary, so long as its time evolution operator can be modeled from a distribution with high differential entropy, our results still hold for the system. 

To genuinely make contact with our original motivation of reproducing Hawking's analysis by coarse-graining over complexity, we must consider a further relaxation of Theorem~\ref{thm:intro-fidelity-pseudorandom}. Rather than requiring $\A$ to output a single linear operator, we would like to allow for a more general quantum operation (completely positive trace-preserving map -- CPTP) $\qop$ as its best guess for $\Qgrav$. This would allow $\qop$ to be a decohering channel. Our learning results can accommodate this generalization: we explicitly prove the fidelity bound for any distribution of unitaries for $\Qgrav$ (which may be taken to be e.g. the distribution of a (pseudo)random unitary) based on differential entropy. We thus prove the following very general result, in which we do not assume that the unitary dynamics are drawn from any particular distribution, and we permit the quantum algorithm to output any quantum channel as its best guess for the time evolution unitary. Although our focus in this work is on the hardness of learning quantum gravity dynamics modeled by (pseudo)random unitaries, we are able to prove a bound for a general distribution of unitaries, which may be of interest both in physics and in learning theory (e.g. to model the dynamics of other systems).

\begin{manualtheorem}{3}[Hardness of Learning for Algorithms Predicting with Quantum Channels]\label{thm:intro-fidelity-general-learner}
Let $\Hil$ be any Hilbert space of dimension $\Hd$.
Let $\Qgrav$ be a unitary sampled from any distribution of unitary operators  on $\mathcal{H}$ and let $\ket{\BH}$ be a Haar random state\footnote{With a similar statement for pseudorandom $\ket{\psi}$.} in $\Hil$.
For any quantum algorithm $\A$ that makes $\ell$ quantum queries to $\Qgrav$, learns a quantum channel denoted as $\qop:\Hil\to\Hil'$ mapping from $\Hil$ to any Hilbert space $\Hil'$, and given $\ket{\BH}$ the algorithm outputs $\qop(\ket{\BH}\bra{\BH})$, we have that for any fixed\footnote{By fixed, we mean the map $\mathcal{M}$ is independent of the instance of $\Qgrav$, $\ket{\BH}$, and $\qop$ that is sampled or output.} quantum channel $\mathcal{M}:\Hil'\to\Hil$: 
$$\underset{\Qgrav,\qop,
\ket{\BH}}{\mathrm{avg}}\[F\(\mathcal{M}\(\qop(\ket{\BH}\bra{\BH})\),\Qgrav\ket{\BH}\bra{\BH}\Qgrav^\dagger\)\]\leq 1-\frac{1}{8\pi e}\cdot\frac{\Hd^2}{e^{\frac{2}{\Hd^4}\[-h\(\left|\Qgrav\otimes\Qgrav\right|\)+2\ell \log_{2}d\]}},
$$
where $F$ is the fidelity between two density operators, and $h(|\Qgrav\otimes\Qgrav|)$ is the joint differential entropy of the elements of $\Qgrav\otimes\Qgrav$ after taking their norms.
\end{manualtheorem}
Note that the bound in Theorem~\ref{thm:intro-fidelity-general-learner} is not in the same format -- i.e. $1-\Omega(1)$ -- as the previous two theorems. This is inevitable for a general distribution of $U$; Theorem~\ref{thm:intro-fidelity-general-learner} can most accurately be thought of as the generalization of  an intermediate result derived within the proof of Theorem~\ref{thm:intro-fidelity-random}, that for any distribution of $\Qgrav$ (not necessarily Haar random), the average fidelity is $\leq 1-e^{h(|U|)...}$ (note that $h(|U|)$ can be, and is, negative in our setting). In Theorem~\ref{thm:intro-fidelity-random} we instantiate this intermediate bound for Haar random $\Qgrav$ using their properties to obtain the $1-\Omega(1)$ bound (Theorem~\ref{thm:intro-fidelity-pseudorandom} for pseudorandom $\Qgrav$ follows thereafter). The parallelism that exists between our proof techniques for Theorems~\ref{thm:intro-fidelity-random}-\ref{thm:intro-fidelity-pseudorandom}, and our proof techniques for Theorem~\ref{thm:intro-fidelity-general-learner}, suggests that, by using these properties mutatis mutandis for Haar random and pseudorandom unitaries to instantiate Theorem~\ref{thm:intro-fidelity-general-learner}, the same also holds for algorithms predicting with a quantum channel $\qop$.

This final result raises an interesting question: under what circumstances is the output of $\A$ a general quantum operation $\qop$ which is not a unitary, or more generally not a single linear operator? More specifically, under what assumptions on $U$ is the output of $\A$ a decohering channel? Our results allow for the \textit{possibility} that $\A$ outputs a decohering channel; when is that possibility actually realized? It may be sufficient for $U$ and $\psi$ to be pseudorandom, or there may be additional conditions on $U$ without which $\A$ may output $O$ or even another unitary $U_{A}\neq U$. We leave this question for future work, remarking only that the additional conditions on $U$ (if any) that are both sufficient and necessary for a decohering output $\qop$ are likely to be required for a maximally accurate quantum code model of Hawking's calculation as a coarse-graining in the fundamental description.

Let us briefly comment on the relation to the non-isometric codes of~\cite{AkeEng22}. In those code models, the primary (though not exclusive) use of complexity bounds was to restrict the regime of validity of the effective, semiclassical description to subexponentially complex measurements. As a proof of principle that such a restriction can be successful in resolving critical conflicting aspects of the black hole information problem,~\cite{AkeEng22} constructed explicit models of non-isometric maps encoding the effective description in the fundamental description. These authors then showed that within these code models, complexity coarse-graining in the fundamental description does reproduce Hawking's entropy calculation via the simple entropy. Our work here is a more general question about generic reconstruction of the semiclassical calculation via algorithms subject to complexity coarse-graining.

Finally, let us note a caveat on applying our results to the gravitational context: this is admittedly a toy model and does not incorporate numerous aspects of gravity (e.g. diffeomorphism invariance). Nevertheless, it serves as a bare-bones model for the role computational complexity may play in semiclassical gravitational dynamics.

\paragraph{Organization.} In Section~\ref{sec:subsecpriorwork} we discuss relations of our results with prior works in learning theory. In Section~\ref{sec:tech-overview} we give a technical overview (walk-through) of the proofs of our main results. The remaining sections contain the formal definitions and proofs. In Section~\ref{sec:prelim} we define the learning model and state preliminaries. In Section~\ref{sec:fidelity-upperbd-random} we state the full version of Theorem~\ref{thm:intro-fidelity-random} and give its proof, parts of which are placed in Appendix~\ref{sec:appendix-tech} for readability. Section~\ref{sec:fidelity-upperbd-pseudorandom} is on Theorem~\ref{thm:intro-fidelity-pseudorandom}: in Section~\ref{sec:fidelity-defn-PRU-PRS} we give the definitions of pseudorandom unitaries and states, prove lemmas in Sections~\ref{sec:fidelity-PRU} and~\ref{sec:fidelity-PRS} which we use in Section~\ref{sec:PR-thm-proof} to prove Theorem~\ref{thm:intro-fidelity-pseudorandom}. Finally in Section~\ref{sec:general-learner} we state and prove Theorem~\ref{thm:intro-fidelity-general-learner} for algorithms predicting with quantum channels. We give further context and discuss implications of our results throughout.

\subsection{Relation to Prior Work in Learning Theory}\label{sec:subsecpriorwork}

The capabilities of quantum algorithms and the potential for quantum advantage in machine learning is of active interest in fields of computing (see e.g.~\cite{HuaBro21} for a recent discussion). We note that our technical results may be extended to prove that it is hard for any quantum algorithm to learn an unknown \emph{shallow} quantum circuit (an analogue of the results of~\cite{AruGri21} for learning classical shallow circuits). It suffices to construct a shallow pseudorandom unitary or to simply use circuits for producing pseudorandom states (perhaps~\cite{JiLiu18,BraShm20,BouFef22}) in place of the pseudorandom unitary in Theorem~\ref{thm:intro-fidelity-pseudorandom} and consider quantum queries to this circuit.

Below we discuss the relation of our results to recent and prior works in learning theory, and comment on (other) implications of our results for computer science.
\paragraph{Relation to the learning algorithm in~\cite{HuaChe22}.} From the theoretical quantum machine learning end, the very recent work of~\cite{HuaChe22} shows, positively, the capabilities of quantum machine learning algorithms for predicting observables of an unknown quantum process. Their suggested applications include efficiently predicting the outcomes of physical experiments and simulating complex systems faster than real-time. They provide an efficient quantum algorithm that, for any unknown quantum process $\cal{E}$ and state $\rho$ from any ``locally flat'' distribution, can learn to predict any ``bounded-degree'' (e.g. local) observable of $\cal{E}(\rho)$ (on average over $\rho$). This includes exponentially complex $\cal{E}$, e.g. corresponding to chaotic dynamics, and a wide range of distributions for $\rho$. Although our conceptual idea of modeling semiclassical gravity as a learning algorithm is unrelated to~\cite{HuaChe22}, our technical result is complementary in nature to their result. We show that when $\cal{E}$ and $\rho$ are pseudorandom as in a chaotic system,  
no efficient quantum learning algorithm is able to predict a state that agrees, within $o\(\frac{1}{\log\dim\cal{H}}\)$, with $\cal{E}(\rho)$ for every POVM (generalized measurement). Thus while for highly complex processes $\cal{E}$ and a wide range of states $\rho$, the learning algorithm of~\cite{HuaChe22} is able to predict many (``bounded-degree'') observables, we give evidence against the ability of any efficient learning algorithm to predict for every observable.\footnote{To accommodate observables of high complexity in our model, we consider the algorithm to predict a state and then can consider, without efficiency considerations, the observable on the predicted state vs. on the fundamental state.}

\medskip
\cite{AkeEng22} and~\cite{HuaChe22} together with this article illustrate that a complexity cutoff for semiclassical gravity would simultaneously permit accurate computations of the expectation values of general classes of observables, while forbidding the predictions of arbitrary, high complexity observables when the system is chaotic. This together supports the hypothesis of a complexity cutoff: the former permits the validity of semiclassical gravity predictions in the cases where it is expected to be accurate; the latter is precisely where semiclassical gravity fails to match the actual evolution of the system even approximately, e.g. in the black hole information paradox. 

\paragraph{Prior works in learning theory.}
We do not know of any prior works proposing our conceptual idea -- modeling the learning of effective physical dynamics, e.g. semiclassical gravity, as a computationally bounded learning algorithm and showing its limitation in learning fundamental dynamics, e.g. quantum gravity.

Here we discuss works within learning theory related to our technical results. Recently, Arunachalam, Grilo, and Sundaram~\cite{AruGri21} proved that it is hard for quantum algorithms to learn \emph{classical} (Boolean) functions, extending the proof techniques of~\cite{Val84,Kha93,Kha92} to quantum learners. They then transitioned from random Boolean functions to pseudorandom functions to show the hardness of learning $\mathsf{AC}^0$ and $\mathsf{TC}^0$ circuits. Beyond theoretical interest, a motivation of their work was to show the limitations of quantum machine learning for learning the weights of neural networks (a general class of which can be implemented by $\mathsf{TC}^0$ circuits). The high level structure of our proof follows~\cite{AruGri21}; our contribution is in overcoming the challenges in working with quantum operators as the object to learn. We give an overview of the technical ideas in Section~\ref{sec:tech-overview}.

Most of the literature focuses on learning classical functions. To our knowledge, the works on learning quantum operators are in quantum state tomography: learning or reconstructing an unknown quantum state from measurement results on many copies of the state~\cite{Aar06}. Our learning model and results focus on learning quantum operators, and allows the learner to make arbitrary queries and post processing. See~\cite{HuaChe22} and the references therein for the literature on quantum (process) tomography.
Additionally,~\cite{ChuLin18} show that polynomially many samples suffice to learn polynomial-size quantum circuits. Our hardness results are complementary: random unitaries are inefficient to generate, and while pseudorandom unitaries are efficiently computable, we are then interested in the time (not sample) complexity of the learner.

\paragraph{Implications of our results for learning theory and cryptography.}
As is frequently the case in the interdisciplinary field of quantum gravity and quantum information, progress motivated by one leads to new insights about its counterpart. Our results may also be considered within quantum learning theory with no reference to the role and motivation that it has in physics. We prove the hardness of learning (pseudo)random unitaries acting on (pseudo)random states. This shows that if there exists a pseudorandom unitary construction, then there are efficient quantum processes that cannot be efficiently learned. This gives evidence that for chaotic systems, quantum machine learning has limitations in learning and predicting.
We note that our results also have relevance to quantum cryptography: we show that pseudorandom unitaries have a concrete ``real-world'' security guarantee from the otherwise abstract  pseudorandomness guarantee.\footnote{The pseudorandomness guarantee is that the adversary (algorithm) cannot distinguish between query access to the pseudorandom unitary, or to a hypothetical instantiation of a Haar random unitary. This is a ``real-ideal'' guarantee between the real-world pseudorandom instantiation and a hypothetical ``ideal world'' Haar random instantiation. Such guarantees are used in cryptography to give flexible and robust security definitions (see~\cite{Lin16} for some context). Theorem~\ref{thm:intro-fidelity-pseudorandom} gives a concrete unpredictability guarantee for the real-world pseudorandom instantiation alone (using, in part, this ``real-ideal'' guarantee).} 
Thus pseudorandom unitaries, which may be used in place of classical pseudorandom functions on quantum computers, are (1) efficiently computable, and (2) have outputs which cannot be predicted by quantum adversaries, even if they can make adaptive queries to the unitary themselves.

\section{Technical Overview}\label{sec:tech-overview}

In this section we give an overview of the proofs of Theorems~\ref{thm:intro-fidelity-random}-\ref{thm:intro-fidelity-general-learner}, showing the hardness of learning (pseudo)random unitaries. This section is intended to be a walk through of how we derive our results, familiarizing readers with the main tools and techniques that we use in the formal proofs of Theorems~\ref{thm:intro-fidelity-random}-\ref{thm:intro-fidelity-general-learner}. The purpose of this overview is pedagogical and explanatory, and we thus focus on an initial simpler result here -- deriving a fidelity bound of $1-\frac{1}{\poly(\log\Hd)}$ for any algorithm that aims to learn a Haar random unitary $\Qgrav$ and produces a single linear operator $\SCgrav$ as its hypothesis for $\Qgrav$. (Note that an upper bound of $1-\frac{1}{\poly(\log\Hd)}$ already establishes the hardness of learning for such algorithms.)
The derivation of this initial result allows us to give an illustration of our primary techniques. Further ideas and techniques are used to derive Theorems~\ref{thm:intro-fidelity-random}-\ref{thm:intro-fidelity-general-learner}, including the $1-\Omega(1)$ bound, for which we only include brief high level explanations here. Proofs of our full results Theorems~\ref{thm:intro-fidelity-random}-\ref{thm:intro-fidelity-general-learner} are in Sections~\ref{sec:fidelity-upperbd-random}-\ref{sec:general-learner} and Appendix~\ref{sec:appendix-tech}.

We focus on the proof of Theorem~\ref{thm:intro-fidelity-random}, which is where most of the technical work lies. We then briefly explain how to prove Theorem~\ref{thm:intro-fidelity-pseudorandom} and describe how bounding the fidelity shows there exist operators that distinguish the semiclassical prediction from the quantum gravity evolution of a system. We remark that even if one is only interested in proving the hardness of learning for pseudorandom unitaries (e.g. as an \emph{efficiently} computable model of scrambling in quantum gravity), the definition of pseudorandom is that it is computationally indistinguishable from Haar random. To use the pseudorandom property, it is natural, perhaps necessary, to first show the result for a Haar random unitary.

Our proof of Theorem~\ref{thm:intro-fidelity-random} follows the blueprint by Arunachalam, Grilo, and Sundaram used in their work to show the hardness of quantum algorithms learning \emph{classical} functions (Lemma 4.1 in~\cite{AruGri21}), which in turn builds on~\cite{Kha92,Kha93} considering classical learning algorithms. The bulk of our work here will be in carrying out this blueprint for quantum operators (the quantum analog of a function).

Since the proof for classical functions is simpler, we include a walk through of~\cite{AruGri21}'s proof (modifying the presentation to create analogies with our quantum setting) at the end of this overview. The reader is welcome to refer to it to help clarify the structure of our proof.

Central to our results is a notion of entropy defined for real variables called the differential entropy. To use this notion for the fundamental time evolution operator $\Qgrav$ and the operator ($O$ or $\qop$) that the learning algorithm predicts with, we will implicitly use fixed basis for the underlying Hilbert space. The choice of basis does not matter for our results; it simply allows us to refer to these operators via matrix representation.

\paragraph{Learning quantum operators.}

Consider two quantum algorithms $\calR$ and $\A$ where $\calR$ is given a Haar random unitary $\Qgrav\in\C^{\Hd\times\Hd}$ (unknown to $\A$) and $\calR$ and $\A$ interact sending quantum information (qubits) back and forth. Specifically, this allows $\A$ to query $\Qgrav$ on quantum states $\ket{\phi_i}$ of its choice and obtain $\Qgrav\ket{\phi_i}$.
We consider the computational complexity of $\A$ and its communication complexity -- how many qubits are exchanged between $\calR$ and $\A$ for $\A$'s queries.
$\A$'s task is to ``learn'' $\Qgrav$ by producing a hypothesis $\SCgrav\in\C^{\Hd\times\Hd}$.
Thus we measure $\A$'s success probability by asking it to predict $\Qgrav\ket{\BH}$ where $\ket{\BH}$ is a Haar random state.
We give the following intuition which is more precise in the setting of learning classical Boolean functions but still gives some guidance here. Since the state $\ket{\BH}$ is random, to succeed, $\A$'s hypothesis $\SCgrav$ should be close to $\Qgrav$ on most of the Hilbert space of states (with respect to the Haar measure). Since $\Qgrav$ consists of exponentially many unknown entries, to learn a good approximation, $\A$ should need to query $\Qgrav$ on a set of states that spans almost the complete Hilbert space which would require exponentially many queries.

$\A$ now has quantum oracle access to $\Qgrav$ which allows it to query for any state $\ket{\phi_i}$ and obtain $\Qgrav\ket{\phi_i}$. Physically this corresponds to time evolving the chosen state $\ket{\phi_i}$ under the dynamics $\Qgrav$ and allows $\A$ to do arbitrary measurements and post processing on the evolved system. This is our model for querying and learning $\Qgrav$. $\A$ is asked to produce a circuit implementation of a linear operator $\SCgrav$. We define its success probability as the average fidelity $\left|\bra{\BH}\SCgrav^\dagger\Qgrav\ket{\BH}\right|^2$ between its prediction $\SCgrav\ket{\BH}$ and the state $\Qgrav\ket{\BH}$ for a chaotic system $\ket{\BH}$ modeled as a Haar random state. We allow $\SCgrav$ to be any matrix where the square of its column norms is bounded above by 1, allowing it to be highly nonunitary. Formally, let $\mathcal{H}$ be a Hilbert space of dimension $\Hd=2^\Hn$ where $\Hn$, the number of qubits, equals the entropy $S$ of $\ket{\psi}$. For any $\SCgrav$ whose column norms are bounded above by $\sqrt{\alpha}\leq 1$, the definition of failure is as follows:
\be\label{eqn:techover-quantum-pr-general}
\xi:=\Pr_{\substack{\Qgrav\gets\Haar\\ \ket{\BH}\gets\Haar}}\[\left|\bra{\BH}\SCgrav^\dagger\Qgrav\ket{\BH}\right|^2\]\leq \frac{1+\alpha}{2}-\frac{1}{\poly(\log\Hd)}
\ee 
where $\Haar$ is the Haar measure and the probability is over random $\Qgrav$ and $\ket{\BH}$. When the norm of the columns of $\SCgrav$ are bounded above by $1+o(\frac{1}{\log d})$, this reduces to the bound: 
\be\label{eqn:techover-quantum-pr}
\xi:=\Pr_{\substack{\Qgrav\gets\Haar\\ \ket{\BH}\gets\Haar}}\[\left|\bra{\BH}\SCgrav^\dagger\Qgrav\ket{\BH}\right|^2\]\leq 1-\frac{1}{\poly(\log\Hd)}.
\ee

\medskip\noindent
We now proceed to describe our proof, which will ultimately give a \textit{strictly stronger} $1-\Omega(1)$ result than the necessary bound above. We focus on the proof of the bound, Eq.~\ref{eqn:techover-quantum-pr-general} above, and then discuss our stronger $1-\Omega(1)$ bound. 
The guide for this proof will be to use a communication complexity bound on $\A$'s queries to $\Qgrav$, shown in~\cite{AruGri21} for such protocols, to formally bound the (mutual) information $\A$ can learn about $\Qgrav$. To upper bound the average fidelity, we will want to relate it to the mutual information in order to use the communication complexity bound. To do so, we first show the average fidelity can be related to the entropy remaining in $\Qgrav$ that $\A$ does not learn, i.e. the conditional entropy $h(\Qgrav|\SCgrav)$.

\paragraph{Averaging over a random quantum state.} First we relate the average fidelity to the elements of $\SCgrav$ and $\Qgrav$ as matrices. \footnote{In the proof for classical functions, such a relation, Equation~\ref{eqn:techover-classical-avg-inputs}, holds trivially by the independence of the inputs. However for quantum operators, the inputs are ``dependent'': on input a random state $\ket{\BH}$, the output $\Qgrav\ket{\BH}$ depends on all the columns of the operator.}

We fix $\SCgrav$ and $\Qgrav$ and take the average over a Haar random state $\ket{\BH}$. By calculation and using a Haar integral result from~\cite{Har13},

\begin{align*}
\Eover{\ket{\BH}\gets\Haar}\[\left|\bra{\BH}\SCgrav^\dagger \Qgrav\ket{\BH}\right|^2\]
\approx\frac{1}{\Hd(\Hd+1)}\Tr\[(\SCgrav^\dagger \Qgrav\otimes (\SCgrav^\dagger \Qgrav)^\dagger)\].
\end{align*}
where $\Eover{A \gets B}$ denotes the average over $A$ sampled from $B$. We use an approximation here to simplify the calculations in this overview (so the equations in what follows will be approximations).
By calculation, the trace here equals the squared inner product of $\SCgrav$ and $\Qgrav$ as matrices. Using this and taking the expectation over $\Qgrav$ and $\SCgrav$ from our context,
\be\label{eqn:techover-quantum-exp-F-expr}
\Eover{\substack{\Qgrav\gets\Haar\\ \SCgrav\gets\A^\Qgrav\\ \ket{\BH}\gets\Haar}}\[\left|\bra{\BH}\SCgrav^\dagger \Qgrav\ket{\BH}\right|^2\]\leq \frac{1}{\Hd}\Eover{\substack{\Qgrav\gets\Haar\\ \SCgrav\gets\A^\Qgrav}}\left|\sum_{i,j\in[\Hd]} \SCgrav^*_{ij}\Qgrav_{ij}\right|.
\ee

\paragraph{Translating from inner product to conditional entropy.}
Now we seek to translate from this algebraic expression to information theoretic quantities in order to use the communication complexity bound on mutual information.
We will use a result about estimators\footnote{In~\cite{AruGri21}, Fano's inequality is used but it only holds for discrete random variables. Here we find an analogous inequality for continuous variables (note that this is not identical to the `quantum Fano' inequality).}, which are used in statistics to calculate an estimate of a quantity $X\in\R$ using observed data $Y$.
Estimators in statistics have a similar function to decoders in coding and information theory. The `estimator error bound' states that the mean squared error of any estimator is lower bounded by the conditional \emph{differential} entropy\footnote{The differential entropy $h$ is an analog of Shannon entropy for continuous random variables. It does not have some of the properties of Shannon entropy including nonnegativity which is why $e^h$ often appears instead of $h$.} in $X$ given $Y$, $h(X|Y)$. For us, the quantity to estimate or decode will be (the norm of) $\Qgrav_{ij}$ where $\Qgrav\gets\Haar$ and the observed data will be the matrix $\SCgrav$ from the learner $\A$. We will consider the estimator that simply outputs (the norm of) $\SCgrav_{ij}$. The estimator error bound gives us
$$\E\[\(|\Qgrav_{ij}|-|\SCgrav_{ij}|\)^2\]\geq \frac{1}{2\pi e}e^{2h(|\Qgrav_{ij}||\SCgrav)}$$
which is equivalent to the following upper bound on $|\SCgrav_{ij}||\Qgrav_{ij}|$,
$$\frac{1}{2}\E\[|\Qgrav_{ij}|^2\]+\frac{1}{2}\E\[|\SCgrav_{ij}|^2\]-\frac{1}{4\pi e}e^{2h(|\Qgrav_{ij}||\SCgrav)}\geq \E\[|\SCgrav_{ij}||\Qgrav_{ij}|\].$$
Since the column norms of $\SCgrav$ and $\Qgrav$ are bounded, we can sum this inequality for every $i,j\in[\Hd]$ and obtain an upper bound on the inner product between $\SCgrav$ and $\Qgrav$ and then use Jensen's inequality. This translates Equation~\ref{eqn:techover-quantum-exp-F-expr} into an information theoretic bound,
\begin{equation}\label{eqn:techover-quantum-EF-bound-single-e^h}
\Eover{\substack{\Qgrav\gets\Haar\\ \SCgrav\gets\A^\Qgrav\\ \ket{\BH}\gets\Haar}}\[\left|\bra{\BH}\SCgrav^\dagger \Qgrav\ket{\BH}\right|^2\]\leq \frac{1+\SCcol}{2}-\frac{\Hd}{4\pi e}\cdot e^{\frac{2}{\Hd^2}\sum\limits_{i,j\in[\Hd]} h\(|\Qgrav_{ij}||\SCgrav\)}.
\end{equation}

Here $\sqrt{\SCcol}$ is the maximum norm of $\SCgrav$'s columns (for simplicity in reading this overview, the reader may assume $\SCgrav$ is unitary and $\SCcol=1$).
Unlike Shannon entropies, differential entropies can be negative (and conditioning will make them more negative)\footnote{Differential entropies are also not invariant under scaling. If the $\Hd$ in the numerator is puzzling: a rough approximation for $h(|\Qgrav_{ij}|)$ is $-\frac{\log\Hd}{2}$.}. To bound the average fidelity below $1$, we need a lower bound on how negative the sum of entropies is. This corresponds with what we intuitvely want to show: the entropy of $\Qgrav$, modeling the quantum gravity time evolution operator, is still significant even
given any $\SCgrav$ from any bounded quantum algorithm.

\paragraph{Lower bounding the entropy of $\Qgrav$.}
For now, let us forgo considering how taking the absolute value of $\Qgrav_{ij}$ and how conditioning by $\SCgrav$ affects the entropies $h\(|\Qgrav_{ij}||\SCgrav\)$ and try to lower bound $\sum\limits_{i,j\in[\Hd]} h(\Qgrav_{ij})$. This is at least the differential entropy of a Haar random unitary $h(\Qgrav)$.
We note that for a random classical function $f:\zo^\Hn\to\zo$, it is simple to calculate the analogous entropy quantity $H(f)=2^\Hn$ where $H$ is the Shannon entropy for discrete variables. For a random quantum unitary, $h(\Qgrav)$ seems like a reasonable quantity to ask about, but we do not know of any lower bounds on $h(\Qgrav)$ nor the explicit probability density function of a Haar random unitary to calculate its entropy.

We can, however, think about how to generate a random unitary to get a sense of its distribution. Generating a matrix of $\Hd^2$ random (complex) Gaussians and orthogonalizing (and normalizing) its columns using the Gram-Schmidt procedure produces a Haar random unitary. Intuitively, this orthogonalization should not remove many degrees of freedom so we would expect the entropy of a random unitary to be close to the entropy of a matrix of Gaussians.
Fortunately we can calculate the entropy of the latter: since the entries are \emph{independent} Gaussian variables, the entropy of the matrix is simply the sum of the entropies of its elements.

\paragraph{Overcoming dependence of entries using a random matrix approximation.}
This leads us to consider if we can use the matrix of Gaussians to somehow lower bound $h(\Qgrav$). We use the result of~\cite{Jia10} that many entries of a random unitary can be well approximated by complex Gaussian variables. Formally,~\cite{Jia10} shows that for a $\Hd\times\Hd$ matrix $\CN=\(\CN_{ij}\)$ of complex Gaussians with variance $\frac{1}{2\Hd}$, and the unitary $U=\(U_{ij}\)$ produced by orthonormalizing $\CN$, the difference between $\Qgrav_{ij}$ and $\CN_{ij}$ is small for $\md(\Hd)=\Theta\(\frac{\Hd}{\log\Hd}\)$ columns, 
$$\Pr\[\max_{i\in[\Hd],j\in[\md(\Hd)]}\left|\Qgrav_{ij}-\CN_{ij}\right|\leq O\(\frac{1}{\sqrt{\Hd}}\)\]\geq 1-O\(\frac{1}{\Hd}\).$$
In this overview, we will make the convenient simplifying assumption that \emph{with probability 1}, these entries $\Qgrav_{ij}$ are \emph{equal} to $\CN_{ij}$.
Our goal in this overview is to present a simplified outline of the proof so we will distill away many of the technical steps and calculations. We refer the reader to Section~\ref{sec:fidelity-upperbd-random} for the full proof.
We return to rederive Equation~\ref{eqn:techover-quantum-EF-bound-single-e^h}. 

Throwing away the entropies of the entries we cannot approximate (by using $e^{h}\geq 0$) and replacing the remaining $\Qgrav_{ij}$ by $\CN_{ij}$,
\begin{align}
\Eover{\substack{\Qgrav\gets\Haar\\ \SCgrav\gets\A^\Qgrav\\ \ket{\BH}\gets\Haar}}\[\left|\bra{\BH}\SCgrav^\dagger \Qgrav\ket{\BH}\right|^2\]
&\leq \frac{1+\SCcol}{2}-\md(\Hd)\cdot\frac{1}{4\pi e}\cdot e^{\frac{2}{\Hd\cdot\md(\Hd)}h\(\{|\CN_{ij}|\}_{i\in[\Hd],j\in[\md(\Hd)]}\middle|\SCgrav\)}.\label{eqn:techover-quantum-E-F-approx-e^h}
\end{align}
We now arrive at lower bounding $h\(\{|\CN_{ij}|\}\middle|\SCgrav\)$ but  this time, better equipped. By the independence of $\CN_{ij}$, we have $h\(\{|\CN_{ij}|\}_{i\in[\Hd],j\in[\md(\Hd)]}\)=\Hd\cdot\md(\Hd)\cdot h_\CN$ where $h_\CN$ is the entropy of a single complex Gaussian (after taking its norm). All that remains is to consider how conditioning by $\SCgrav$, the model that $\A$ learns (e.g. semiclassical gravity), can affect the entropy.

\paragraph{Entropy is retained after bounded queries (bounded $\QCC$).}
Recall the intuition is if $\A$ makes a bounded number of queries to $\Qgrav$ as it produces $\SCgrav$, then most of the entropy in $\Qgrav$ should remain. The entropy loss from knowing $\SCgrav$ is the mutual information between $\{|\CN_{ij}|\}_{i\in[\Hd],j\in[\md(\Hd)]}$ and $\SCgrav$, which we show is at most the mutual information $I(\CN;\SCgrav)$ between the matrices $\CN$ and $\SCgrav$, by using a few information theory lemmas.
\cite{AruGri21} shows that a corollary of~\cite{Tou15,KerLau16} is that the number of qubits communicated for $\A$'s queries (the quantum communication complexity\footnote{If $\A$ makes $\ell$ queries, then $\QCC=\ell\cdot 2\Hn=\ell\cdot 2\log\dim{\cal H}$.} $\QCC$) upper bounds $I(\CN;\SCgrav)$. Together, we have our desired lower bound
\begin{align*}
h\(\{|\CN_{ij}|\}_{i\in[\Hd],j\in[\md(\Hd)]}\,\middle|\,\SCgrav\)&\geq h\(\{|\CN_{ij}|\}_{i\in[\Hd],j\in[\md(\Hd)]}\)-\QCC\\
&=\Hd\cdot\md(\Hd)\cdot h_\CN -\QCC
\end{align*}
where, by explicit calculation, the differential entropy of a complex Gaussian variable is $h_\CN=1-\log 2+\frac{\gamma_E}{2}-\frac{\log\Hd}{2}$ where $\gamma_E$ is Euler's constant. Using this lower bound in Equation~\ref{eqn:techover-quantum-E-F-approx-e^h} and that we can approximate $\md(\Hd)=\Theta\(\frac{\Hd}{\log\Hd}\)$ columns,
\begin{align}
\Eover{\substack{\Qgrav\gets\Haar\\ \SCgrav\gets\A^\Qgrav\\ \ket{\BH}\gets\Haar}}\[\left|\bra{\BH}\SCgrav^\dagger \Qgrav\ket{\BH}\right|^2\]
&\leq \frac{1+\SCcol}{2}-\Theta\(\frac{1}{\log\Hd}\)\cdot\frac{e^{1+\gamma_E}}{16\pi}\cdot e^{\frac{-2\cdot\QCC}{\Hd\cdot\md(\Hd)}}.
\end{align}
For $\QCC$ at most $o\(\frac{\Hd^2}{\log\Hd}\)$ and $\SCcol=1+o\(\frac{1}{\log\Hd}\)$, the average fidelity (left side) is upper bounded by $1-\Omega\(\frac{1}{\log\Hd}\)$.
This establishes an initial bound in the fidelity for random unitaries which (already) suffices to show the hardness of learning random unitaries. Here it was crucial that the number of columns $\md(\Hd)$ which can be approximated by complex Gaussians is at least $\frac{\Hd}{\poly\log\Hd}$. As advertised in Section~\ref{sec:intro}, we actually obtain a stronger bound. Let us briefly describe the modus operandi behind the improved bound. 

\paragraph{For a fidelity bound of $1-\Omega(1)$.}
We arrived at Equation~\ref{eqn:techover-quantum-E-F-approx-e^h} by discarding the entropy contributions of the entries of $\Qgrav$ that we cannot approximate by complex Gaussians. While it is true that for a large subset of $\Qgrav$'s columns, the column vectors are not independent, we can show that \emph{any subset} of $\md(\Hd)=\Theta\(\frac{\Hd}{\log\Hd}\)$ columns are in fact independent vectors of complex Gaussians. This uses the  translation invariance of the Haar measure. Armed with the freedom to choose any set of $\md(\Hd)$ vectors, we partition $\Qgrav$ into $\lfloor\frac{\Hd}{\md(\Hd)}\rfloor$ sets of $\md(\Hd)$ columns each. We can then approximate the \emph{marginal} distribution of each subset of columns by complex Gaussians and show that the algorithm $\A$ cannot learn the columns in a \emph{single} subset well. The degree to which $\A$ can learn the entries of $\Qgrav$ is linear in the amounts that it can learn about each subset, so we can combine the hardness of learning these $\lfloor\frac{\Hd}{\md(\Hd)}\rfloor$ subsets into the hardness of learning all of $\Qgrav$. By utilizing the entropy in (nearly) all of the entries in $\Qgrav$ via marginals, we obtain a $1-\Omega(1)$ bound as stated in Theorem~\ref{thm:intro-fidelity-random}.

\paragraph{A fidelity bound for general unitaries (and states).} This proof also gives an upper bound on the average fidelity for any distribution of $\Qgrav$ that has sufficiently high (i.e. less negative) differential entropy, not strictly Haar random $\Qgrav$. (Recall the differential entropy ranges over $\mathbb{R}$, both positive and negative.) The proof of this bound for general $\Qgrav$ with sufficiently high differential entropy is the same through Equation~\ref{eqn:techover-quantum-EF-bound-single-e^h} which we can then combine with $\QCC\geq I(\Qgrav;\SCgrav)$ to obtain

\begin{equation}
\Eover{\substack{\Qgrav\gets\Haar\\ \SCgrav\gets\A^\Qgrav\\ \ket{\BH}\gets\Haar}}\[\left|\bra{\BH}\SCgrav^\dagger \Qgrav\ket{\BH}\right|^2\]\leq \frac{1+\SCcol}{2}-\frac{\Hd}{4\pi e}\cdot e^{\frac{2}{\Hd^2}\(\,h\(|\Qgrav|\)-\QCC\,\)}.
\end{equation}
where $|\Qgrav|=\(|\Qgrav_{ij}|\)_{i,j\in[\Hd]}$.
This bound depends on the differential entropy of $|\Qgrav|$ and the query (communication) complexity of $\A$. The distribution of $\Qgrav$ needs to have sufficiently high differential entropy (lower bounded), and the communication complexity needs to be upper bounded in order to lower bound 
$h\(|\Qgrav|\)-\QCC$ and obtain a fidelity bound less than $1$. The Haar random state $\ket{\BH}$ can also be relaxed to a $2$-design since the derivation of Equation~\ref{eqn:techover-quantum-exp-F-expr} only depends on the degree-$2$ Haar integral of $\(\ket{\BH}\bra{\BH}\)$.

\paragraph{From random to pseudorandom.} 
Lastly we describe how we show Theorem~\ref{thm:intro-fidelity-pseudorandom}, the hardness of learning pseudorandom unitaries (PRU) applied to a pseudorandom state (PRS).
We mention that although most of the literature, until~\cite{BouFef19} and~\cite{KimTan20}, uses quantum $k$-designs  as models of quantum gravity and black holes, PRU and PRS are also suitable models which are specifically defined to be indistinguishable from Haar random to any algorithm bounded in computational complexity. This may be fitting for modeling the proposed regime of validity of semiclassical gravity with a complexity cutoff.

Our hardness result for random unitaries and states holds for any quantum algorithm $\A$ that makes a subexponential number of queries to $\Qgrav$. We now consider $\A$ with polynomial computational complexity (runtime). We will use the indistinguishability property of PRU and PRS from their Haar random counterparts to show Theorem~\ref{thm:intro-fidelity-pseudorandom}. We give a high level description of the argument here since it is standard within cryptography.

We first show the hardness of learning a PRU applied to a Haar random state, a `hybrid' distribution, and then replace the random state by a PRS to show the hardness of learning a PRU applied to a PRS. If we replace the Haar random unitary (RU) by a PRU, then any such $\A$ should not be able to tell the difference. If some $\A$ manages to learn better for a PRU (with non-negligible improvement), then we can use $\A$ to construct a distinguishing algorithm $\Dist$ that breaks and contradicts the indistinguishability of the PRU from a Haar random unitary. The $\Dist$ runs $\A$ and simulates the oracle for $\A$, i.e. whenever $\A$ queries, it uses the RU or PRU $\widetilde{\Qgrav}$ to compute the responses to $\A$. Then at the end it gives $\A$ a state $\ket{\BH}$ and receives $\SCgrav\ket{\BH}$. $\Dist$ also computes $\widetilde{\Qgrav}\ket{\BH}$. It performs the SWAP test between these two states which passes with a probability corresponding to the fidelity between the two states. If $\A$ learns better for a PRU, the $\Dist$'s SWAP test will pass with higher probability when $\widetilde{\Qgrav}$ is a PRU than when it is a RU. If the PRU ensemble is indeed pseudorandom, then no such $\A$ can exist. We can now replace the random state by a PRS and use a similar proof to show $\A$ again cannot learn any better. This is the outline of how we prove Theorem~\ref{thm:intro-fidelity-pseudorandom}.

\paragraph{A fidelity bound implies distinguishing operators.}
While a natural choice, the fidelity also has physical significance: if two states $\rho$ and $\sigma$ have fidelity $F$, then there exists a positive operator-valued measure (POVM) with measurement probabilities $(p_{1},p_{2},\ldots,p_{\ell})$ and $(q_{1},q_{2},\ldots,q_{\ell})$ for $\rho$ and $\sigma$ respectively such that $\left(\sum_{k\in[\ell]}{\sqrt{p_{k}q_{k}}}\right)^{2}=F$~\cite{NieChu11}. In other words, states with fidelity significantly less than $1$ have a quantum measurement that distinguishes them with some significant probability(see Appendix~\ref{sec:appendix-POVM} for an explanation of this relationship). By showing Equation~\ref{eqn:techover-quantum-pr} (and furthermore by showing a $1-\Omega(1)$ bound), we show that on average there exists a POVM that distinguishes semiclassical gravity's prediction $\SCgrav\ket{\BH}$ from quantum gravity's evolution $\Qgrav\ket{\BH}$ with a significant probability. This connects with the apparent inconsistency between semiclassical gravity's prediction for evaporating black holes and quantum unitarity, showing that if the algorithm computing the effective description is of bounded complexity, then some POVM will distinguish its predictions from those of the fundamental description.

\paragraph{For algorithms predicting with quantum channels.} In Theorem~\ref{thm:intro-fidelity-general-learner} we consider algorithms $\A$ that can produce a quantum channel $\qop$ and consider the average fidelity between its prediction (which may now be a mixed state) and $\Qgrav\ket{\BH}$. To show our upper bound on the average fidelity, by Kraus's theorem, we have an operator-sum representation of $\qop$ as a set of operators $\{E_k\}$ where $\qop(\rho)=\sum_{k} E_k\rho E^\dagger_k$. We show that the average fidelity between $\qop(\ket{\BH}\bra{\BH})$ and $\Qgrav\ket{\BH}\bra{\BH}\Qgrav^\dagger$ is a sum of the `fidelity' of each operator $E_k$ with $\Qgrav$. We apply the techniques and tools we used to show the hardness of learning when $\A$ produces a single operator $\SCgrav$ to each of the operators $E_k$ here, and use these bounds together to show Theorem~\ref{thm:intro-fidelity-general-learner} (see Section~\ref{sec:general-learner} for the full result).

\medskip
We conclude this section by giving a walk-through of the proof of the quantum hardness of learning classical functions in~\cite{AruGri21} since the classical case is simpler but still contains a blueprint and some of the tools that we use in our proof. We modify the presentation of this proof to draw analogies with our proof. See~\cite{AruGri21} for their motivation and context, and the original proofs.

\paragraph{The model for learning classical functions.}
Consider two quantum algorithms $\calR$ and $\A$ where $\calR$ is given a uniformly random Boolean function $f:\zo^\Hn\to\zo$ (unknown to $\A$) and $\calR$ and $\A$ interact sending quantum information (qubits) back and forth. Specifically, this allows $\A$ to query $\calR$ on quantum states of its choice and obtain $f$ applied in superposition on an (entangled) register\footnote{More formally, if $\A$ queries $\calR$ on $\sum_{x\in\zo^\Hn, b\in\zo}\alpha_{x,b}\ket{x}\ket{b}$ then it obtains $\sum_{x\in\zo^\Hn, b\in\zo}\alpha_{x,b}\ket{x}\ket{b\oplus f(x)}$.} (i.e. $\A$ has quantum oracle access to $f$).
We consider the computational complexity of $\A$ and its communication complexity -- how many qubits are exchanged between $\calR$ and $\A$.
$\A$'s task is to `learn' $f$ by producing a hypothesis $g:\zo^\Hn\to\zo$ such that $g(x)=f(x)$ for most $x\in\zo^\Hn$. Formally, we define $\A$ to successfully learn\footnote{We give a strong definition of successful learning to be analogous to Equation~\ref{eqn:intro-def-success}. In~\cite{AruGri21} and other works in learning theory, success is often defined as performing better than a random guess: $\Pr[g(x)=f(x)]\geq \frac{1}{2}+\frac{1}{\poly(\Hn)}$.} $f$ if the success probability
$$\xi:=\Pr_{\substack{f\gets\zo^{\Hd}\\ x\gets\zo^\Hn}}\[g(x)=f(x)\]\geq 1-\frac{1}{\exp(\Hn)}=1-\frac{1}{\exp(\log\Hd)}$$
where the probability is over the randomness of $f$ and $x$.
The functions $f$ and $g$ can be represented by their truth tables, vectors in $\zo^{\Hd}$ where $\Hd=2^\Hn$ (for $x\in\zo^\Hn$, the $x$'th element is $f_x=f(x)$).
Intuitively, since $\A$ does not know what $x$ it will be asked to predict for, to succeed, it should produce $g$ that agrees with $f$ on almost all inputs. Since $f$ consists of exponentially many bits of randomness, to learn most of $f$, $\A$ should need to query $\calR$ on almost every input $x\in\zo^\Hn$ requiring exponentially many qubits exchanged. (We omit directly addressing why the ability to query in superposition does not reduce the query complexity here.) Formally we would like to show that for any $\A$ making a subexponential number of queries,
$$\xi:=\Pr_{\substack{f\gets\zo^\Hd\\ x\gets\zo^\Hn}}\[g(x)=f(x)\]\leq 1-\frac{1}{\poly(\Hn)}=1-\frac{1}{\poly(\log\Hd)}.$$

\paragraph{Hardness of learning classical functions \cite{AruGri21}.}
To prove this,~\cite{AruGri21} use information theory (this accommodates the queries being quantum). By the above intuition, the correlations between $f$ and $g$ should be bounded because the information communicated between $\calR$ and $\A$ is bounded. Indeed~\cite{AruGri21} show that a corollary of \cite{Tou15,KerLau16} is that the number of qubits communicated between $\calR$ and $\A$ (the quantum communication complexity $\QCC$) upper bounds the mutual information $I(g;f)$ between $g$ and $f$, i.e. $\QCC\geq I(g;f)$. To use this to upper bound $\xi$, we need to interpret between $g$ and $f$'s mutual information and the probability of $g_x=f_x$. Fano's inequality is exactly a tool for doing so. By Fano's inequality, for \emph{discrete} random variables $f_x\in\zo$ and correlated random variable $g\in\zo^\Hn$, the conditional entropy
\be\label{eqn:techover-classical-Fano}
H(f_x|g)\leq H_{b}(\Pr[g_x\neq f_x])=H_{b}(\Pr[g_x= f_x])
\ee
where $H$ is the Shannon entropy and the function $H_b(p)=-p\log_{2}p-(1-p)\log_{2}(1-p)$ (note $H_b$ has maxima at $p=\frac{1}{2}$ and decreases to $H_b(1)=0$)\footnote{$H_b$ is the binary entropy function; $H_b(p)=H(X)$ where $X$ is a Bernoulli variable that is $0$ with probability $p$.}.
Next we translate between conditional entropy and mutual information: first, since $f_x$ are independent bits, $H(f|g)=\sum\limits_{x\in\zo^\Hn} H(f_x|g)$.
Next, $I(g;f)=H(f)-H(f|g)$ which equals $2^\Hn-H(f|g)$ since $f$ is a uniformly random vector in $\zo^{2^\Hn}$. Combining these steps,
$$I(g;f)=2^\Hn-\sum\limits_{x\in\zo^\Hn} H(f_x|g)\geq 2^\Hn-\sum\limits_{x\in\zo^\Hn} H_b(\Pr[g_x=f_x]).$$
If $\QCC$ is subexponential then $\sum\limits_{x\in\zo^\Hn} H_b(\Pr[g_x=f_x])\geq \Omega(2^\Hn)$.
Since $H_b(1)=0$, this shows that on average $\Pr[g_x=f_x]$ is bounded away from $1$ so the success probability
\be\label{eqn:techover-classical-avg-inputs}
\Pr_{\substack{f\gets\zo^\Hd\\ x\gets\zo^\Hn}}\[g(x)=f(x)\]=\frac{1}{2^\Hn}\sum\limits_{x\in\zo^\Hn} \Pr\[g_x=f_x\]
\ee
is bounded away from $1$. Hence if $\A$ makes subexponentially many queries to $f$, it cannot learn $f$. This concludes our review of the proof in \cite{AruGri21}.

\section{Preliminaries}\label{sec:prelim}
In this section, we cover some of the background used in this work, which may not be familiar to many of our readers given the interdisciplinary nature of our results. Additional background may be found in Appendix~\ref{sec:appendix-info-theory}.

\paragraph{Notation.}
We begin by defining the basic notation used throughout this work (other notation is defined as we use it). The set of integers from $1$ to $k$ is denoted by $[k]$ and the set of integers from $k$ to $\ell$ is denoted by $[k,\ell]$. We use $\log$ to denote the natural logarithm, unless we are taking the logarithm of $\dim\cal{H}$ for a Hilbert space $\cal{H}$ in which case $\log\dim\cal{H}$ refers to $\log_2\dim\cal{H}$. The expected value (or average) of a random variable $X$ is the Lesbegue integral of $X$ with respect to its probability measure and is denoted by $\Eover{}[X]$ (see Definition~\ref{defn:expected-value} for an explicit definition). When the probability distribution of $X$ depends on the joint distribution of variables $(Y_1,\ldots,Y_i)$, we may write $\Eover{Y_1,\ldots,Y_i}[X]$. Similarly, ${\displaystyle \Pr_{Y_1,\ldots,Y_i}}$ denotes taking the probability over the joint distribution of variables $(Y_1,\ldots,Y_i)$.

We model systems by quantum states of $\Hn:=S$ qubits where $\Hd=2^\Hn$ is the dimension of the corresponding Hilbert space ${\cal H}$.
We use $F(\rho,\sigma)$ to denote the fidelity between quantum (mixed) states (density operators) $\rho$ and $\sigma$:
${\displaystyle F(\rho ,\sigma )=\left(\Tr {\sqrt {{\sqrt {\rho }}\sigma {\sqrt {\rho }}}}\right)^{2}}$. When $\rho$ and $\sigma$ represent pure states $\ket{\psi}$ and $\ket{\phi}$, this reduces to their squared inner product: $F(\ket{\psi},\ket{\phi})=|\braket{\psi}{\phi}|^2$. The Haar measure on the $\Hd$ dimensional unitary group is denoted by $\Haar_\Hd$. We will also use $\Haar_\Hd$ to denote the distribution of $\Hn=\log_2\Hd$ qubit Haar random states in ${\cal H}$. We sometimes use $\Haar$ instead of $\Haar_\Hd$ as the dimension $\Hd$ is often clear from context.
The identity operator is denoted by $I$.

For a distribution $\cD$, we write $X\gets\cD$ to denote $X$ is sampled from $\cD$ and $X\sim\cD$ to denote $X$ is distributed according to $\cD$. For an algorithm $\A$, we will write $X\gets\A$ to denote $X$ is output by $\A$.
We use $X\sim\A^\Qgrav$ to refer to random variables $X$ that $\A^\Qgrav$ produces after the query and learn phase. Wee write $\A_n$ to denote the algorithm $\A$ on inputs of length $n$ (or on states of $n$ qubits if the input is quantum). We use the analogous notation for other algorithms e.g. for the algorithm $\Key$ and ``security parameter'' $\lambda\in\mathbb{N}$ in Section~\ref{sec:fidelity-upperbd-pseudorandom} we use $\Key_\lambda$.

In addition to $o$ and $O$, we will use the asymptotic notations of $\Theta$ and $\Omega$. For our purposes, a function $f(x)=\Theta(g(x))$ if there exists constants $\eta_1,\eta_2 >0$ such that asymptotically (i.e. in the limit of large $x$), $\eta_1 g(x)\leq f(x)\leq \eta_2 g(x)$. A function $f(x)=\Omega(g(x))$ if there exists a constant $\eta >0$ such that asymptotically, $f(x) \geq \eta g(x)$. 

We work with (absolutely) continuous random variables, i.e. random variables over $\R$ with a probability density function (PDF). Random variables is denoted by uppercase letters, e.g. $X$, their support by the corresponding calligraphic letter, $\mathcal{X}\subseteq \R$, and their PDF by $p_X(x):\mathcal{X}\to\R_{\geq 0}$.
When we have a set of jointly distributed random variables, e.g. $X,Y,Z$, we denote their joint PDF by $p_{X,Y,Z}(x,y,z)$. Marginal and conditional PDFs are denoted as usual: the marginal PDF of $X,Y$ is $p_{X,Y}(x,y):=\int_{\mathcal{Z}} p_{X,Y,Z}(x,y,z)\;dz$, the conditional PDF of $X,Y$ conditioned on $Z$ is $p_{X,Y|Z}(x,y|z):=\frac{p_{X,Y,Z}(x,y,z)}{p_{Z}(z)}$, and similarly for others.

\paragraph{Complex Gaussian variables.}
To show the hardness of learning a Haar random unitary, we use the result in \cite{Jia10} (see Section~\ref{sec:prelim-approx-unitary}) to approximate a random unitary matrix by a matrix of complex Gaussian variables. A \emph{complex Gaussian random variable} is a random variable over $\C$ with real and imaginary parts that are jointly normal. We let $\Ndistr(\mu,\sigma^2)$ denote the (real) Gaussian distribution with mean $\mu$ and variance $\sigma^2$. If $\CN_1,\CN_2\sim\Ndistr(\mu,\sigma^2)$, then $\CN:=\CN_1+i\CN_2\in\C$ is a complex Gaussian with distribution denoted by $\CNdistr(\mu,2\sigma^2)$.
We note that a complex-valued random variable is equivalent to a vector-valued random variable in $\R^2$. Thus we will use traditional results for continuous random variables over $\R$ for complex Gaussians.

\subsection{Learning Models and Communication Complexity}\label{sec:prelim-comp-models}

In this section we define the learning model and state the communication complexity result that we use in this work. The learning model builds upon the probably approximately correct (PAC) model introduced in~\cite{Val84}. We briefly describe the PAC model and mainly focus on describing our learning model which is a variant of the PAC model for \emph{learning quantum operators}. We refer to \cite{AruGri21} and \cite{AruWol17} for background in quantum learning theory (mainly for classical functions) and to~\cite{Ans22} for a short review on learning quantum states and operators. (For context, recall that in the physics setup of interest, $\A$ stands for a computationally-bounded reconstruction algorithm, $U$ for the chaotic quantum gravity dynamics, and the learning process is a toy model for how we as semiclassical observers might learn about quantum gravity dynamics under ignorance of high complexity information and processing.)

\paragraph{Probably approximately correct (PAC) model of learning~\cite{Val84}.} A class of Boolean functions $\mathcal{C}$ with respect to a distribution of inputs $\mathcal{D}$ is said to be $(\epsilon,\delta)$-PAC-learnable if there is an algorithm $\A$ such that for any $f\in\mathcal{C}$, $\A$ with access to an oracle that gives it random labeled examples $(x,f(x))$ for $x\gets\mathcal{D}$, with probability $\geq 1-\delta$, outputs a `hypothesis' Boolean function $h$ that agrees with $f$ with probability $\geq 1-\epsilon$ for a random $x\gets\mathcal{D}$.

\paragraph{A model for learning quantum operators.}
We generalize the PAC model to \emph{learning quantum operators} by allowing $\mathcal{C}$ to consist of unitary operators, $\cD$ to be a distribution of quantum states, and the learner to be a quantum algorithm. Then for a unitary operator $U\in\mathcal{C}$, $\A$ is given examples $(\ket{\psi},U\ket{\psi})$ where $\ket{\psi}\gets\cD$ and its goal is to output a unitary $\widehat{U}$ such that $\widehat{U}$ and $U$ agree with high probability for a random $\ket{\psi}\gets\cD$. We use the fidelity as the metric for agreement, as stated in~\ref{defn:learning-model}.
Our model builds upon the notion of PAC learning, though notably differs in the following ways: it is adapted for learning quantum operators (instead of Boolean functions). It also allows the learner to make arbitrary quantum queries as we now define:

\medskip
\medskip
We now define quantum queries and our quantum learning model.
\begin{defn}[Quantum Queries]\label{defn:quantum-queries}
Let $U$ be a unitary. A quantum algorithm $\A$ has \emph{quantum query access} to the `oracle' $U$, denoted $\A^U$, if $\A$ can obtain the result of $U$ applied to arbitrary quantum states of its choice that it can prepare. More concretely, $\A$ can be any quantum circuit as follows: $\A$ performs its computations, and when we say that it `queries' $U$ on any state $\ket{\phi_1}$ that it has in its registers, we mean that it obtains $U\ket{\phi_1}$ in that register. $\A$ can continue to perform any computations of its choice, can query $U$ on another state $\ket{\phi_2}$ in its registers, will then have $U\ket{\phi_2}$ in that register, and so forth until it reaches the end of its computation\footnote{In a concrete model of computation, there are components corresponding to the end of the computation and the output. In a circuit model, the computation proceeds via applying a sequence of gates to qubits. The computation stops when there are no more gates to apply. We can denote certain qubits as the output register and at the end, take those qubits to be the output (state).}.
\end{defn}

Our model allows the algorithm to make arbitrary quantum queries to the operator it is trying to learn (instead of only being given examples from a fixed input distribution). These queries were defined as `quantum membership queries' in \cite{AruGri21}.

We emphasize that in quantum computational models, a quantum algorithm receives quantum states as $\log\dim\Hil$ physical qubits, not as a vector in $\Hil$. Although naively it might appear that the quantum learning algorithm here can just query the oracle for some states $\ket{\phi_i}$, obtain $U\ket{\phi_i}$, and solve for a viable $\Qgrav$ via linear equations, the algorithm obtains and computes on physical states, not on their algebraic representation. This is essential for quantum algorithms in general to run in time $\poly(\log\dim\Hil)$ and \emph{not} $\poly(\Hd)$.

We note that if one wishes to consider a different form of quantum query,
then Theorem~\ref{thm:intro-fidelity-random} and Theorem~\ref{thm:intro-fidelity-general-learner} still hold as long as for an $\Hn$ qubit query, the oracle returns $O(\Hn)$ qubits. For Theorem~\ref{thm:intro-fidelity-pseudorandom} the PRU needs to remain indistinguishable from Haar random for the chosen form of query. 

\begin{defn}[Quantum Learning Model]\label{defn:learning-model}
Let $\mathcal{U}$ be a distribution of $\Hd\times\Hd$ unitaries and $\Psi$ be a distribution of $\Hd$ dimensional quantum states. A quantum algorithm `learner' $\A$ is said to $(\epsilon,\delta)$-learn $\mathcal{U}$ over $\Psi$ if given quantum query access to the oracle $U\gets\mathcal{U}$, $\A$ produces a hypothesis, a quantum channel $\qop$, such that with probability $\geq 1-\delta$,
$$\Eover{\substack{U\gets\mathcal{U}\\
\ket{\psi}\gets\Psi\\
\qop\gets\A^U}}\[F\(\qop(\ket{\psi}\bra{\psi}),U\ket{\psi}\bra{\psi}U^\dagger\)\]\geq 1-\epsilon$$
where $F\(\sigma,\rho\)$ denotes the fidelity between density operators $\sigma$ and $\rho$.
In the special case where $\qop$ is a single linear operator $\SCgrav$,
since $\SCgrav\ket{\psi}$ and $U\ket{\psi}$ are pure, $F(\SCgrav\ket{\psi},U\ket{\psi})=\left|\bra{\psi}\SCgrav^\dagger U\ket{\psi}\right|^2$ is the squared inner product.~\footnote{Note that $\A$ does not necessarily output its hypothesis $\qop$. It may simply produce (a circuit representation of) a quantum channel $\qop$ and given $\ket{\psi}$, apply $\qop$ and output the resulting quantum state $\qop(\ket{\psi}\bra{\psi})$.}
\end{defn}

In order to make complexity theoretic statements, we consider an ensemble of distributions of unitaries $\mathcal{U}=\{\mathcal{U}_\Hn\}$ where each $\mathcal{U}_\Hn$ is a distribution over $2^\Hn$ dimensional unitaries. 
The algorithm runs in two phases: during the first, `query and learn' phase, the algorithm does not receive any input: it can only do computations on its own registers and produce states with which to query $U$. In particular, the algorithm does \textit{not} receive $\ket{\psi}$ in this phase. In the second, the algorithm loses oracle access and gains access to $\ket{\psi}$ instead; in this `prediction phase', it outputs its best guess for $U\ket{\psi}$. We will assume that $\A$ indeed first produces some quantum channel during `query and learn' and then given $\ket{\psi}$ it simply outputs the application of that quantum channel ${\cal E}$ to $\ket{\psi}$. This is denoted as ${\cal E}\(\ket{\psi}\bra{\psi}\)\gets\A^{U}\(\ket{\psi}\)$.

Next we describe the communication complexity result we use. For any interactive protocol between two parties, we can bound their mutual information by their communication complexity. This is formally stated in the following result by Arunachalam, Grilo, and Sundaram \cite{AruGri21}, building on \cite{Tou15,KerLau16}.
\begin{lem}[Corollary 2.2 in \cite{AruGri21}]\label{cor:I-QCC} 
Let $\cD$ be any distribution of a pair of variables $(X,Y)$. Consider a protocol $\prot$ between two parties $A$ and $B$: at the start, $A$ is given $X$ and $B$ is given $Y$ where $(X,Y)\sim\cD$. Subsequently, $A$ and $B$ exchange quantum information (qubits) back and forth. At the end, $B$ outputs a random variable $Z$. The \emph{quantum communication complexity} of the protocol $\QCC(\prot)$ is the number of qubits communicated between $A$ and $B$. We have
$$I(Z;X|Y)\leq\QCC(\prot).$$
\end{lem}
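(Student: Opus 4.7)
The plan is to define a potential $\chi_t$ that measures Bob's knowledge of $X$ conditioned on $Y$ at each time step $t$ of the protocol, show that $\chi_t$ starts at zero and can only grow by the number of qubits sent from Alice to Bob in each round, and then extract the bound on $I(Z;X|Y)$ via data processing. Concretely, let $B_t$ denote the quantum register held by Bob after the $t$-th action in the protocol (including all ancillas Bob has ever held), and define $\chi_t := I(X; B_t \mid Y)$, evaluated on the joint state produced by running $\prot$ with inputs drawn from $\cD$. Since $B_0$ depends only on $Y$ together with Bob's private ancillas (which are independent of $X$), we have $\chi_0 = 0$.

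Next I would check that $\chi_t$ is monotone under all ``free'' moves and analyze the single non-trivial step. Local unitaries or channels applied by Alice do not touch $B_t$, so $\chi_t$ is unchanged; local channels applied by Bob can only decrease $\chi_t$ by the quantum data-processing inequality for conditional mutual information; and any qubits Bob transmits to Alice correspond to tracing out part of $B_t$, which again only decreases $\chi_t$. The only step that can raise $\chi_t$ is when Alice sends a $k$-qubit message $M$ to Bob: then $B_{t+1} = B_t M$, and the chain rule gives $\chi_{t+1} - \chi_t = I(X; M \mid Y B_t)$. Because $X$ is a classical random variable, the state on $X Y B_t M$ is classical-quantum in $X$, so the conditional Holevo bound yields $I(X; M \mid Y B_t) \leq \log_2 \dim M = k$. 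Summing over all rounds shows $\chi_{\mathrm{final}} \leq $ (qubits sent Alice-to-Bob) $\leq \QCC(\prot)$.

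To finish, the output $Z$ is produced by Bob as a (classical) measurement outcome on $B_{\mathrm{final}}$, possibly followed by classical post-processing, so the data-processing inequality gives $I(Z; X \mid Y) \leq I(B_{\mathrm{final}}; X \mid Y) = \chi_{\mathrm{final}} \leq \QCC(\prot)$, as desired. The main obstacle I anticipate is getting the constant in the Holevo step right: a naive application of the dimension bound $I(M; \cdot) \leq 2 \log_2 \dim M$ for general quantum mutual information would yield only $2 \QCC(\prot)$, and avoiding this factor of two requires crucially that $X$, $Y$, $Z$ are classical random variables, so that $I(X; M \mid Y B_t)$ reduces to a conditional Holevo information obeying the tighter $\log_2 \dim M$ bound. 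The remainder is careful bookkeeping: purifying the protocol so that $(X,Y)$ sit on an external reference register that is never touched, and explicitly identifying when $B_t$ grows via an incoming message versus shrinks via an outgoing one.
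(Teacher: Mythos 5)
There is a genuine gap at the Holevo step---precisely the step you flagged as the main risk---and your proposed fix does not go through. You assert that since $X$ is classical, the state on $XYB_tM$ is classical--quantum in $X$, so a ``conditional Holevo bound'' gives $I(X;M\,|\,YB_t)\leq\log_2\dim M$. This fails because the conditioning system $B_t$ is quantum and may already be entangled with the incoming message $M$, which happens as soon as Bob has sent qubits to Alice in an earlier round. Writing $I(X;M\,|\,YB_t)=S(M\,|\,YB_t)-S(M\,|\,XYB_t)$, the first term is at most $\log_2\dim M$, but the second can be as negative as $-\log_2\dim M$, since conditional on classical values $(x,y)$ the registers $M$ and $B_t$ can share entanglement. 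Classicality of $X$ guarantees $S(X\,|\,\cdot)\geq 0$; it says nothing about $S(M\,|\,\cdot)$ and so does not rescue the bound.

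Concretely, take $Y$ trivial and $X$ a uniformly random $2k$-bit string. Bob first sends Alice $k$ halves of EPR pairs, so $\chi_1=I(X;B_1\,|\,Y)=0$. Alice superdense-codes $X$ onto the received halves and returns the $k$-qubit message $M$; Bob decodes $X$ exactly from $B_1M$. Then $I(X;M\,|\,YB_1)=I(X;B_1M\,|\,Y)=H(X)=2k$, while $\log_2\dim M=k$, violating your per-round claim. Your accounting would conclude $I(Z;X\,|\,Y)\leq$ qubits sent $A\to B$ $=k$, which is false here since $I(Z;X\,|\,Y)=2k$. The lemma itself survives because $\QCC=2k$ counts qubits in both directions, but recovering it requires an entanglement-aware potential that charges the factor of two against the total two-way traffic---the approach taken by the quantum information cost machinery of \cite{Tou15,KerLau16} underlying \cite{AruGri21}. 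Note also that the present paper does not reprove this lemma; it cites Corollary 2.2 of \cite{AruGri21} directly, so there is no internal proof to compare against.
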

We can consider $\A^U$ to be a communication protocol where the oracle is party $A$ and is given $X=U$ and the algorithm $\A$ is party $B$. Each query made by $\A$ can be `sent' to the oracle and the response can be `sent' back to $\A$ (there may be no actual physical communication when $\A$ makes its oracle queries e.g. if the oracle represents a time evolution operator). If $\A$ makes $\ell$ queries of $\Hn$ qubits each then the communication complexity is $2\ell\Hn$.

\subsection{Random Matrix Approximation of Haar Random Unitaries}\label{sec:prelim-approx-unitary}
In this section we state the result of Jiang \cite{Jia10} showing that elements of a Haar random unitary can be approximated by independent complex Gaussians. We use this extensively in Section~\ref{sec:fidelity-upperbd-random}.

The original statement of Jiang's theorem, Proposition~\ref{thm:approx-Haar-U-Jiang}, and the
proof that the result below follows from the original statement, are in Appendix~\ref{sec:appendix-tech}.

\begin{prop}[Theorem A.2 in~\cite{Jia10}]\label{thm:approx-Haar-U}
Let the matrices $Z=\(Z_{ij}\)_{i,j\in[\Hd]}$ and $U=\(U_{ij}\)_{i,j\in[\Hd]}$ be defined as follows.
Let $Z=\(Z_{ij}\)_{i,j\in[\Hd]}$ be a $\Hd\times\Hd$ matrix where $Z_{ij}\gets\CNdistr(0,1)$ are independent complex Gaussian variables\footnote{Equivalently, let $Z_{ij}=\frac{X_{ij}+iY_{ij}}{\sqrt{2}}$ where $X_{ij},Y_{ij}\gets\Ndistr(0,1)$ are independent standard Gaussian variables.}. Let $U=\(U_{ij}\)_{i,j\in[\Hd]}$ be the $\Hd\times\Hd$ matrix that results from performing the Gram-Schmidt procedure\footnote{The Gram–Schmidt procedure is a simple process that takes a finite, linearly independent set of vectors and generates an orthogonal set that spans the same subspace.} on the $\Hd$ columns vectors of $Z$ and normalizing them. $U$ is distributed according to the Haar measure on the unitary group.
For any $t\in(0,6]$ let the functions $\md(\Hd)=\frac{t^2\Hd}{72\log\Hd}$ (we will always take $\md(\Hd)=\lfloor\frac{t^2\Hd}{72\log\Hd}\rfloor$) and $\epsilon(\Hd)=\frac{3t}{\sqrt{\Hd}}$. 
For every sufficiently large $\Hd$ (such that $\Hd>\max\{e^{1/t^4},e^4\}$ and $\frac{\sqrt{\Hd}}{\log\Hd}\geq \frac{72}{t}$),
$$\Pr_{Z}\[\max_{i\in[\Hd],j\in[\md(\Hd)]}\left|\Qgrav_{ij}-\frac{Z_{ij}}{\sqrt{\Hd}}\right|\geq\epsilon(\Hd)\]\leq\delta(\Hd)$$
where $\delta(\Hd)=O\(\frac{1}{\Hd}\)$.
\end{prop}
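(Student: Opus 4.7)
The plan is to derive Proposition~\ref{thm:approx-Haar-U} as a direct specialization of Jiang's original theorem (Proposition~\ref{thm:approx-Haar-U-Jiang} in Appendix~\ref{sec:appendix-tech}), which supplies all of the nontrivial probabilistic content. Jiang's result gives, in general form, a quantitative coupling between a Haar random unitary $U$ and a normalized Gaussian matrix $Z/\sqrt{d}$ over a $d \times m$ sub-block of columns: the entrywise maximum deviation can be controlled with high probability whenever $m$ is suitably small relative to $d/\log d$, and $\epsilon$ is suitably large relative to $1/\sqrt{d}$. My task here is therefore one of parameter matching: I must verify that the specific choices $\epsilon(d) = 3t/\sqrt{d}$ and $m(d) = \lfloor t^2 d / (72 \log d) \rfloor$ satisfy Jiang's hypotheses and produce the claimed failure probability $\delta(d) = O(1/d)$.

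First, I would restate Jiang's theorem in its original quantitative form, as an inequality of the shape $\Pr_Z\bigl[\max_{i \in [d], j \in [m]} |\sqrt{d}\, U_{ij} - Z_{ij}| \geq \epsilon\bigr] \leq \Phi(m, \epsilon, d)$, where $\Phi$ is an explicit function encoding a union bound over the $d \cdot m$ entries together with sharp concentration estimates on the norm corrections introduced by the Gram--Schmidt step. Second, I would substitute $\epsilon(d) = 3t/\sqrt{d}$, observe that $\epsilon(d)^2 \cdot d = 9 t^2$ is constant in $d$, and check that $m(d) = \lfloor t^2 d / (72 \log d)\rfloor$ is consistent with Jiang's admissibility constraint (which ties $m$, $\epsilon$, and $d$ together via an inequality of the form $m \log d \lesssim \epsilon^2 d$). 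The constant $72$ appearing in the denominator of $m(d)$ is chosen precisely so that after the substitution, $\Phi(m(d), \epsilon(d), d)$ collapses to a quantity of order $1/d$.

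Third, I would verify the side conditions $d > \max\{e^{1/t^4}, e^4\}$ and $\sqrt{d}/\log d \geq 72/t$ as ensuring that $m(d) \geq 1$ (so the statement is nonvacuous) and that the asymptotic estimates in Jiang's proof lie within their valid regime (in particular, so that the Gram--Schmidt correction terms remain of the claimed order). The main obstacle is straightforward bookkeeping: one must trace the explicit constants appearing in Jiang's original derivation through the substitution to confirm both the factor $72$ and the absolute constant hidden in the $O(1/d)$ bound. Since Jiang's theorem is the genuine probabilistic input, no new arguments are required beyond reconciling conventions and constants between the original reference and the form convenient for this paper; the full derivation is therefore deferred to Appendix~\ref{sec:appendix-tech}.
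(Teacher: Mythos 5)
Your high-level plan is the same as the paper's: treat Proposition~\ref{thm:approx-Haar-U-Jiang} as the sole probabilistic input and reduce the claim to parameter bookkeeping. But your picture of Jiang's theorem is too simple in a way that would block the execution. Jiang's bound is not of the form $\Pr[\max \geq \epsilon] \leq \Phi(m,\epsilon,d)$ with a single free threshold $\epsilon$; it bounds $\Pr_Z\big[\max_{i\in[d],j\in[m]}|\sqrt{d}\,U_{ij}-Z_{ij}|\geq rs+2t\big]$ in terms of three independent free parameters $r\in(0,1/4)$, $s>0$, $t>0$, with the admissibility constraint $m\leq rd/2$ (not $m\log d\lesssim \epsilon^2 d$ — the constraint relates $m$ to $r$, not to $\epsilon$). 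The key step you have not identified is how to collapse $rs+2t$ down to $3t=\sqrt{d}\,\epsilon(d)$: the paper chooses $r=1/\log d$ and $s=(\log d)^{3/4}$, so that $rs=(\log d)^{-1/4}\leq t$ precisely when $d\geq e^{1/t^4}$. This explains one of the ``side conditions'' in a way you did not. The other, $\sqrt{d}/\log d\geq 72/t$, is also not about $m(d)\geq 1$; it guarantees $t\sqrt{d}\leq m(d)$, hence $\frac{1}{m+t\sqrt{d}}\geq\frac{1}{2m}$, which is exactly what makes the third term of Jiang's bound, $\frac{6md}{t}\big(1+\frac{t^2}{12(m+t\sqrt{d})}\big)^{-d}$, decay like $d^{-3}$ and the whole $\delta(d)$ collapse to $O(1/d)$. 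Finally, $d>e^4$ enforces $r<1/4$, and $t\leq 6$ enforces $m(d)\leq d/(2\log d)=rd/2$.

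In short: the strategy is right, but your sketch misrepresents the shape of the theorem you intend to specialize (one threshold vs.\ three free parameters), misstates the admissibility constraint, and misattributes the role of the hypotheses on $d$. The nontrivial content of the proof — the choice $r=1/\log d$, $s=(\log d)^{3/4}$ that makes $rs+2t\leq 3t$, and the verification that each of Jiang's three error terms is $O(1/d)$ under that choice — is missing, and without it the claimed $\delta(d)=O(1/d)$ cannot be obtained.
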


\subsection{Information Theory}\label{sec:prelim-info-theory}

In this section we briefly review the information theoretic constructs that we use following the presentation in \cite{CovTho06}. We refer the reader to Appendix~\ref{sec:appendix-info-theory} for further results used in our proofs.

We use the information theoretic notion of \emph{differential entropy} for continuous random variables. The unit of differential entropy depends on the base of the logarithm in the definition. We use the natural logarithm in this work, so all differential entropies will be in units of \emph{nats}. 

\begin{defn}(Differential Entropy)\label{defn:h}
The differential entropy of a continuous random variable $X$ is
$$h(X)=\E[-\log(p_X(X))]=-\int_{\mathcal {X}}p_X(x)\log p_X(x)\,dx.$$

The (joint) differential entropy of jointly distributed random variables $X_1,\ldots,X_\ell$ with joint PDF $p_{X_1,\ldots,X_\ell}(x_1,\ldots,x_\ell)$ is 
$$h(X_{1},\ldots,X_{\ell})=-\intover{\mathcal{X}_1\times\ldots\times\mathcal{X}_\ell} p_{X_1,\ldots,X_\ell}(x_{1},\ldots ,x_{\ell})\log p_{X_1,\ldots,X_\ell}(x_{1},\ldots,x_{\ell})\;dx_{1}\ldots dx_{\ell}.$$
\end{defn}

We are interested in how well information $X$ can be decoded from a noisy representation $Y$. Fano's inequality gives a lower bound on the probability of error in decoding $X$ from $Y$ if $X$ and $Y$ are \emph{discrete} random variables.
To work with continuous variables, we consider a function $\widehat{X}$ that tries to predict the random variable $X$ given correlated information $Y$. This is called the estimator function. We use the following bound on the mean squared estimator error from statistics. This bound is lesser known in information theory and in the context of decoding a noisy channel, but we believe it is akin to Fano's inequality\footnote{To see the analogy, we state Fano's inequality:
$$H(X|Y)\leq H_{b}(\Pr[X\neq {\widehat{X}(Y)}])+\Pr[X\neq {\widehat{X}(Y)}]\log(|{\mathcal {X}}|-1).$$
$X$ and $Y$ are discrete variables, $H$ is the (Shannon) entropy for discrete variables, and $H_b$ is the binary entropy. See \cite{CovTho06} for further details.} and we will refer to it as continuous Fano's inequality.
\begin{prop}[Continuous Fano's Inequality, Corollary/Theorem 8.6.6 in \cite{CovTho06}]\label{thm:cts-Fano-ineq-CovTho}
For any random variable $X\in\R$, correlated random variable $Y$, and `estimator' function $\widehat{X}:\mathcal{Y}\to\R$,
$$\E\[\(X-\widehat{X}(Y)\)^2\]\geq \frac{1}{2\pi e}e^{2h(X|Y)}$$
where $h$ is measured in nats.
\end{prop}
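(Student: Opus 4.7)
The plan is to prove this via the standard chain of three facts in differential-entropy theory: translation invariance of entropy, ``conditioning reduces entropy,'' and the Gaussian maximum-entropy bound for a fixed second moment. Set $Z := X - \widehat{X}(Y)$; this is the estimator error, and the goal becomes showing $\E[Z^2] \ge \frac{1}{2\pi e}e^{2h(X|Y)}$.

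First I would observe that, because $\widehat{X}(Y)$ is a deterministic function of $Y$, translating $X$ by $\widehat{X}(Y)$ inside the conditional entropy does not change its value, i.e.\ $h(Z \mid Y) = h(X - \widehat{X}(Y) \mid Y) = h(X \mid Y)$ (for each fixed value $Y=y$ we are shifting $X$ by the constant $\widehat{X}(y)$, which preserves differential entropy, and then averaging over $Y$). Next I would apply the standard fact that conditioning reduces differential entropy, $h(Z \mid Y) \le h(Z)$, giving
\begin{equation*}
h(X \mid Y) \;\le\; h(Z).
\end{equation*}
Finally I would invoke the Gaussian maximum-entropy inequality: among all real random variables with a fixed second moment $\E[Z^2]$, the centered Gaussian maximizes the differential entropy, so
\begin{equation*}
h(Z) \;\le\; \tfrac{1}{2}\log\bigl(2\pi e\,\mathrm{Var}(Z)\bigr) \;\le\; \tfrac{1}{2}\log\bigl(2\pi e\,\E[Z^2]\bigr),
\end{equation*}
where the last step uses $\mathrm{Var}(Z) \le \E[Z^2]$. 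Chaining these three inequalities yields $h(X\mid Y) \le \tfrac{1}{2}\log\bigl(2\pi e\,\E[(X-\widehat{X}(Y))^2]\bigr)$, and exponentiating (base $e$, since the entropy is in nats) gives the claimed bound.

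There is no real obstacle here; the only points that need a little care are (i) justifying the translation step carefully as an average over $Y$ of the shifts by $\widehat{X}(y)$ rather than a single global translation, and (ii) verifying that we are using the nat-based Gaussian maximum-entropy bound $h_{\mathrm{Gauss}}(\sigma^2) = \tfrac{1}{2}\ln(2\pi e \sigma^2)$ consistently with the paper's convention (stated in Definition~\ref{defn:h}) of natural logarithms throughout. Both facts—Gaussian maximum entropy and conditioning reduces differential entropy—are standard and can be cited from Cover--Thomas without re-derivation.
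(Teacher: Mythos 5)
Your proof is correct and is precisely the standard Cover--Thomas argument; the paper states this as a cited result (Theorem/Corollary 8.6.6 of \cite{CovTho06}) and does not reprove it, so there is no in-paper derivation to compare against. Your three-step chain---translation invariance of conditional differential entropy under the deterministic shift $\widehat{X}(Y)$, ``conditioning reduces entropy'' $h(Z\mid Y)\le h(Z)$, and the Gaussian maximum-entropy bound $h(Z)\le\tfrac12\log(2\pi e\,\E[Z^2])$---is exactly the textbook derivation of the cited estimation-error inequality, with the units (nats) handled consistently with Definition~\ref{defn:h}.
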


We will use the following restatement of Proposition~\ref{thm:cts-Fano-ineq-CovTho} to upper bound the expected value of $X\cdot\widehat{X}(Y)$.
\begin{prop}\label{thm:cts-Fano-ineq}
For any random variable $X\in\R$, correlated random variable $Y$, and `estimator' function $\widehat{X}:\mathcal{Y}\to\R$,
$$\frac{1}{2}\E\[X^2\]+\frac{1}{2}\E\[\widehat{X}(Y)^2\]-\frac{1}{4\pi e}e^{2h(X|Y)}\geq \E\[X\cdot\widehat{X}(Y)\]$$
where $h$ is measured in nats.
\end{prop}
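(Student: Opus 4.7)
The plan is to obtain Proposition~\ref{thm:cts-Fano-ineq} as a direct algebraic rearrangement of the continuous Fano inequality stated in Proposition~\ref{thm:cts-Fano-ineq-CovTho}. Since both propositions concern the same triple $(X, Y, \widehat{X})$ under identical hypotheses, no new probabilistic content is needed: the work is entirely in expanding the quadratic $(X - \widehat{X}(Y))^2$ inside the expectation.

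Concretely, I would start from
$$\E\[(X - \widehat{X}(Y))^2\] \;\geq\; \frac{1}{2\pi e} e^{2h(X|Y)},$$
which is the conclusion of Proposition~\ref{thm:cts-Fano-ineq-CovTho}. Expanding the square and using linearity of expectation gives
$$\E[X^2] - 2\,\E[X\cdot \widehat{X}(Y)] + \E[\widehat{X}(Y)^2] \;\geq\; \frac{1}{2\pi e} e^{2h(X|Y)}.$$
Rearranging to isolate the cross term and dividing through by $2$ yields
$$\frac{1}{2}\E[X^2] + \frac{1}{2}\E[\widehat{X}(Y)^2] - \frac{1}{4\pi e} e^{2h(X|Y)} \;\geq\; \E[X\cdot \widehat{X}(Y)],$$
which is exactly the claimed inequality.

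There is essentially no obstacle here; the only thing to verify is that all three expectations on the left are well defined and finite, which is implicit in the hypotheses of Proposition~\ref{thm:cts-Fano-ineq-CovTho} (one needs $\E[(X - \widehat{X}(Y))^2] < \infty$ for the statement to be nonvacuous, and this together with the triangle/Cauchy--Schwarz inequality ensures $\E[X^2]$, $\E[\widehat{X}(Y)^2]$, and $\E[X\cdot\widehat{X}(Y)]$ are all finite whenever the right-hand side bound is meaningful). The form in Proposition~\ref{thm:cts-Fano-ineq} is more convenient downstream because it upper bounds $\E[X\cdot\widehat{X}(Y)]$ directly, which is exactly the quantity that arises when bounding the inner product $\sum_{i,j} \E[|O_{ij}|\cdot|U_{ij}|]$ in the fidelity analysis of Section~\ref{sec:fidelity-upperbd-random}.
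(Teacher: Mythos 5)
Your proof is correct and matches the paper's intent: the paper introduces Proposition~\ref{thm:cts-Fano-ineq} explicitly as ``the following restatement of Proposition~\ref{thm:cts-Fano-ineq-CovTho}'' without giving a proof, and the only thing being implicitly invoked is exactly your algebraic rearrangement (expand the square, use linearity, divide by two). Your aside on finiteness of the individual expectations is a reasonable sanity check but not something the paper dwells on.
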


Lastly we give a definition of mutual information for continuous variables and state its relation to differential entropy. The definition using KL divergence and further properties are in Appendix~\ref{sec:appendix-info-theory}.
\begin{defn}(Mutual Information)
The mutual information of jointly distributed random variables $X,Y$ with joint PDF $p_{X,Y}(x,y)$ is
$$I(X;Y)=\int_{\mathcal{Y}}\int _{\mathcal{X}}{p_{X,Y}(x,y)\log {\({\frac{p_{X,Y}(x,y)}{p_{X}(x)\,p_{Y}(y)}}\)}}\;dx\,dy.$$
\end{defn}
\begin{lem}\label{lem:I-expr-h}
$$I(X;Y)=h(X)-h(X|Y).$$
\end{lem}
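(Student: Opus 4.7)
The plan is to unfold both sides via their integral definitions and identify the conditional density $p_{X|Y}(x|y) = p_{X,Y}(x,y)/p_Y(y)$ inside the logarithm of the mutual information. First, I would write the ratio inside the $\log$ in the definition of $I(X;Y)$ as
\[
\frac{p_{X,Y}(x,y)}{p_X(x)\,p_Y(y)} \;=\; \frac{p_{X|Y}(x|y)}{p_X(x)},
\]
so that $\log$ of the ratio splits additively into $\log p_{X|Y}(x|y) - \log p_X(x)$. This is the only real algebraic manipulation in the proof; the remaining work is bookkeeping with integrals.

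Next, I would substitute that split back into the definition of $I(X;Y)$ and separate the resulting expression into two double integrals. The first piece, $\iint p_{X,Y}(x,y)\log p_{X|Y}(x|y)\,dx\,dy$, is precisely $-h(X|Y)$ once we recall that the conditional differential entropy is defined (analogously to Definition~\ref{defn:h}) as $h(X|Y) = -\iint p_{X,Y}(x,y)\log p_{X|Y}(x|y)\,dx\,dy$. For the second piece, I would marginalize $Y$ out first: $\int p_{X,Y}(x,y)\,dy = p_X(x)$, so
\[
-\iint p_{X,Y}(x,y)\log p_X(x)\,dx\,dy \;=\; -\int p_X(x)\log p_X(x)\,dx \;=\; h(X).
\]
Combining the two pieces gives $I(X;Y) = h(X) - h(X|Y)$, as claimed.

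The only subtlety worth flagging is the usual technical hypothesis that all the densities and logarithms involved are integrable so that Fubini's theorem justifies swapping the order of integration when marginalizing out $y$ in the second piece; under the assumption (implicit throughout the preliminaries) that $X$ and $Y$ are absolutely continuous with well-defined (finite) differential entropies, this is automatic, so I do not expect any genuine obstacle here. The identity is essentially a one-line consequence of the factorization of the joint density, and the proof should take only a few lines.
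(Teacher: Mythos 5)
Your proof is correct and follows the standard textbook derivation. The paper itself does not supply a proof of this lemma — it states it as a known fact, citing \cite{CovTho06} (Cover and Thomas) as the reference for its information-theoretic preliminaries — and the argument you give (factor $p_{X,Y}/(p_X p_Y)$ as $p_{X|Y}/p_X$, split the log, identify the two resulting double integrals as $-h(X|Y)$ and $h(X)$ after marginalizing out $y$) is precisely the standard one found there. Your caveat about integrability of the logarithmic integrands is the right one to flag; it is subsumed by the running assumption that the differential entropies appearing in the paper are well-defined.
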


\section{Hardness of Learning Random Unitaries}\label{sec:fidelity-upperbd-random}
In this section we show our first main result: any learner bounded in complexity, modeled as a quantum algorithm $\A$, cannot accurately guess -- or ``learn'' -- a Haar random unitary. Our results in this section may be read purely at the level of quantum learning. On a physics level, this can be seen as a warmup for our next section: while Haar random dynamics is a common simplifying assumption in black hole physics, such dynamics are simply too random (and too complex) for quantum gravity even within black holes. 

Let us formalize several aspects of the assumptions stated above. Our simple ``reconstruction'' algorithm $\A$, to which we shall also refer as the quantum ``learner'', is bounded in the following sense: $\A$ is limited to a subexponential (in $\log\dim{\cal H}$)-bounded number of observations of the action of $\Qgrav$ on quantum states.\footnote{Note that this includes algorithms that make a polynomial number of queries.} These are modeled as oracle queries to $\Qgrav$. The goal of $\A$ is to ``learn'' $\Qgrav$: i.e. to predict dynamics that agree with $\Qgrav$. We measure agreement as follows: given a Haar random state $\ket{\BH}$, $\A$ outputs a prediction $\SCgrav\ket{\BH}$ which is supposed to approximately agree with the time-evolved state $\Qgrav\ket{\BH}$. We measure the agreement by the quantum state fidelity, which in this case (since in this section $O$ is a linear operator and $\ket{\psi}$ is always a pure state for us) is just the squared inner product $\left|\bra{\BH}\SCgrav^\dagger\Qgrav\ket{\BH}\right|^2$.

We show the \emph{hardness of learning} a Haar random unitary. This means that \emph{any} $\A$ bounded in communication complexity, after observing random dynamics via oracle queries and performing any quantum computations, will fail to predict the evolution of $\ket{\BH}$ -- outputting a state that has non-maximal fidelity with $\Qgrav\ket{\BH}$.
In particular, the non-maximality of the fidelity is non-negligible -- polynomial in $\log\dim{\cal H}$: the algorithm $\A$ fails to learn under the definition of quantum learning.

In this section and in Section~\ref{sec:fidelity-upperbd-pseudorandom}, we assume $\A$ outputs a linear (single operator) model $\SCgrav$ of the fundamental dynamics $\Qgrav$; the fidelity bounds here apply to any such $\A$. In Section~\ref{sec:general-learner}, we show how to remove this restriction, obtaining a fidelity bound for any $\A$ learning any quantum operation (completely positive trace preserving map) as its model of $\Qgrav$.

We note and explain why the result in this section applies to \textit{any} quantum algorithm $\A$ that makes a bounded number of queries to $\Qgrav$, but the runtime of $\A$ is not bounded and can be arbitrarily large. Intuitively, this is because Haar random unitaries have an exponential amount of random information: it thus suffices only to bound the amount of information that $\A$ can access through its queries. Hence the fidelity bound in this section is an ``information theoretic'' bound. If $\A$ has access to limited information, it will not be able to produce an operator $\SCgrav$ which is well correlated with $\Qgrav$ regardless of how (long) it processes that information. In Section~\ref{sec:fidelity-upperbd-pseudorandom} we will model the fundamental time evolution operator $\Qgrav$ by a pseudorandom unitary instead of a Haar random unitary. Pseudorandom unitaries do not have an exponential amount of random information -- they can be efficiently applied to a state given only $\poly(\log\dim\Hil)$ bits of information (the randomly sampled ``key'' of the pseudorandom unitary). However, a computationally bounded $\A$ cannot distinguish them from Haar random unitaries. One may think of this as being due to the small amount of random information (e.g. in the ``key'') requiring a very long time to compute and hence it is inaccessible to computationally bounded $\A$.\footnote{To illustrate the inaccessibility (or hardness of computing) a small amount of information, we consider the factoring task for classical algorithms. If two large primes $p$ and $q$ are randomly sampled (the ``key'' here) and a classical algorithm is given their product $N=p\cdot q$, then the algorithm has all of the information -- $p$ and $q$ are determined by $N$. Yet it takes it a long time to \emph{compute} $p$ and $q$ (assuming there are not significantly faster, i.e. polynomial time, classical factoring algorithms).}

Let us briefly recall some notation. The full discussion of notation is in Section~\ref{sec:prelim}.
We model systems as quantum states of $\Hn=S$ qubits. Let $\Hd=2^\Hn$ be the dimension of the corresponding Hilbert space ${\cal H}$.
Recall $\A^\Qgrav(x)$ denotes the quantum algorithm's learning and prediction process. The algorithm $\A$ has oracle access to $\Qgrav$ and $x$ is the input it receives during the prediction phase (recall that it receives no input during the query and learn phase). This model is motivated and defined in Definitions~\ref{defn:quantum-queries} and~\ref{defn:learning-model}. In asymptotic (large $d$) statements, we write $\A_n$ to denote the algorithm $\A$ with oracle access to a $\Hd=2^\Hn$ dimensional unitary and asked to predict for a state $\ket{\psi}$ of $n$ qubits. We use $X\sim\A^\Qgrav$ to refer to random variables $X$ that $\A^\Qgrav$ produces after it queries and learns. 

\medskip
We now state the full version of Theorem~\ref{thm:fidelity-upperbd-random}.

\begin{thm}[Hardness of Learning Haar Random Unitaries]\label{thm:fidelity-upperbd-random}
For $\Hn\in\N$ let $\Qgrav$ be a Haar random unitary and $\ket{\BH}$ be a Haar random state in a Hilbert space ${\cal H}$ of dimension $\Hd=2^\Hn$.
For any quantum algorithm $\A$ that makes at most $o\(\frac{\Hd^2}{(\log\Hd)^2}\)$ quantum queries to $\Qgrav$ (Definition~\ref{defn:quantum-queries}), produces a circuit implementation of a single linear operator $\SCgrav\in\C^{\Hd\times\Hd}$ such that $\SCgrav$ approximately (up to $\frac{1}{\poly(\Hd)}$ corrections) preserves the norms of vectors, and when given $\ket{\BH}$ the algorithm outputs $\SCgrav\ket{\BH}$,
$$\Eover{\substack{\Qgrav\gets\Haar_\Hd\\
\ket{\BH}\gets\Haar_\Hd\\
\SCgrav\ket{\BH}\gets\A^\Qgrav(\ket{\BH})}}\[F\(\SCgrav\ket{\BH},\Qgrav\ket{\BH}\)\]\leq 1-\Omega\(1\)$$
where $F\(\SCgrav\ket{\BH},\Qgrav\ket{\BH}\)$ is the fidelity, i.e. for pure states the squared inner product $\left|\bra{\BH}\SCgrav^\dagger \Qgrav\ket{\BH}\right|^2$.
More generally, even when $\SCgrav$ does not necessarily preserve the norms of vectors, if  the maximum norm $\max\limits_{j\in[\Hd]}\sum\limits_{i\in[\Hd]}|\SCgrav_{ij}|^2\leq \alpha\leq 1$ (where $\alpha$ is a constant), then the above equation holds.
More precisely, for any constants $c > 0$ and $\beta\in(0,1)$,
$$
\Eover{\substack{\Qgrav\gets\Haar_\Hd\\
\ket{\BH}\gets\Haar_\Hd\\
\SCgrav\ket{\BH}\gets\A^\Qgrav(\ket{\BH})}}\[\left|\bra{\BH}\SCgrav^\dagger \Qgrav\ket{\BH}\right|^2\]\leq \frac{1+\SCcol}{2}-\(\frac{e^{\gamma_E+1}}{16\pi}\cdot e^{-c}\,(1-\beta)^2\).
$$

\end{thm}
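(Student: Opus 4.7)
The plan is to follow the blueprint laid out in the technical overview, bridging the algebraic quantity $|\bra{\BH}\SCgrav^\dagger \Qgrav\ket{\BH}|^2$ to information-theoretic quantities that can be bounded using the query budget of $\A$. First, I would average over the Haar random state $\ket{\BH}$ using the standard degree-$2$ moment formula to rewrite
\[
\Eover{\ket{\BH}\gets\Haar}\left[\left|\bra{\BH}\SCgrav^\dagger \Qgrav\ket{\BH}\right|^2\right]
= \frac{1}{\Hd(\Hd+1)}\bigl(\,|\Tr(\SCgrav^\dagger \Qgrav)|^2 + \Tr((\SCgrav^\dagger \Qgrav)(\SCgrav^\dagger \Qgrav)^\dagger)\,\bigr),
\]
and bound both terms by the Frobenius-type sum $\tfrac{1}{\Hd}\sum_{i,j}|\SCgrav_{ij}^*\Qgrav_{ij}|$ via Cauchy-Schwarz, so that the task reduces to upper bounding $\Eover{\Qgrav,\SCgrav}\sum_{i,j}|\SCgrav_{ij}||\Qgrav_{ij}|$.

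Next I would apply the continuous Fano inequality (Proposition~\ref{thm:cts-Fano-ineq}) entrywise with $X=|\Qgrav_{ij}|$ and estimator $\widehat{X}=|\SCgrav_{ij}|$, which converts the product moments into the conditional differential entropies $h(|\Qgrav_{ij}|\mid \SCgrav)$. Summing over $i,j$ and applying Jensen's inequality to the exponential yields a bound of the form
\[
\Eover{}\!\left[\left|\bra{\BH}\SCgrav^\dagger \Qgrav\ket{\BH}\right|^2\right] \le \frac{1+\SCcol}{2} - \frac{\Hd}{4\pi e}\, \exp\!\left(\frac{2}{\Hd^2}\sum_{i,j}h\bigl(|\Qgrav_{ij}|\bigm|\SCgrav\bigr)\right).
\]
The heart of the argument is then lower bounding this sum of conditional entropies. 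Using Jiang's theorem (Proposition~\ref{thm:approx-Haar-U}), the marginal distribution of any block of $\md(\Hd)=\Theta(\Hd/\log\Hd)$ columns of $\Qgrav$ is close (in total variation) to independent $\CNdistr(0,1/\Hd)$ entries; translation invariance of the Haar measure lets me choose which block. Replacing these entries with their Gaussian surrogates introduces only $O(1/\Hd)$ error in the relevant expectations, and the independence of the Gaussians makes $h$ additive. Conditioning on $\SCgrav$ then costs at most $I(Z;\SCgrav)$, which by Lemma~\ref{cor:I-QCC} is upper bounded by the quantum communication complexity $\QCC = 2\ell \log \Hd = o(\Hd^2/\log \Hd)$.

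To upgrade from $1 - 1/\poly(\log \Hd)$ to $1 - \Omega(1)$, I would partition the columns of $\Qgrav$ into $\lfloor \Hd/\md(\Hd) \rfloor = \Theta(\log \Hd)$ consecutive blocks of $\md(\Hd)$ columns and apply the above argument blockwise. Each block contributes $\md(\Hd) \cdot \Hd \cdot h_{\CN}$ entropy to the sum (with $h_{\CN} = 1 - \log 2 + \gamma_E/2 - (\log \Hd)/2$ the entropy of a normalized complex Gaussian magnitude), so that the total $\sum_{i,j} h(|\Qgrav_{ij}|\mid\SCgrav)$ scales proportionally to $\Hd^2$ rather than only $\Hd \cdot \md(\Hd)$. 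Plugging the query bound $\QCC \le o(\Hd^2/\log \Hd)$ into the exponent yields a constant strictly greater than zero in front of the $\Hd/(4\pi e)$ subtractive term, producing the claimed $1-\Omega(1)$ bound; carrying through explicit constants $c,\beta$ (where $c$ absorbs the communication cost relative to the entropy and $\beta$ the approximation losses from Jiang's theorem) yields the sharper quantitative form.

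The main obstacle I expect is \emph{not} any individual step but the careful accounting of approximation errors when replacing blocks of $\Qgrav$ with Gaussians, since Jiang's approximation is only in total variation and only with probability $1 - O(1/\Hd)$, while the quantities being controlled are expectations of unbounded functions ($|\log p|$) of the entries. I would handle this by truncating to the high-probability event, bounding contributions on the complement, and using standard continuity-of-entropy estimates (analogous to Fano's extension for continuous variables) to transfer the entropy bound from the Gaussian surrogate to the actual Haar entries. A secondary subtlety is ensuring that blockwise independence under the marginal of Jiang's approximation is compatible with the joint conditioning on the \emph{single} $\SCgrav$ across all blocks; here I would invoke subadditivity of mutual information over the disjoint column blocks, so that $\sum_k I(Z^{(k)};\SCgrav) \le I(Z;\SCgrav) \le \QCC$ suffices to bound the per-block entropy loss summed across all blocks.
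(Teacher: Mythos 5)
Your outline follows the paper's blueprint (degree-2 Haar moment, continuous Fano, Jiang's Gaussian coupling, column-block partition, and a communication-complexity bound on mutual information), so the first two-thirds of the proposal is on the right track. The genuine gap is in the ``secondary subtlety'' you flag at the end. You propose to control the total entropy loss across all blocks by invoking superadditivity, $\sum_k I(Z^{(k)};\SCgrav) \le I(Z;\SCgrav) \le \QCC$, but this inequality requires the $Z^{(k)}$ to be mutually independent blocks of a single Gaussian matrix, and the construction cannot be set up that way. Proposition~\ref{thm:approx-Haar-U} only couples $\md(\Hd)=\Theta(\Hd/\log\Hd)$ columns of $\Qgrav$ to a Gaussian matrix at a time; the paper gets around this by Lemma~\ref{lem:fidelity-rand-Vk-approx-Gaussian}, which defines a \emph{separate} Gaussian matrix $\CN_k$ for each block (one per column permutation $P_k$). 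These $\CN_k$'s are all correlated with the same $\Qgrav$ and with each other, and are not disjoint sub-blocks of one iid Gaussian. There is no single $Z$ approximating all of $\Qgrav$ to which your superadditivity could even apply. The paper instead proves (Eq.~\ref{eqn:fidelity-random-pf-I-QCC}) the much simpler bound $I_k \le \QCC(\prot)$ \emph{separately for each $k$}, using only the chain rule and nonnegativity of mutual information, not superadditivity. This weaker bound suffices because the exponent in Lemma~\ref{lem:fidelity-rand-upperbd-O-U} is $\QCC/(\Hd\,\md(\Hd)) = o(1)$ for the stated query budget, so every one of the $\qd\approx\log\Hd$ blocks contributes nearly the full $\md(\Hd)\cdot\frac{e^{\gamma_E+1}}{16\pi}$ deficit; summing and dividing by $\Hd$ gives $\Omega(1)$.

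A secondary issue worth noting: you apply Fano to $X=|\Qgrav_{ij}|$ with estimator $|\SCgrav_{ij}|$ and then propose to lower-bound $h(|\Qgrav_{ij}|\,|\,\SCgrav)$ by transferring from the Gaussian via ``continuity-of-entropy estimates.'' This is exactly the step you yourself identify as the ``main obstacle,'' and the paper does not resolve it that way: Jiang's coupling is a high-probability pointwise event, not a total-variation bound, and transferring a conditional differential entropy across it is not routine. The paper's Lemma~\ref{lem:fidelity-rand-upperbd-O-U} sidesteps this entirely by conditioning on the coupling event, substituting $|\Vmat_{ij}|\le\epsilon(\Hd)+|\CN_{ij}|$ \emph{inside the expectation}, and only then applying Fano with $X=|\CN_{ij}|$, whose distribution is an explicitly computable Rayleigh so $h(|\CN_{ij}|)$ can be written down. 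Reorganizing your argument to follow that ordering eliminates the transfer problem and reduces the approximation error to the explicit $\epsilon(\Hd)\sqrt{\SCcol\Hd}+\delta(\Hd)\sqrt{\SCcol}$ terms.
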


\medskip

\paragraph{Remark.}
We note that while we defined explicit forms of quantum queries in our model (Definition~\ref{defn:quantum-queries}) -- i.e. when $\A$ queries the oracle on $\ket{\phi_i}$ (or more generally $\rho_i$) it receives $\Qgrav\ket{\phi_i}$ (more generally $\Qgrav\rho_i\Qgrav^\dagger$) -- Theorem~\ref{thm:fidelity-upperbd-random} holds for $\A$ querying and receiving qubits with the oracle via even more general protocols. It does not rely on this form for the qubits exchanged between $\A$ and the oracle, only that a \emph{bounded} number of qubits of information are exchanged.

\subsection{Proof of Theorem~\ref{thm:fidelity-upperbd-random}}
In this section we give the proof of Theorem~\ref{thm:fidelity-upperbd-random}.

We refer the reader to the technical overview (Section~\ref{sec:tech-overview}) for a detailed walk-through of the proof. Here we will simply jog the reader's memory (under the assumption that all of our readers have carefully and painstakingly worked through Section~\ref{sec:tech-overview}) by giving a high-level outline of the proof below. 
\begin{enumerate}
    \item First, we derive an expression for the squared inner product (the fidelity when $\SCgrav$ is unitary) averaged over $\ket{\BH}\gets\Haar_\Hd$, in terms of the matrix elements of $\SCgrav$ and $\Qgrav$.
    \item Then we bound this expression using the estimator error bound in Proposition~\ref{thm:cts-Fano-ineq} (which we call continuous Fano's inequality) to bound the average fidelity by the conditional entropy of $\Qgrav$ given $\SCgrav$. 
    
    \item This conditional entropy will be close to the entropy of $\Qgrav$ by the quantum communication complexity bound (Corollary~\ref{cor:I-QCC}) since $\A$ is allowed a limited number of queries. To give a numerical lower bound for $h(\Qgrav)$ in the end, we use the fact that many of the elements of a Haar random matrix are close to independent complex Gaussians (Proposition~\ref{thm:approx-Haar-U}, a restatement of Proposition~\ref{thm:approx-Haar-U-Jiang} with specific parameters). This will allow us to use the differential entropy of complex Gaussians to lower bound the differential entropy of $\Qgrav$ to obtain our result.  
    \item Proposition~\ref{thm:approx-Haar-U} only allows us to approximate $\Theta\(\frac{1}{\log\Hd}\)$ fraction of the columns of $\Qgrav$ by complex Gaussians. (This is in fact not completely surprising as the columns of a Haar random $\Qgrav$ does not have all independent columns.) To more fully utilize this result, we partition the columns of $\Qgrav$ into $\Theta\(\lfloor\frac{\Hd}{\log\Hd}\rfloor\)$ subsets of columns. For each subset, we separately define a set of complex Gaussians that approximates those columns.
    \item We then replace the elements of $\Qgrav$ by their corresponding complex Gaussians (plus an error term) and then use continuous Fano's inequality. It suffices to lower bound the conditional entropy of these complex Gaussians given $\SCgrav$.
    \item We now use the quantum communication complexity bound and some information theory facts to relate this conditional entropy to the entropy of many independent complex Gaussians which is easy to calculate numerically.
    \item Finally we instantiate any parameters and use the conditions stated in the theorem to arrive at the final bound.
\end{enumerate}

For clarity, we break up the proof of Theorem~\ref{thm:fidelity-upperbd-random} into several useful individual lemmas. As a shorthand, for matrices $U,O\in\C^{\Hd\times\Hd}$, we use:
$$F_{\SCgrav,\Qgrav}:=\Eover{\ket{\BH}\gets\Haar_\Hd}\[F(\SCgrav\ket{\BH},\Qgrav\ket{\BH})\].$$

First we prove the below lemma on averages of the squared inner product between $\SCgrav\ket{\BH}$ and $\Qgrav\ket{\BH}$ over Haar random $\ket{\BH}$.\footnote{We thank A. Harrow for discussions on this lemma.} This will be used to prove Theorem~\ref{thm:fidelity-upperbd-random} below, and also Theorem~\ref{thm:fidelity-general-learners} in Section~\ref{sec:general-learner}.
\begin{lem}\label{lem:avg-random-state}
For any (fixed) unitary $\Qgrav\in\C^{\Hd\times\Hd}$ and any (fixed) matrix $\SCgrav\in\C^{\Hd\times\Hd}$,
\begin{align*}d(d+1)F_{\SCgrav,\Qgrav}&=\left|\sum_{i,j\in[\Hd]}\SCgrav^*_{ij}\Qgrav_{ij}\right|^2+\Tr(\SCgrav^\dagger\SCgrav)\\
& \leq \left|\sum_{i,j\in[\Hd]}\SCgrav^*_{ij}\Qgrav_{ij}\right|^2+\Hd\SCcol
\end{align*} 
where $\SCcol:=\max\limits_{i\in[\Hd]}\sum\limits_{j\in[\Hd]}|\SCgrav_{ij}|^2$.
\end{lem}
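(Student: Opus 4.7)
The plan is to reduce the Haar average to a second-moment computation and then use the standard degree-two Haar integral identity. First I would rewrite the fidelity in a form amenable to the ``doubling trick'':
\begin{equation*}
F(\SCgrav\ket{\BH},\Qgrav\ket{\BH})=\left|\bra{\BH}\SCgrav^\dagger\Qgrav\ket{\BH}\right|^2=\bra{\BH}\SCgrav^\dagger\Qgrav\ket{\BH}\,\bra{\BH}(\SCgrav^\dagger\Qgrav)^\dagger\ket{\BH}=\Tr\!\Big[\big(\SCgrav^\dagger\Qgrav\otimes (\SCgrav^\dagger\Qgrav)^\dagger\big)\,\ket{\BH}\!\bra{\BH}^{\otimes 2}\Big].
\end{equation*}
Setting $M:=\SCgrav^\dagger\Qgrav$, the average $F_{\SCgrav,\Qgrav}$ becomes $\Tr\!\big[(M\otimes M^\dagger)\,\Eover{\ket{\BH}\gets\Haar_\Hd}[\ket{\BH}\!\bra{\BH}^{\otimes 2}]\big]$.

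Second, I would invoke the well-known Haar two-design identity (the formula cited from \cite{Har13} already used in the overview):
\begin{equation*}
\Eover{\ket{\BH}\gets\Haar_\Hd}\!\big[\ket{\BH}\!\bra{\BH}^{\otimes 2}\big]=\frac{\Id+\SWAP}{\Hd(\Hd+1)},
\end{equation*}
where $\SWAP$ is the swap operator on $\Hil\otimes\Hil$. Plugging this in and using the two elementary trace identities $\Tr[(A\otimes B)\Id]=\Tr(A)\Tr(B)$ and $\Tr[(A\otimes B)\SWAP]=\Tr(AB)$ yields
\begin{equation*}
\Hd(\Hd+1)\,F_{\SCgrav,\Qgrav}=|\Tr(M)|^2+\Tr(MM^\dagger).
\end{equation*}

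Third, I would simplify each of the two terms using the specific structure $M=\SCgrav^\dagger\Qgrav$ with $\Qgrav$ unitary. The trace unpacks componentwise into $\Tr(\SCgrav^\dagger\Qgrav)=\sum_{i,j\in[\Hd]}\SCgrav^{*}_{ij}\Qgrav_{ij}$, which produces exactly the first summand in the lemma. For the second term, unitarity of $\Qgrav$ collapses $MM^\dagger=\SCgrav^\dagger\Qgrav\Qgrav^\dagger\SCgrav=\SCgrav^\dagger\SCgrav$, giving the Frobenius norm $\Tr(\SCgrav^\dagger\SCgrav)=\sum_{i,j}|\SCgrav_{ij}|^2$. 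This is the claimed equality.

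Finally, the stated inequality follows by bounding the Frobenius norm column by column: $\sum_{i,j}|\SCgrav_{ij}|^2=\sum_{j}\!\sum_{i}|\SCgrav_{ij}|^2\le \Hd\cdot\max_{j}\sum_{i}|\SCgrav_{ij}|^2\le \Hd\SCcol$ (noting that by symmetry of the argument one can use either row-max or column-max as $\SCcol$). No step is really an obstacle; the only subtlety is keeping track that $\Qgrav\Qgrav^\dagger=\Id$ is what eliminates the dependence of the $\Tr(MM^\dagger)$ term on $\Qgrav$, which is crucial because it is precisely what lets us later separate the $|\Qgrav,\SCgrav$ inner-product term (to which the information-theoretic machinery is applied) from a universal contribution depending only on $\SCgrav$.
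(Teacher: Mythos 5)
Your proof is correct and follows essentially the same route as the paper's: rewrite the fidelity as $\Tr[(M\otimes M^\dagger)\,\E[(\ket{\BH}\bra{\BH})^{\otimes 2}]]$, apply the degree-two Haar identity, use $\Tr[(A\otimes B)\Id]=\Tr A\,\Tr B$ and $\Tr[(A\otimes B)\SWAP]=\Tr(AB)$, then exploit unitarity of $\Qgrav$ to reduce $\Tr(MM^\dagger)$ to $\Tr(\SCgrav^\dagger\SCgrav)$ and bound it by $\Hd\SCcol$. The only cosmetic difference is that the paper's stated $\SCcol$ is a row-sum maximum while you wrote a column-sum maximum, but as you note, either choice gives the same bound on $\Tr(\SCgrav^\dagger\SCgrav)$.
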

\begin{proof}[Proof of Lemma~\ref{lem:avg-random-state}]
For any such $\Qgrav$ and $\SCgrav$ let $A=\SCgrav^\dagger \Qgrav$. We have:
\begin{align*}
F_{\SCgrav,\Qgrav}&=\Eover{\ket{\BH}\gets\Haar_\Hd}\[\left|\bra{\BH}\SCgrav^\dagger \Qgrav\ket{\BH}\right|^2\]\\
&=\Tr\[(A\otimes A^\dagger)\Eover{\ket{\BH}\gets\Haar_\Hd}\[ \ket{\BH}\bra{\BH}\otimes\ket{\BH}\bra{\BH}\]\].
\end{align*}

Next we make use of a result from~\cite{Har13}:
$$\Eover{\ket{\BH}\gets\Haar_\Hd}\[(\ket{\BH}\bra{\BH})^{\otimes 2}\]=\frac{1}{\Hd(\Hd+1)}\(\Id_\Hd+\SWAP_\Hd\)$$
where $\Id_\Hd$ is the $\Hd\times\Hd$ identity matrix and $\SWAP_\Hd=\sum_{i,j\in[\Hd]}\ket{i}\bra{j}\otimes\ket{j}\bra{i}$ is the $\Hd$-dimensional `SWAP operator'.
From this it follows that:
\begin{align*}
F_{\SCgrav,\Qgrav}&=\frac{1}{\Hd(\Hd+1)}\[\Tr(A\otimes A^\dagger)+\Tr\[(A\otimes A^\dagger)\SWAP_\Hd\]\].\label{eqn:fidelity-random-pf-avged-state}
\end{align*}
This can be rewritten as:
\begin{align*}
\Hd(\Hd+1)F_{\SCgrav,\Qgrav}& = |\Tr A|^2+\Tr\[(A\otimes A^\dagger)\(\sum_{i,j\in[\Hd]}\ket{i}\bra{j}\otimes\ket{j}\bra{i}\)\]\\
& = \left|\sum_{i,j\in[\Hd]} \SCgrav^*_{ij}\Qgrav_{ij}\right|^2+ \Tr(\SCgrav^\dagger\SCgrav).
\end{align*}
This proves the equality in Lemma~\ref{lem:avg-random-state}. To get the subsequent inequality, we just use $\SCcol=\max\limits_{i\in[\Hd]}\sum\limits_{j\in[\Hd]}|\SCgrav_{ij}|^2$.
\end{proof}

\begin{lem}\label{lem:fidelity-random-pf-F-sqrt-F-ineq}
For any unitary $\Qgrav$ and $\SCgrav$ such that $\max\limits_{j\in[\Hd]}\sum\limits_{i\in[\Hd]}|\SCgrav_{ij}|^2\leq \alpha\leq 1$ (where $\alpha$ is a constant),
$$F_{\SCgrav,\Qgrav}\leq \frac{1}{\Hd}\sqrt{\Hd(\Hd+1)F_{\SCgrav,\Qgrav}-\Hd\SCcol}+\frac{3\SCcol}{2\sqrt{\Hd}}+\frac{1}{2\sqrt{\Hd}}.$$
\end{lem}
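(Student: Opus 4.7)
The plan is to prove the inequality by a purely algebraic manipulation of the identity supplied by Lemma~\ref{lem:avg-random-state}, namely $\Hd(\Hd+1)F_{\SCgrav,\Qgrav} = |\Tr(\SCgrav^\dagger\Qgrav)|^2 + \Tr(\SCgrav^\dagger\SCgrav)$, together with the column-norm hypothesis. First I would establish the preliminary bound $F_{\SCgrav,\Qgrav}\leq\SCcol\leq 1$: applying Cauchy--Schwarz to the Frobenius inner product (with $\Qgrav$ unitary so $\Tr(\Qgrav^\dagger\Qgrav)=\Hd$) gives $|\Tr(\SCgrav^\dagger\Qgrav)|^2\leq \Hd\,\Tr(\SCgrav^\dagger\SCgrav)$, and combining this with $\Tr(\SCgrav^\dagger\SCgrav)\leq \Hd\SCcol$ inside Lemma~\ref{lem:avg-random-state} yields $\Hd(\Hd+1)F_{\SCgrav,\Qgrav}\leq\Hd(\Hd+1)\SCcol$, hence $F_{\SCgrav,\Qgrav}\leq\SCcol\leq 1$.

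With this in hand, I would set $A:=\tfrac{1}{\Hd}\sqrt{\Hd(\Hd+1)F_{\SCgrav,\Qgrav} - \Hd\SCcol}$ and work first in the regime where $A$ is real, i.e.\ $F_{\SCgrav,\Qgrav}\geq \SCcol/(\Hd+1)$. The definition inverts to $F_{\SCgrav,\Qgrav} = (\Hd A^2 + \SCcol)/(\Hd+1)$, so
\begin{equation*}
F_{\SCgrav,\Qgrav} - A \;=\; \frac{\Hd A^2 - (\Hd+1)A + \SCcol}{\Hd+1}.
\end{equation*}
The inequality $F_{\SCgrav,\Qgrav}-A\leq\SCcol/(\Hd+1)$ therefore reduces to $A\bigl(\Hd A - (\Hd+1)\bigr)\leq 0$, i.e.\ $A\leq(\Hd+1)/\Hd$. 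But directly from its definition $A^2 = ((\Hd+1)F_{\SCgrav,\Qgrav}-\SCcol)/\Hd \leq (\Hd+1)/\Hd$, using the preliminary bound $F_{\SCgrav,\Qgrav}\leq 1$, whence $A\leq\sqrt{1+1/\Hd}\leq(\Hd+1)/\Hd$ as required. This gives $F_{\SCgrav,\Qgrav}\leq A + \SCcol/(\Hd+1)$.

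To obtain the exact form stated in the lemma, I would then verify the elementary inequality $\SCcol/(\Hd+1)\leq \tfrac{3\SCcol}{2\sqrt{\Hd}}+\tfrac{1}{2\sqrt{\Hd}}$, which after cross-multiplication becomes $2\sqrt{\Hd}\,\SCcol \leq (3\SCcol+1)(\Hd+1)$ and is immediate since the right-hand side already dominates $3\SCcol\Hd\geq 2\sqrt{\Hd}\SCcol$ whenever $\Hd\geq 1$. The complementary regime $F_{\SCgrav,\Qgrav}<\SCcol/(\Hd+1)$, where the argument of the square root would be negative, is disposed of by the same chain of inequalities: $F_{\SCgrav,\Qgrav} \leq \SCcol/(\Hd+1) \leq \tfrac{3\SCcol+1}{2\sqrt{\Hd}}$, with the square-root term treated as nonnegative. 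I do not anticipate any substantive obstacle; the entire argument is elementary algebra on top of Lemma~\ref{lem:avg-random-state}, with the only mildly delicate step being the check $A\leq(\Hd+1)/\Hd$, which is precisely why the Cauchy--Schwarz-based preliminary bound $F_{\SCgrav,\Qgrav}\leq 1$ is needed up front.
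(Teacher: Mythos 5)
Your proof is correct, and while it rests on the same algebraic core as the paper's (elementary manipulation of a quadratic inequality following from Lemma~\ref{lem:avg-random-state}, ultimately powered by $F_{\SCgrav,\Qgrav}\leq 1$), the organization is genuinely different in three respects. First, the paper proceeds by an ``iff'' chain that moves the $\SCcol/\sqrt{\Hd}$ term to the left, squares, and then discharges the polynomial inequality in the regime $2\sqrt{\Hd}F_{\SCgrav,\Qgrav}\geq 1+\SCcol$, with the complementary regime handled directly; the squaring step implicitly requires sign conditions that the paper does not spell out, whereas your substitution $F_{\SCgrav,\Qgrav}=(\Hd A^2+\SCcol)/(\Hd+1)$ turns the target into the clean factorization $A(\Hd A-(\Hd+1))\leq 0$ and avoids the issue entirely. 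Second, your case split is on the sign of the radicand ($F_{\SCgrav,\Qgrav}$ versus $\SCcol/(\Hd+1)$) rather than on $F_{\SCgrav,\Qgrav}$ versus $(1+\SCcol)/(2\sqrt{\Hd})$. Third, the paper simply asserts $F_{\SCgrav,\Qgrav}\leq 1$ ``because $\alpha\leq 1$,'' whereas you actually derive the stronger $F_{\SCgrav,\Qgrav}\leq\SCcol\leq 1$ from Cauchy--Schwarz on the Frobenius inner product, which is a nice self-contained justification. As a by-product your intermediate bound $F_{\SCgrav,\Qgrav}\leq A+\SCcol/(\Hd+1)$ is sharper than the paper's $F_{\SCgrav,\Qgrav}\leq A+\SCcol/\sqrt{\Hd}$, though both are then slackened to match the lemma's $\tfrac{3\SCcol}{2\sqrt{\Hd}}+\tfrac{1}{2\sqrt{\Hd}}$ form. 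All steps check out; in particular $A^2\leq(\Hd+1)/\Hd$ from $F_{\SCgrav,\Qgrav}\leq 1$, and $\sqrt{x}\leq x$ for $x\geq 1$, gives $A\leq(\Hd+1)/\Hd$ as required.
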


The proof of this lemma is relegated to Appendix~\ref{sec:appendix-tech}.

\begin{lem}\label{lem:fidelity-rand-Vk-approx-Gaussian}
Let $\Qgrav$ be a Haar random unitary matrix. For any permutation matrix $P\in\{0,1\}^{\Hd\times\Hd}$, there exists a matrix of complex Gaussian variables $\CN=\(\CN_{ij}\)_{i,j\in[\Hd]}$, where the distribution of each $\CN_{ij}$ is $\CNdistr\(0,\frac{1}{\Hd}\)$, such that Proposition~\ref{thm:approx-Haar-U} can be applied to $\sqrt{\Hd}\cdot \CN$ and $\Qgrav P$ (showing that the first $\md(\Hd)$ columns of $\Qgrav P$ can be approximated by the first $\md(\Hd)$ columns of $\CN$).
\end{lem}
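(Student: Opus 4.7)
The plan is to exploit the right-invariance of the Haar measure under unitary multiplication. Since every permutation matrix $P$ is unitary, the distribution of $\Qgrav P$ is identical to the distribution of $\Qgrav$ itself; in particular, $\Qgrav P$ is Haar random on the unitary group of dimension $\Hd$. Proposition~\ref{thm:approx-Haar-U} then applies directly to $\Qgrav P$: there exists a matrix $Z = (Z_{ij})_{i,j\in[\Hd]}$ of independent $\CNdistr(0,1)$ entries such that the Gram--Schmidt orthonormalization of $Z$ produces exactly $\Qgrav P$ (as distributions, and by a standard coupling we may take equality almost surely), and the proposition guarantees
$$\Pr\!\left[\max_{i\in[\Hd],\,j\in[\md(\Hd)]} \left|(\Qgrav P)_{ij} - \frac{Z_{ij}}{\sqrt{\Hd}}\right| \geq \epsilon(\Hd)\right] \leq \delta(\Hd).$$

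Next I would rescale to match the variance normalization called for in the statement. Define $\CN := Z/\sqrt{\Hd}$, so that each $\CN_{ij}$ is distributed as $\CNdistr(0,1/\Hd)$ and the entries remain jointly independent. Then $\sqrt{\Hd}\cdot\CN = Z$, and the conclusion of Proposition~\ref{thm:approx-Haar-U} above is exactly the statement that Proposition~\ref{thm:approx-Haar-U} applies to the pair $(\sqrt{\Hd}\cdot\CN, \Qgrav P)$, yielding that the first $\md(\Hd)$ columns of $\Qgrav P$ are approximated entrywise by the corresponding columns of $\CN$ up to error $\epsilon(\Hd)/\sqrt{\Hd}$ (after accounting for the $1/\sqrt{\Hd}$ normalization) with probability $\geq 1-\delta(\Hd)$.

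The only subtle point is the coupling: Proposition~\ref{thm:approx-Haar-U} is phrased in terms of a \emph{specific} construction (Gram--Schmidt on a Gaussian matrix), whereas $\Qgrav P$ arises from an arbitrary Haar-random $\Qgrav$ right-multiplied by $P$. However, because $\Qgrav P \sim \Haar_\Hd$, the joint distribution of $(\Qgrav P, Z)$ in the proposition can be realized on any probability space carrying $\Qgrav$ by simply defining $Z$ via the (almost-everywhere well-defined) inverse construction, or equivalently by replacing $\Qgrav$ by the Gram--Schmidt output of $Z P^{-1}$ (which is again Haar random). This gives the desired $\CN$ on the same probability space as $\Qgrav$. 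I do not expect any real obstacle here beyond being careful about this coupling/existence step; the rest is just invariance of the Haar measure plus rescaling.
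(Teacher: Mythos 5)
Your proposal is correct and takes essentially the same route as the paper: both rest on the right-invariance of the Haar measure (permutation matrices being unitary and their own inverses), so that $\Qgrav P$ is itself Haar random, and both then invoke the existence of a joint distribution $(\Qgrav P,\CN)$ realizing the Gram--Schmidt coupling that Proposition~\ref{thm:approx-Haar-U} requires. One small slip: after setting $\CN := Z/\sqrt{\Hd}$, the approximation error on $\left|(\Qgrav P)_{ij} - \CN_{ij}\right|$ is exactly $\epsilon(\Hd)$, not $\epsilon(\Hd)/\sqrt{\Hd}$, since Proposition~\ref{thm:approx-Haar-U} already bounds $\left|\Qgrav_{ij} - Z_{ij}/\sqrt{\Hd}\right|$ by $\epsilon(\Hd)$; no further rescaling of the error threshold is needed. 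This does not affect the validity of the argument.
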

\begin{proof}[Proof of Lemma~\ref{lem:fidelity-rand-Vk-approx-Gaussian}] 
Let $\CN_0$ be a $\Hd\times\Hd$ matrix $\CN$ of independent complex Gaussian variables, e.g. from $\CNdistr(0,\frac{1}{\Hd})$. Applying the Gram-Schmidt orthogonalization process to the $\Hd$ column vectors of $\CN_0$ and then normalizing each column vector produces a $\Hd\times\Hd$ Haar random unitary matrix $\Qgrav_0$. 
Thus there exists a joint distribution of $(\Qgrav_0,\CN_0)$ where:
\begin{enumerate}
\item The marginal distribution of $\CN_0$ is a matrix of independent complex Gaussians, each sampled from $\CNdistr\(0,\frac{1}{\Hd}\)$, and the marginal distribution of $\Qgrav_0$ is a Haar random unitary.
\item For any sample $(\Qgrav_0,\CN_0)$, it holds that $\Qgrav_0$ is equal to the Gram-Schdmit orthonormalization of $\CN_0$.
\end{enumerate}

$\Qgrav$ is a random variable, distributed according to the Haar measure $\mu$.
The right-translation-invariance of the Haar measure states that for every (Borel) subset 
$S$ of the unitary group and every unitary matrix $g$, one has $\mu(Sg)=\mu(S)$. Any permutation matrix $P\in\{0,1\}^{\Hd\times\Hd}$ is its own inverse,  so $\Qgrav P$ is also a random variable distributed according to the Haar measure.
Since $\Qgrav P$ is a Haar random unitary, there exists a jointly distributed random variable $\CN=\(\CN_{ij}\)_{i,j\in[\Hd]}$ such that the joint distribution of $(\Qgrav P,\CN)$ has properties~1 and~2 above. Thus Proposition~\ref{thm:approx-Haar-U} can be applied to the matrices $\sqrt{\Hd}\cdot\CN$ and $\Qgrav P$.
\end{proof}

To prove our poor correlation result below, our approach requires (a bound on) the differential entropy of the random variables to be learned. We do not know how to explicitly calculate this (or the probability density function) for a random unitary matrix: instead our approach goes through complex Gaussian variables -- which can be related to $\Qgrav$'s elements -- and for which we can calculate the differential entropy.
 
\begin{lem}[Bound on Correlating with Approximate Gaussians]\label{lem:fidelity-rand-upperbd-O-U} 
Consider any distribution of (jointly sampled) complex-valued matrices $(\Vmat,\CN)$ such that $\Vmat=\(\Vmat_{ij}\)_{\substack{i\in[\Hd]\\ j\in[\md(\Hd)]}}$ always has orthonormal columns, and in the marginal distribution of $\CN=\(\CN_{ij}\)_{\substack{i\in[\Hd]\\j\in[\md(\Hd)]}}$, $\CN_{ij}$ are independent complex Gaussian variables each distributed according to $\CNdistr\(0,\frac{1}{\Hd}\)$. If $\Vmat$ is approximated by $\CN$, meaning that for some functions $\epsilon(\Hd)$ and $\delta(\Hd)$:
$$\Pr_{\Vmat,\CN}\[\max_{\substack{i\in[\Hd]\\ j\in[\md(\Hd)]}}\left|\Vmat_{ij}-\CN_{ij}\right|\leq\epsilon(\Hd)\]\geq 1-\delta(\Hd),$$
then for any random variable $\Qmat=\(\Qmat_{ij}\)_{\substack{i\in[\Hd]\\j\in[\md(\Hd)]}}\in\C^{\Hd\times\md(\Hd)}$ (which may be correlated with $\Vmat$ and $\CN$) and $\SCcol:=\max\limits_{\Qmat}\(\max\limits_{j\in[\md(\Hd)]}\sum\limits_{i\in[\Hd]}|\Qmat_{ij}|^2\)$, we have
$$\sum_{\substack{i\in[\Hd]\\ j\in[\md(\Hd)]}} \(\Eover{\Vmat,\CN,\Qmat}\left|\Qmat^*_{ij}\Vmat_{ij}\right|\)\leq\md(\Hd)\(\epsilon(\Hd)\sqrt{\SCcol\Hd}+\delta(\Hd)\sqrt{\SCcol}+\frac{1+\SCcol}{2(1-\delta(\Hd))}-\frac{e^{\gamma_E+1}}{16\pi}\cdot e^{\frac{-2\cdot I\(|\CN|;\Qmat\)}{\Hd\cdot\md(\Hd)}}\).$$

The techniques here first give a bound in terms of the differential entropy of $\CN$'s elements for any matrix $\CN$ that approximates $\Vmat$ (not necessarily a matrix of complex Gaussians). We then use that $\CN_{ij}$ are complex Gaussians to obtain the above bound.
\end{lem}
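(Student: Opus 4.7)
My plan is to bound $\sum_{i,j}\E|\Qmat_{ij}^{*}\Vmat_{ij}|$ by first replacing $\Vmat_{ij}$ by its Gaussian approximant $\CN_{ij}$ on the high-probability approximation event, then invoking the continuous Fano inequality (Proposition~\ref{thm:cts-Fano-ineq}) entry-by-entry with $|\CN_{ij}|$ as the quantity to estimate and $|\Qmat_{ij}|$ as the estimator, and finally aggregating the pointwise bounds with Jensen's inequality and subadditivity of differential entropy so that the mutual information $I(|\CN|;\Qmat)$ lands in the exponent. Let $G=\{\max_{i,j}|\Vmat_{ij}-\CN_{ij}|\leq \epsilon(\Hd)\}$, so by hypothesis $\Pr[G]\geq 1-\delta(\Hd)$.

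First I would split each summand as $\E|\Qmat_{ij}^{*}\Vmat_{ij}|=\E[|\Qmat_{ij}^{*}\Vmat_{ij}|\mathbf{1}_{G}]+\E[|\Qmat_{ij}^{*}\Vmat_{ij}|\mathbf{1}_{G^{c}}]$. On the bad event $G^{c}$, Cauchy--Schwarz in $i$ combined with $\sum_{i}|\Vmat_{ij}|^{2}=1$ (orthonormality of $\Vmat$'s columns) and $\sum_{i}|\Qmat_{ij}|^{2}\leq \SCcol$ yields $\sum_{i}|\Qmat_{ij}^{*}\Vmat_{ij}|\leq \sqrt{\SCcol}$ deterministically; summing over $j$ and using $\Pr[G^{c}]\leq \delta(\Hd)$ then bounds the $G^{c}$ contribution by $\delta(\Hd)\,\md(\Hd)\sqrt{\SCcol}$. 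On $G$ I would use the triangle inequality $|\Vmat_{ij}|\leq |\CN_{ij}|+\epsilon(\Hd)$ to split the remaining sum; the $\epsilon(\Hd)$ piece is then at most $\epsilon(\Hd)\sum_{i,j}|\Qmat_{ij}|\leq \epsilon(\Hd)\,\md(\Hd)\sqrt{\SCcol\,\Hd}$ via Cauchy--Schwarz per column.

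The main piece $\sum_{i,j}\E[|\Qmat_{ij}||\CN_{ij}|]$ is where the estimator-error bound enters. Applying Proposition~\ref{thm:cts-Fano-ineq} pointwise with $X=|\CN_{ij}|$, $Y=\Qmat$, $\widehat{X}(Y)=|\Qmat_{ij}|$ would give $\E[|\Qmat_{ij}||\CN_{ij}|]\leq \tfrac12\E|\CN_{ij}|^{2}+\tfrac12\E|\Qmat_{ij}|^{2}-\tfrac{1}{4\pi e}e^{2h(|\CN_{ij}|\mid \Qmat)}$. Summing, the first two groups contribute at most $(1+\SCcol)\md(\Hd)/2$, using $\E|\CN_{ij}|^{2}=1/\Hd$ on the $\Hd\cdot\md(\Hd)$ indices together with the column-norm bound on $\Qmat$. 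For the exponential term, Jensen's inequality yields $\sum_{i,j}e^{2h(|\CN_{ij}|\mid \Qmat)}\geq \Hd\,\md(\Hd)\exp\bigl(\tfrac{2}{\Hd\,\md(\Hd)}\sum_{i,j}h(|\CN_{ij}|\mid \Qmat)\bigr)$, after which the chain rule plus ``conditioning reduces entropy'' gives $\sum_{i,j}h(|\CN_{ij}|\mid \Qmat)\geq h(|\CN|\mid \Qmat)=h(|\CN|)-I(|\CN|;\Qmat)$; independence of the $\CN_{ij}$ then gives $h(|\CN|)=\Hd\,\md(\Hd)\,h_{\CN}$, with $h_{\CN}$ the differential entropy of a single $|\CN_{ij}|$. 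An explicit Rayleigh calculation for parameter $\sigma^{2}=1/(2\Hd)$ (arising from $\CN_{ij}\sim\CNdistr(0,1/\Hd)$) gives $h_{\CN}=1-\log 2+\gamma_{E}/2-(\log \Hd)/2$, hence $e^{2h_{\CN}}=e^{\gamma_{E}+2}/(4\Hd)$; multiplying through by $\Hd\,\md(\Hd)/(4\pi e)$ produces the advertised factor $\md(\Hd)\,\tfrac{e^{\gamma_{E}+1}}{16\pi}\,e^{-2I(|\CN|;\Qmat)/(\Hd\,\md(\Hd))}$.

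The main obstacle is the simultaneous bookkeeping across three error sources --- the $\epsilon(\Hd)$-approximation of $\Vmat$ by $\CN$, the $\delta(\Hd)$-failure probability (controlled via orthonormality of $\Vmat$'s columns), and the continuous Fano bound --- while ensuring that the aggregation via Jensen (over all $\Hd\,\md(\Hd)$ indices) and subadditivity of differential entropy (used in the direction $\sum h(X_{i}\mid Y)\geq h(X_{1},\ldots,X_{n}\mid Y)$, which is the opposite of what one might first try) lands the mutual information $I(|\CN|;\Qmat)$ in the exponent with the correct coefficient $2/(\Hd\,\md(\Hd))$; the final constants further rely on carrying out the Rayleigh differential-entropy computation with $\sigma^{2}=1/(2\Hd)$, matching the normalization in $\CNdistr(0,1/\Hd)$.
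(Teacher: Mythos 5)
Your proposal is correct and follows essentially the same approach as the paper's proof: split on the approximation event, bound the bad-event contribution via Cauchy--Schwarz and orthonormality, replace $\Vmat_{ij}$ by $\epsilon(\Hd)+|\CN_{ij}|$ on the good event, apply the continuous Fano estimator bound entry-by-entry, and aggregate via Jensen, the chain rule, independence of the $\CN_{ij}$, and the Rayleigh entropy calculation. One minor difference: you split via indicator functions $\mathbf{1}_{G},\mathbf{1}_{G^{c}}$ and then drop $\mathbf{1}_{G}$, whereas the paper passes to the conditional expectation $\E[\cdot\mid\Eapp]$, which introduces the $\tfrac{1}{1-\delta(\Hd)}$ factor appearing in the middle term of the stated bound; your route avoids it entirely, yielding $\tfrac{1+\SCcol}{2}$ in place of $\tfrac{1+\SCcol}{2(1-\delta(\Hd))}$, which is strictly sharper and still implies the lemma. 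You are also right to phrase the entropy aggregation step as the inequality $\sum_{i,j} h(|\CN_{ij}|\mid\Qmat)\geq h(|\CN|\mid\Qmat)$ (chain rule plus ``conditioning reduces entropy''); the paper writes this as an equality via a claimed conditional-independence identity, but only the inequality is needed and only the inequality is justified in general.
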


In this lemma, for random variables $X$ the notation $\max\limits_{X}$ denotes taking the maximum over the support of the distribution of $X$.

We will invoke this lemma with the $\Vmat_k$ and (first $\md(\Hd)$ columns of) $\CN_k$ defined above. For our final bound, the parameters quantifying the strength of the approximation, $\epsilon(\Hd)$ and $\delta(\Hd)$, will be set to the corresponding parameters in Proposition~\ref{thm:approx-Haar-U} so both will be $O\(\frac{1}{\sqrt{\Hd}}\)$. Thus the reader may think of these terms as having little contribution in the upper bound in Lemma~\ref{lem:fidelity-rand-upperbd-O-U}. We will derive a bound for any approximation, in terms of generic $\epsilon(\Hd)$ and $\delta(\Hd)$, and then specialize to our setting for Theorem~\ref{thm:fidelity-upperbd-random} at the end. 

The proof of Lemma~\ref{lem:fidelity-rand-upperbd-O-U} contains key steps of our proof, but as it is highly technical, so we have placed it in Appendix~\ref{sec:appendix-tech}. 

\medskip\noindent We now proceed to the proof of Theorem~\ref{thm:fidelity-upperbd-random}.
\begin{proof}[Proof of Theorem~\ref{thm:fidelity-upperbd-random}]

Consider any fixed unitary $\Qgrav\in\C^{\Hd\times\Hd}$ and any fixed matrix $\SCgrav\in\C^{\Hd\times\Hd}$. The proof will focus on upper bounding the terms ${\left| \SCgrav^*_{ij}\Qgrav_{ij}\right|}$ (on average) in order to upper bound $F_{\SCgrav,\Qgrav}$ (on average) as desired. We first turn the inequality in Lemma~\ref{lem:avg-random-state} into a more viable inequality.
By the triangle inequality,
\begin{equation}\label{eqn:fidelity-random-pf-sqrtF-sum-OU}
\sqrt{\Hd(\Hd+1)F_{\SCgrav,\Qgrav}-\Hd\SCcol}\leq \sum_{i,j\in[\Hd]}\left|\SCgrav^*_{ij}\Qgrav_{ij}\right|. 
\end{equation}
The left side above is related to $F_{O,U}$, the fidelity averaged over Haar random $\ket{\BH}$; by Lemma~\ref{lem:fidelity-random-pf-F-sqrt-F-ineq} and Equation~\ref{eqn:fidelity-random-pf-sqrtF-sum-OU},
\begin{equation}\label{eqn:fidelity-random-pf-upperbd-SC*Q}
F_{\SCgrav,\Qgrav}\leq \frac{1}{\Hd}\sum_{i,j\in[\Hd]}\left|\SCgrav^*_{ij}\Qgrav_{ij}\right|+\frac{3\SCcol}{2\sqrt{\Hd}}+\frac{1}{2\sqrt{\Hd}}.
\end{equation}
Recall we fixed any $\Qgrav,\SCgrav$ so Equation~\ref{eqn:fidelity-random-pf-upperbd-SC*Q} holds for every $\Qgrav,\SCgrav$. We take the expectation (i.e. the average) over the distribution of $\Qgrav\gets\Haar_\Hd$ and $\SCgrav\sim\A^\Qgrav_\Hn$ corresponding to our model.
\begin{align}
\Eover{\Qgrav,\SCgrav}\[F_{\SCgrav,\Qgrav}\]&\leq \frac{1}{\Hd}\;\Eover{\Qgrav,\SCgrav}\[\sum_{i,j\in[\Hd]}\left|\SCgrav^*_{ij}\Qgrav_{ij}\right|\]+\frac{3\SCcol}{2\sqrt{\Hd}}+\frac{1}{2\sqrt{\Hd}}\nonumber\\
&= \frac{1}{\Hd}\;\sum_{i,j\in[\Hd]} \(\Eover{\Qgrav,\SCgrav}\left|\SCgrav^*_{ij}\Qgrav_{ij}\right|\)+\frac{3\SCcol}{2\sqrt{\Hd}}+\frac{1}{2\sqrt{\Hd}}.\label{eqn:fidelity-random-pf-upperbd-SC*Q-E}
\end{align}

We now focus on upper bounding the summand on the right in order to upper bound the average fidelity on the left. We partition the terms in this sum by partitioning the unitary matrix $\Qgrav\equiv\[\Vmat_1|\Vmat_2|\ldots|\Vmat_{\qd}|\Vmat_{\qd+1}\]$ where the $V_k$ are matrices and $|$ denotes concatenation of matrices. We will consider the sum of $\Eover{\Qgrav,\SCgrav}\left|\SCgrav^*_{ij}\Qgrav_{ij}\right|$ over each $V_k$ \emph{separately}. 

To do so, we define some notation. Let $\md(\Hd)$ be the number of columns of a Haar random unitary that Proposition~\ref{thm:approx-Haar-U} shows can be (simultaneously) approximated by complex Gaussian variables.

\begin{figure}
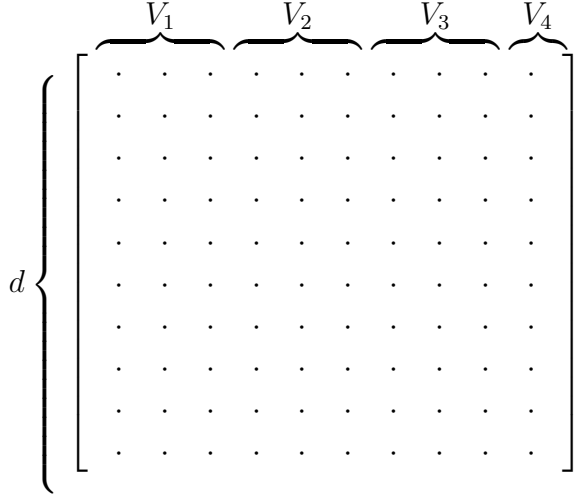

\centering
$$\vphantom{}
\begin{matrix}
\vphantom{a}\\
    \coolleftbrace{d}{\cdot \\ \cdot \\ \cdot \\ \cdot \\ \cdot \\ \cdot \\ \cdot \\ \cdot \\ \cdot \\ \cdot}
\end{matrix}
\begin{bmatrix} 
    \;\coolover{V_1}{\;\;\cdot\; & \;\;\cdot\; & \;\;\cdot\;} & \coolover{V_2}{\;\;\cdot\; & \;\;\cdot\; & \;\;\cdot\;} & \coolover{V_3}{\;\;\cdot\; & \;\;\cdot\; & \;\;\cdot\;} & \coolover{V_4}{\;\;\cdot}\;\;\; \\
    \;\;\;\cdot\; & \;\;\cdot\; & \;\;\cdot\; & \;\;\cdot\; & \;\;\cdot\; & \;\;\cdot\; & \;\;\cdot\; & \;\;\cdot\; & \;\;\cdot\; & \;\;\cdot\;\;\; \\
    \;\;\;\cdot\; & \;\;\cdot\; & \;\;\cdot\; & \;\;\cdot\; & \;\;\cdot\; & \;\;\cdot\; & \;\;\cdot\; & \;\;\cdot\; & \;\;\cdot\; & \;\;\cdot\;\;\; \\
    \;\;\;\cdot\; & \;\;\cdot\; & \;\;\cdot\; & \;\;\cdot\; & \;\;\cdot\; & \;\;\cdot\; & \;\;\cdot\; & \;\;\cdot\; & \;\;\cdot\; & \;\;\cdot\;\;\; \\
    \;\;\;\cdot\; & \;\;\cdot\; & \;\;\cdot\; & \;\;\cdot\; & \;\;\cdot\; & \;\;\cdot\; & \;\;\cdot\; & \;\;\cdot\; & \;\;\cdot\; & \;\;\cdot\;\;\; \\
    \;\;\;\cdot\; & \;\;\cdot\; & \;\;\cdot\; & \;\;\cdot\; & \;\;\cdot\; & \;\;\cdot\; & \;\;\cdot\; & \;\;\cdot\; & \;\;\cdot\; & \;\;\cdot\;\;\; \\
    \;\;\;\cdot\; & \;\;\cdot\; & \;\;\cdot\; & \;\;\cdot\; & \;\;\cdot\; & \;\;\cdot\; & \;\;\cdot\; & \;\;\cdot\; & \;\;\cdot\; & \;\;\cdot\;\;\; \\
    \;\;\;\cdot\; & \;\;\cdot\; & \;\;\cdot\; & \;\;\cdot\; & \;\;\cdot\; & \;\;\cdot\; & \;\;\cdot\; & \;\;\cdot\; & \;\;\cdot\; & \;\;\cdot\;\;\; \\
    \;\;\;\cdot\; & \;\;\cdot\; & \;\;\cdot\; & \;\;\cdot\; & \;\;\cdot\; & \;\;\cdot\; & \;\;\cdot\; & \;\;\cdot\; & \;\;\cdot\; & \;\;\cdot\;\;\; \\
    \;\;\;\cdot\; & \;\;\cdot\; & \;\;\cdot\; & \;\;\cdot\; & \;\;\cdot\; & \;\;\cdot\; & \;\;\cdot\; & \;\;\cdot\; & \;\;\cdot\; & \;\;\cdot\;\;\; \\
\end{bmatrix}$$
\caption{An example of partitioning a matrix of dimension $\Hd=10$ into matrices $V_k$ each with $\md(\Hd)=3$ columns (this value of $\md(\Hd)$ is chosen only for this illustration). The sets of indices $m_1=\{1,2,3\}, m_2=\{4,5,6\}, m_3=\{7,8,9\}, m_4=\{10\}$. There are $\qd:=\lfloor\frac{\Hd}{\md(\Hd)}\rfloor=3$ such $V_k$ and the remaining columns are in $V_{\qd+1}=V_4$.}
\end{figure}

\noindent
Let $\qd:=\lfloor\frac{\Hd}{\md(\Hd)}\rfloor$. We partition the $\Hd$ columns of $\Qgrav$ into $\qd+1$ sets of columns as follows. For every $k\in[\qd]$ define (1) the set of integers $\md_k:=\[(k-1)\cdot\md(\Hd)+1,k\cdot\md(\Hd)\]$ and define $\md_{\qd+1}:=\[\qd\cdot\md(\Hd)+1,\Hd\]$, and (2) the $\Hd\times\md(\Hd)$ matrix $\Vmat_k:=\(\Qgrav_{ij}\)_{\substack{i\in[\Hd]\\ j\in\md_k}}$ to contain the columns of $\Qgrav$ with column indices in $\md_k$, and define the matrix $\Vmat_{\qd+1}:=\(\Qgrav_{ij}\)_{\substack{i\in[\Hd]\\ j\in\md_{\qd+1}}}$ to contain the remaining columns of $\Qgrav$ ($\Vmat_{\qd+1}$ may be empty). 
We now partition the sum in Equation~\ref{eqn:fidelity-random-pf-upperbd-SC*Q-E} into $\qd+1$ parts. 
For every $k\in[\qd+1]$ define
\begin{equation}\label{eqn:fidelity-random-pf-defn-sum-EOU}
\sigma_k:=\sum_{\substack{i\in[\Hd]\\ j\in\md_k}}\(\Eover{\Qgrav,\SCgrav}\left|\SCgrav^*_{ij}\Qgrav_{ij}\right|\) 
\end{equation}
which is a measure of how much of $\Vmat_k$ has been learned in $\SCgrav$. 
Equation~\ref{eqn:fidelity-random-pf-upperbd-SC*Q-E} rewritten is

\begin{equation}\label{eqn:fidelity-random-pf-upperbd-SC*Q-two-sums}
\Eover{\Qgrav,\SCgrav}\[F_{\SCgrav,\Qgrav}\]\leq \frac{1}{\Hd}\
\(\sum_{k\in[\qd]}\sigma_k\)+\frac{\sigma_{\qd+1}}{\Hd}
+\frac{3\SCcol}{2\sqrt{\Hd}}+\frac{1}{2\sqrt{\Hd}}.
\end{equation}

Next we upper bound $\sigma_{\qd+1}$. Then we will upper bound the $\sigma_k$ for $k\in[\qd]$, which will be the bulk of the work and where we derive that the average fidelity must be greater than $ \Omega(1)$ below the maximal value of $1$.

\paragraph{Bounding $\sigma_{\qd+1}$.} For every $i\in[\Hd],j\in\md_{\qd+1}$, we apply the continuous Fano's inequality (Proposition~\ref{thm:cts-Fano-ineq}) with $X=|\Qgrav_{ij}|$, $Y=\SCgrav$, and $\widehat{X}(\SCgrav)=|\SCgrav_{ij}|$. Summing up these separate inequalities, we get
$$\sigma_{\qd+1}\leq -\frac{1}{4\pi e}\sum_{\substack{i\in[\Hd]\\ j\in\md_{\qd+1}}}e^{2h\(|\Qgrav_{ij}||\SCgrav\)}+\sum_{\substack{i\in[\Hd]\\ j\in\md_{\qd+1}}}\frac{\Eover{\Qgrav,\SCgrav}\(|\Qgrav_{ij}|^2\)}{2}+\sum_{\substack{i\in[\Hd]\\ j\in\md_{\qd+1}}}\frac{\Eover{\Qgrav,\SCgrav}\(|\SCgrav_{ij}|^2\)}{2}.$$
Using $e^{2h\(|\Qgrav_{ij}||\SCgrav\)}\geq 0$ and the norms of the columns of $\Qgrav$ and $\SCgrav$, we have
\begin{align}
\sigma_{\qd+1}&\leq \frac{\Hd-\qd\cdot\md(\Hd)}{2}+\frac{(\Hd-\qd\cdot\md(\Hd))\cdot\SCcol}{2} \nonumber\\
&=\frac{(\Hd-\qd\cdot\md(\Hd))(1+\SCcol)}{2}.\label{eqn:fidelity-random-pf-SC*Q-above-m(d)}
\end{align}

\paragraph{Bounding $\sigma_k$ for every $k\in[\qd]$.} 

We now use Lemmas~\ref{lem:fidelity-rand-Vk-approx-Gaussian} and~\ref{lem:fidelity-rand-upperbd-O-U} to show that the operator $\SCgrav$ produced by the quantum algorithm necessarily contains poor estimations of the entries of each $\Vmat_k$ (i.e. each $\sigma_k$ must be sub-maximal). This suffices to bound the average fidelity below $1$ by the linearity in Equation~\ref{eqn:fidelity-random-pf-upperbd-SC*Q-two-sums}, showing that $\Qgrav$ cannot be learned.~\footnote{The general idea behind this part of the proof is as follows: we first prove that the matrix elements of $\Vmat_k$ can be jointly approximated by a set of independent complex Gaussian variables (this will be Lemma~\ref{lem:fidelity-rand-Vk-approx-Gaussian}). Then we prove that if a set of random variables can be approximated by a set of independent complex Gaussians, any estimate of these random variables generated from limited mutual information must have poor correlation with the actual random variables. We use a result of~\cite{AruGri21} to show the mutual information obtained by any algorithm with limited quantum query access to the oracle must also be limited. 
In a nutshell, we are showing that it is hard to learn \emph{any subset} of $\md(\Hd)$ columns of $\Qgrav$.}

In what follows, we say that for a distribution of two matrices $(M,M')$, the matrix $M=(M_{ij})$ is \emph{approximated} by the matrix $M'=(M'_{ij})$ if with high probability ($\geq 1-\delta$ for small $\delta$) over the distribution of $(M,M')$, we have that for every $i,j$, $|M_{ij}-M'_{ij}|\leq\epsilon$ for small $\epsilon$. Proposition~\ref{thm:approx-Haar-U} states there is such an approximation (for $\md(\Hd)$ columns of a Haar random unitary by a matrix of complex Gaussians).~\footnote{The statement that $M$ is approximated by $M'$ is the statement that for (nearly all) samples $(M,M')$, all of the entries of $M$ can be \emph{jointly} approximated by the entries of $M'$. This simultaneous approximation is stronger than each \emph{single} entry of $M$ having approximately the same distribution as the corresponding entry of $M'$.}

For every $k\in[\qd]$ let $P_k\in\{0,1\}^{\Hd\times\Hd}$ be the permutation matrix that swaps the columns of $\Qgrav$ indexed by $\md_1$ with the columns indexed by $\md_k$ so that
$$\Qgrav P_k=\[\Vmat_k|\Vmat_2|\ldots|\Vmat_{k-1}|\Vmat_{1}|\Vmat_{k+1}|\ldots|\Vmat_{\qd+1}\].$$ 
For every $k\in[\qd]$, by Lemma~\ref{lem:fidelity-rand-Vk-approx-Gaussian}, there exists a matrix $\CN_k=\(\CN_{kij}\)_{i,j\in[\Hd]}$ of independent complex Gaussians, each with distribution $\CNdistr\(0,\frac{1}{\Hd}\)$, such that the entries in the first $\md(\Hd)$ columns of $\CN_k$ approximate the entries in $\Vmat_k$.

We now make use of Lemma~\ref{lem:fidelity-rand-upperbd-O-U} which holds for generic matrices $\Vmat$ and $\CN$ satisfying the stated conditions. Recall that we defined a quantity $\sigma_k$ for every $k\in[\qd]$ in Equation~\ref{eqn:fidelity-random-pf-defn-sum-EOU}. These quantities together with $\sigma_{\qd+1}$ constitute an upper bound on the average fidelity, as written in Equation~\ref{eqn:fidelity-random-pf-upperbd-SC*Q-two-sums}.
We are now ready to use Lemmas~\ref{lem:fidelity-rand-Vk-approx-Gaussian} and~\ref{lem:fidelity-rand-upperbd-O-U} as initially described to upper bound the $\sigma_k$. Recall that we started with a joint distribution of a Haar random unitary $\Qgrav\gets\mu$ and an operator $\SCgrav\sim\A^\Qgrav_\Hn$ produced by the algorithm during the query and learn phase. We now have a joint distribution of random variables
$$(\Qgrav,\SCgrav,\CN_1,\ldots,\CN_\qd)$$
where we used Lemma~\ref{lem:fidelity-rand-Vk-approx-Gaussian} to define the matrices $\CN_k=\(\CN_{kij}\)_{i,j\in[\Hd]}$ for $k\in[\qd]$, each depending on $\Qgrav$.
By Lemma~\ref{lem:fidelity-rand-Vk-approx-Gaussian}, the entries in the first $\md(\Hd)$ columns of each $\CN_k$ approximate the entries in $\Vmat_k$ with the parameters $\epsilon(\Hd)$ and $\delta(\Hd)$ specified by Proposition~\ref{thm:approx-Haar-U}.
Additionally, in the marginal distribution of a single matrix $\CN_k$, $\CN_{kij}$ for $i,j\in[\Hd]$ are independent complex Gaussians each with distribution $\CNdistr\(0,\frac{1}{\Hd}\)$.
Then for each $k\in[\qd]$, we can apply Lemma~\ref{lem:fidelity-rand-upperbd-O-U} to $\Vmat=\Vmat_k$ which was defined to equal $\(\Qgrav_{ij}\)_{\substack{i\in[\Hd]\\ j\in\md_k}}$, $\CN=\(\CN_{kij}\)_{\substack{i\in[\Hd]\\ j\in[\md(\Hd)]}}$ and $\Qmat=\(\SCgrav_{ij}\)_{\substack{i\in[\Hd]\\ j\in\md_k}}$.
Since the expected value (average) is the same over different marginals of $(\Qgrav,\SCgrav,\CN_1,\ldots,\CN_\qd)$,
$$\sigma_k:=\sum_{\substack{i\in[\Hd]\\ j\in\md_k}}\(\Eover{\Qgrav,\SCgrav}\left|\SCgrav^*_{ij}\Qgrav_{ij}\right|\) =\sum_{\substack{i\in[\Hd]\\ j\in[\md(\Hd)]}} \(\Eover{\Vmat,\CN,\Qmat}\left|\Qmat^*_{ij}\Vmat_{ij}\right|\).$$
Thus Lemma~\ref{lem:fidelity-rand-upperbd-O-U} gives us an upper bound on each $\sigma_k$ that depends on the mutual information $I_k:=I\(\{|\CN_{kij}|\}_{\substack{i\in[\Hd]\\j\in[\md(\Hd)]}};\(\SCgrav_{ij}\)_{\substack{i\in[\Hd]\\j\in\md_k}}\)$:
\begin{equation}\label{eqn:fidelity-random-pf-SC*Q-below-m(d)}
\sigma_k \leq\md(\Hd)\(\epsilon(\Hd)\sqrt{\SCcol\Hd}+\delta(\Hd)\sqrt{\SCcol}+\frac{1+\SCcol}{2(1-\delta(\Hd))}-\frac{e^{\gamma_E+1}}{16\pi}\cdot e^{\frac{-2\cdot I_k}{\Hd\cdot\md(\Hd)}}\).
\end{equation}

We now relate these mutual information terms to the quantum query (communication) complexity $\QCC$ between the algorithm and $\Qgrav$ while it is computing $\SCgrav$. 
Let $\prot$ denote the following interactive protocol: $\Qgrav\gets\mu$ is sampled, defining the matrices $\CN_k$, and party A is given the elements $\{|\CN_{1ij}|\}_{\substack{i\in[\Hd]\\ j\in[\md(\Hd)]}},\ldots,\{|\CN_{\qd ij}|\}_{\substack{i\in[\Hd]\\ j\in[\md(\Hd)]}}$ and (a circuit implementation of) $\Qgrav$. Party B receives nothing. Let party A represent the oracle $\Qgrav$ and party B represent the quantum learning algorithm in our model: party B runs the learning algorithm's computations. Whenever the learning algorithm queries for a state in its registers, party B sends those qubits to party A which applies $\Qgrav$ and sends back the resulting qubits. Party A and B proceed like this until the learning algorithm has finished querying and computing (a representation of) $\SCgrav$. Party B then computes and outputs the operator $\SCgrav\in\C^{\Hd\times\Hd}$.
This is a valid quantum communication protocol. Thus we can apply the quantum communication bound on the mutual information between parties A and B (Corollary~\ref{cor:I-QCC}),
\begin{equation*} 
I\(\Qgrav,\{|\CN_{1ij}|\}_{\substack{i\in[\Hd]\\ j\in[\md(\Hd)]}},\ldots,\{|\CN_{\qd ij}|\}_{\substack{i\in[\Hd]\\ j\in[\md(\Hd)]}};\SCgrav\)\leq \QCC(\prot).
\end{equation*}
By the chain rule (Lemma~\ref{lem:I-chain-rule}) and the nonnegativity (Lemma~\ref{lem:I-nonneg}) properties of mutual information, for any random variables $X_1,X_2,Y$, we have $I\(X_1,X_2;Y\)=I\(X_2;Y\)+I\(X_1;Y|X_2\)\geq I\(X_2;Y\)$. Thus for every $k\in[\qd]$,
\begin{equation}\label{eqn:fidelity-random-pf-I-QCC}
I_k:=I\(\{|\CN_{kij}|\}_{\substack{i\in[\Hd]\\j\in[\md(\Hd)]}};\(\SCgrav_{ij}\)_{\substack{i\in[\Hd]\\j\in\md_k}}\)\leq I\(\{|\CN_{kij}|\}_{\substack{i\in[\Hd]\\ j\in[\md(\Hd)]}};\SCgrav\)\leq \QCC(\prot).
\end{equation}

We are now ready to upper bound the average fidelity. Recall Equation~\ref{eqn:fidelity-random-pf-upperbd-SC*Q-two-sums}:
\begin{equation*}
\Eover{\Qgrav,\SCgrav}\[F_{\SCgrav,\Qgrav}\]\leq\frac{1}{\Hd}\
\(\sum_{k\in[\qd]}\sigma_k\)+\frac{\sigma_{\qd+1}}{\Hd}
+\frac{3\SCcol}{2\sqrt{\Hd}}+\frac{1}{2\sqrt{\Hd}}.
\end{equation*}
By the bound on $\sigma_{\qd+1}$ in Equation~\ref{eqn:fidelity-random-pf-SC*Q-above-m(d)}, the bound on $\sigma_{k}$ for $k\in[\qd]$ in Equation~\ref{eqn:fidelity-random-pf-SC*Q-below-m(d)}, and the bound on $I_k$ for $k\in[\qd]$ in Equation~\ref{eqn:fidelity-random-pf-I-QCC},
\begin{align}
\Eover{\Qgrav,\SCgrav}\[F_{\SCgrav,\Qgrav}\]&\leq\frac{\qd\cdot\md(\Hd)}{\Hd}\(\epsilon(\Hd)\sqrt{\SCcol\Hd}+\delta(\Hd)\sqrt{\SCcol}+\frac{1+\SCcol}{2(1-\delta(\Hd))}-\frac{e^{\gamma_E+1}}{16\pi}\cdot e^{\frac{-2\cdot \QCC(\pi)}{\Hd\cdot\md(\Hd)}}\)\nonumber\\
&\quad\quad+\frac{(\Hd-\qd\cdot\md(\Hd))(1+\SCcol)}{2\Hd}+\frac{3\SCcol}{2\sqrt{\Hd}}+\frac{1}{2\sqrt{\Hd}}.\label{eqn:fidelity-random-pf-before-params}
\end{align}

Finally we instantiate all of the parameters to obtain the following bound. This part is nontrivial but technical so we placed it in Appendix~\ref{sec:appendix-tech}. This uses $\md(\Hd),\epsilon(\Hd),\delta(\Hd)$ from Proposition~\ref{thm:approx-Haar-U} and $\SCcol=O(1)$. For any constants $c>0$ and $0<\beta<1$, we have that for $\QCC(\pi)\leq\frac{\eta\cdot\Hd^2}{\log\Hd}$ for some constant $\eta$ that depends on $c,\beta$ (or equivalently for any algorithm $\A$ that makes at most $\frac{\eta\cdot\Hd^2}{2(\log\Hd)^2}$ queries to $\Qgrav$):
\begin{align*}
\Eover{\Qgrav,\SCgrav}\[F_{\SCgrav,\Qgrav}\]\leq \frac{1+\SCcol}{2}-\(\frac{e^{\gamma_E+1}}{16\pi}\cdot e^{-c}\,(1-\beta)\)+O\(\frac{1}{\log\Hd}\),
\end{align*}
as each query corresponds to $2\log_{2}d$ qubits of information. For sufficiently large $\Hd$, such that the $O\(\frac{1}{\log\Hd}\)$ function above is $\leq \beta\cdot\(\frac{e^{\gamma_E+1}}{16\pi}\cdot e^{-c}\,(1-\beta)\)$, we have Theorem~\ref{thm:fidelity-upperbd-random}:
\begin{align*}
\Eover{\Qgrav,\SCgrav}\[F_{\SCgrav,\Qgrav}\]\leq \frac{1+\SCcol}{2}-\(\frac{e^{\gamma_E+1}}{16\pi}\cdot e^{-c}\,(1-\beta)^2\).
\end{align*}
Here $c$ and $\beta$ can be arbitrarily small positive constants, giving upper bounds that are
$\approx \frac{1+\SCcol}{2}-\(\frac{e^{\gamma_E+1}}{16\pi}\) = \frac{1+\SCcol}{2}-0.0963$ 

\noindent
For instance, for $c=\beta=\frac{1}{10^3}$ we have that for $\QCC(\prot)\leq c\(\frac{t^2\cdot\Hd^2}{144\log\Hd}\)=\frac{\eta\cdot\Hd^2}{\log\Hd}$ for some positive constant $\eta$, the average fidelity 
$$\Eover{\Qgrav,\SCgrav}\[F_{\SCgrav,\Qgrav}\]\leq \frac{1+\SCcol}{2}-0.0960=\frac{1+\SCcol}{2}-\Omega(1).$$
\noindent 
This concludes the proof of Theorem~\ref{thm:fidelity-upperbd-random}.
 
\end{proof}

\begin{rem}
Our $\Omega(1)$ gap from the maximal value comes from considering how well $|\SCgrav_{ij}|$ approximates $|\Qgrav_{ij}|$ (starting from the application of the continuous Fano's inequality). We expect a better gap may be obtained by considering how well $\SCgrav_{ij}$ approximates the real and imaginary parts of $\Qgrav_{ij}$ separately. The real and imaginary parts each have an approximation by (real) Gaussian variables, for which some of our inequalities would be tighter. 
\end{rem}

\begin{rem} The sharp reader may ask if we need $\md(\Hd)$, the number of columns of $\Qgrav$ we can jointly approximate by complex Gaussians, to be very large? Armed now with the idea to partition and consider \emph{subsets} of $\Qgrav$, we may no longer need $\md(\Hd)$ to be significantly large to get a fidelity bound. (Without this idea, we do need $\md(\Hd)$ to be $\Omega\(\frac{\Hd}{\log\Hd}\)$ to obtain an initial fidelity bound of $1-\Omega\(\frac{1}{\log\Hd}\)$.) 
\end{rem}

\begin{rem}(A proof with quantum analogs of our tools?)
At the heart of our work is Theorem~\ref{thm:fidelity-upperbd-random} which shows a quantum algorithm cannot learn the exponential amount of randomized information in $\Qgrav$. The information theoretic tools in our proof work with the classical entropy $h$ of $\Qgrav$'s matrix elements rather than the quantum entropy $S$. Since our starting point is the QCC result bounding the (classical) mutual information $I(\Qgrav;\SCgrav)$, it is natural to work with the related classical entropy $h$, and this is how we proceed.
The tools of classical and quantum information theory tend to have analogs, but we leave open whether there is a proof of a result similar to Theorem~\ref{thm:fidelity-upperbd-random} using the quantum analogs. For instance, a proof using a bound on the \emph{quantum} mutual information (see e.g. \cite{NieChu11}) between $\Qgrav$ and $\SCgrav$ (see e.g. \cite{NieChu11}). Similarly to classical versions of Fano's inequality, this inequality relates the error between two random variables, expressed here as a fidelity, and the conditional entropy of one variable given the other, expressed here as a von Neumann entropy. A bound on the quantum mutual information may work well with quantum Fano's inequality, or finding a way to relate the classical mutual information to the quantum von Neumann entropy.
\end{rem}

\section{Hardness of Learning Pseudorandom Unitaries}\label{sec:fidelity-upperbd-pseudorandom}

In this section we show the hardness of learning a pseudorandom unitary acting on a pseudorandom state. In our application to physical systems in general and black holes in particular, this corresponds to the more realistic scenario where the system -- say, the black hole -- dynamics are efficiently generated but remain indistinguishable from Haar random to algorithms whose computational power is bounded polynomially in $\log\dim {\cal H}$. The setup of the main theorem is analogous to the setup in Section~\ref{sec:fidelity-upperbd-random} for truly Haar random dynamics. The primary difference lies in the computational complexity of the algorithm:  the bound for Haar random unitaries is for any learning algorithm bounded in communication complexity, whereas our theorem below for pseudorandom unitaries is for any algorithm which is also bounded in computational complexity.

Before continuing, us briefly comment on a general feature of our results. The proofs and results in this section are stated for `$\poly(\Hn)$-size quantum circuits' which are more general than $\poly(\Hn)$-time quantum algorithms. The latter often refers to a single `uniform' algorithm that works for all $\Hn$. Here $\poly(\Hn)$-size quantum circuits simply refers to a family of quantum algorithms, one for each $\Hn$, where asymptotically the $\Hn$'th circuit runs in $\poly(\Hn)$-time. This is a `non-uniform' model of computation since it allows there to be a different circuit for each $\Hn$. The circuits can also use `non-uniform' advice (an arbitrary quantum state, different for each $\Hn$) for their computations. So the results in this section hold for uniform $\poly(\Hn)$-time quantum algorithms as well.

In Section~\ref{sec:fidelity-defn-PRU-PRS} we give the definitions of pseudorandom unitaries and states, first defined by \cite{JiLiu18},  and state the main result of this section, Theorem~\ref{thm:fidelity-upperbd-PRU-PRS}. In Sections~\ref{sec:fidelity-PRU} and~\ref{sec:fidelity-PRS} we prove three lemmas which are used to  generalize our proof for unitaries and states to pseudorandom unitaries and states. In Section~\ref{sec:PR-thm-proof} we use these lemmas to prove Theorem~\ref{thm:fidelity-upperbd-PRU-PRS}.

\subsection{Definitions of Pseudorandom Unitaries and States}\label{sec:fidelity-defn-PRU-PRS}
In this section we state the definitions of computationally pseudorandom unitaries and states. We specify that the security property (pseudorandomness) holds against non-uniform adversaries (also called a family of circuits).\footnote{Note that in the following definitions and in our proofs, the algorithm $\Dist$  is \textit{not} the learning algorithm $\A$. The goal of $\Dist$ is to distinguish random vs pseudorandom objects and for this purpose, it may have access to multiple copies of some (fixed) state as `advice'. This does not affect our results where the learning algorithm $\A$ as always gets access to only one copy of the (randomly sampled) state $\ket{\psi}$ in the prediction phase.} 

\begin{defn}[Pseudorandom Unitaries (PRU) \cite{JiLiu18}]\label{defn:PRU}
Let $\secp$ be the security parameter. A distribution ensemble $\PRU=\{\PRU_\secp\}_\secp$ is a \emph{pseudorandom unitary ensemble} if there is a pair of quantum algorithms $\Key=\{\Key_{\lambda}\}_{\lambda}$ and $\Qalg=\{\Qalg_{\lambda}\}_{\lambda}$ such that the following hold. 
\begin{enumerate}
\item\textbf{Efficient Computation:} $\Key$ and $\Qalg$ each run in $\poly(\secp)$-time and for every $\secp\in\N$: 
\begin{itemize}
    \item $\Key_\secp$ outputs a classical key $\key\in\{0,1\}^*$ that defines a unitary $\pru_\key$.
    \item For every $\key$ in the image of $\Key_\secp$ and any state $\ket{\phi}$, $\Qalg_\secp(\key,\ket{\phi})$ outputs $\pru_\key\ket{\phi}$.
    \item For $\key\gets\Key_\secp$ the distribution of $\pru_\key$ is $\PRU_\secp$.
\end{itemize} 
\item\textbf{Computational Pseudorandomness:} For any family of $\poly(\secp)$-size quantum circuits $\Dist=\{\Dist_\secp\}_\secp$ making quantum queries to an oracle, there exists a negligible function $\neglf$ such that for every $\secp\in\N$,
$$\left|\Pr_{\key\gets\Key_\secp}\[\Dist^{\pru_\key}_\secp=1\]-\Pr_{\pru\gets\Haar}\[\Dist^{\pru}_\secp=1\]\right|\leq\neglf(\secp)$$
where $\Haar$ is the Haar measure over the unitary group.
\end{enumerate}

\end{defn}

\begin{defn}[Pseudorandom States (PRS) \cite{JiLiu18}]\label{defn:PRS}
Let $\secp$ be the security parameter. A distribution ensemble $\PRS=\{\PRS_\secp\}_\secp$ is a \emph{pseudorandom state ensemble} if there is a pair of quantum algorithms $\Key$ and $\Qalg$ such that the following hold.
\begin{enumerate}
\item\textbf{Efficient Generation:} $\Key$ and $\Qalg$ each run in $\poly(\secp)$-time and for every $\secp\in\N$:
\begin{itemize}
    \item $\Key_\secp$ outputs a classical key $\key\in\{0,1\}^*$.
    \item For every $\key$ in the image of $\Key_\secp$, $\Qalg_\secp(\key)$ outputs a quantum state $\ket{\prs_\key}$.
    \item For $\key\gets\Key_\secp$ the distribution of $\ket{\prs_\key}$ is $\PRS_\secp$.
\end{itemize} 
\item\textbf{Computational Pseudorandomness:} For any polynomial $t=t(\secp)$ and any family of $\poly(\secp)$-size quantum circuits $\Dist=\{\Dist_\secp\}_\secp$ there exists a negligible function $\neglf$ such that for every $\secp\in\N$,
$$\left|\Pr_{\key\gets\Key_\secp}\[\Dist\(\ket{\prs_\key}^{\otimes t(\secp)}\)=1\]-\Pr_{\ket{\prs}\gets\Haar}\[\Dist\(\ket{\prs}^{\otimes t(\secp)}\)=1\]\right|\leq\neglf(\secp)$$
where $\Haar$ is the Haar measure over the unit sphere. 
\end{enumerate}
\end{defn}

One can relax the notion of PRS by relaxing the requirement that computational pseudorandomness holds for any polynomial $k=k(\secp)$ copies of the state. For our results, it will suffice to just have PRS where pseudorandomness holds for $k=2$ copies of the state. 

\begin{rem}(On the security parameter)
Cryptographic primitives are usually parameterized by a security parameter which can be different, in this case, from the size of the pseudorandom unitaries or states generated. For simplicity, we define the security parameter in this section to correspond to the size of the unitary or state that is generated so $\PRU_\Hn$ consists of unitaries acting on $\Hn$ qubit states, and $\PRS_\Hn$ consists of $\Hn$ qubit states. Since the original security parameter and $\Hn$ are polynomially related, this re-indexing does not affect the efficiency or security of the constructions.
\end{rem}

We prove the following theorem:

\begin{thm}[Hardness of Learning Pseudorandom Unitaries]\label{thm:fidelity-upperbd-PRU-PRS}
For $\Hn\in\N$ let $\PRU=\{\PRU_\Hn\}_\Hn$ be any pseudorandom unitary (PRU) ensemble and let $\PRS=\{\PRS_\Hn\}_\Hn$ be any pseudorandom state (PRS) ensemble  (Definitions~\ref{defn:PRU} and~\ref{defn:PRS}) in a Hilbert space $\mathcal{H}$ of dimension $\Hd=2^\Hn$.
For any family of $\poly(\Hn)$-size quantum circuits $\A=\{\A_\Hn\}_\Hn$ that has quantum query (oracle) access to the pseudorandom unitary (Definition~\ref{defn:quantum-queries}), learns a unitary model\footnote{``Learning a model in $\C^{\Hd\times\Hd}$'' means that $\A$ learns the description of a quantum circuit. The unitary $\Ugrav\in\C^{\Hd\times\Hd}$ in the theorem statement is defined to be the operator implemented by the circuit that $\A$ learns. Since we consider $\A$ running in time $\poly(\Hn)$, the models it may learn for $\Qgrav$ will also be $\poly(\Hn)$-size quantum circuits.}
$\Ugrav$, and given $\ket{\BH}$ outputs $\Ugrav\ket{\BH}$, the average fidelity
$$\Eover{\substack{\Qgrav\gets\PRU_\Hn\\ \ket{\BH}\gets\PRS_\Hn\\ \Ugrav\ket{\BH}\gets\A^\Qgrav_\Hn(\ket{\BH})}}\[F\(\Ugrav\ket{\BH},\Qgrav\ket{\BH}\)\]\leq 1-\Omega\(1\)$$
where $F\(\Ugrav\ket{\BH},\Qgrav\ket{\BH}\)=\left|\bra{\BH}\Ugrav^\dagger \Qgrav\ket{\BH}\right|^2$ is the squared inner product (fidelity for pure states) between $\A$'s prediction $\Ugrav\ket{\BH}$ and the fundamental time evolution $\Qgrav\ket{\BH}$.
\end{thm}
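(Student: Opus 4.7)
The plan is to prove Theorem~\ref{thm:fidelity-upperbd-PRU-PRS} by reducing to the Haar random case (Theorem~\ref{thm:fidelity-upperbd-random}) through a standard two-step hybrid argument that successively exploits the pseudorandomness of $\PRU$ and $\PRS$. I will introduce three hybrids $H_0 = (\Qgrav\gets\Haar_\Hd,\;\ket{\BH}\gets\Haar_\Hd)$, $H_1 = (\Qgrav\gets\PRU_\Hn,\;\ket{\BH}\gets\Haar_\Hd)$, and $H_2 = (\Qgrav\gets\PRU_\Hn,\;\ket{\BH}\gets\PRS_\Hn)$, and denote by $p_i$ the expected fidelity $\Eover{}\[F(\Ugrav\ket{\BH},\Qgrav\ket{\BH})\]$ in hybrid $H_i$, where $\Ugrav$ is the unitary produced by $\A$. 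Since $\Ugrav$ is unitary, $\SCcol=1$, so Theorem~\ref{thm:fidelity-upperbd-random} applied in $H_0$ gives $p_0\leq 1-\Omega(1)$. It will then suffice to show that $|p_0 - p_1|$ and $|p_1 - p_2|$ are both negligible in $\Hn$.

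For the first hop, I will build a $\poly(\Hn)$-size quantum distinguisher $\Dist^{\widetilde{\Qgrav}}_\Hn$ that breaks PRU security whenever $|p_0-p_1|$ is non-negligible. $\Dist$ has quantum query access to a unitary $\widetilde{\Qgrav}$ that is either Haar random or drawn from $\PRU_\Hn$. It internally runs $\A$ and forwards each of $\A$'s oracle queries to $\widetilde{\Qgrav}$, obtaining at the end of the query-and-learn phase a circuit for $\Ugrav$. $\Dist$ then prepares two copies of a sample $\ket{\BH}$ drawn from an efficient state $2$-design --- which may be used in place of a Haar random state because, as noted in the technical overview, the derivation in Lemma~\ref{lem:avg-random-state} only uses the degree-$2$ Haar moments of $\ket{\BH}\bra{\BH}$, so the conclusion of Theorem~\ref{thm:fidelity-upperbd-random} is unaffected. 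It applies $\Ugrav$ to one copy and queries $\widetilde{\Qgrav}$ on the other, then performs a SWAP test on the two resulting pure states and outputs the outcome. Since the SWAP test accepts with probability $(1 + |\braket{\alpha}{\beta}|^2)/2$ on pure states, $\Pr[\Dist = 1] = (1 + p_i)/2$ in the respective hybrids, and PRU pseudorandomness forces $|p_0 - p_1|\leq\negl(\Hn)$.

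The second hop is symmetric but reduces to PRS security. I will build a $\poly(\Hn)$-size distinguisher $\Dist'_\Hn$ that receives two copies of a state $\widetilde{\ket{\BH}}$, drawn either from $\PRS_\Hn$ or from $\Haar_\Hd$. $\Dist'$ samples its own PRU key $\key\gets\Key_\Hn$ and uses the efficient generator $\Qalg_\Hn$ to simulate the oracle $\Qgrav = \pru_\key$ for $\A$; this simulation is possible because the PRU definition guarantees an efficient $\Qalg$. After obtaining $\Ugrav$ from $\A$, it applies $\Ugrav$ to one copy of $\widetilde{\ket{\BH}}$ and applies $\Qgrav$ (computed from $\key$) to the other, performs a SWAP test, and outputs the outcome. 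Its acceptance probability is $(1+p_1)/2$ on Haar input and $(1+p_2)/2$ on PRS input, so PRS security with $t=2$ copies gives $|p_1-p_2|\leq\negl(\Hn)$. Chaining the three bounds finishes the proof.

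The main technical care is in keeping the distinguishers themselves $\poly(\Hn)$-size --- which follows because $\A$, the PRU generator $\Qalg$, the $2$-design sampler, and the SWAP test are each efficient --- and in justifying the substitution of Haar by a $2$-design in the first hop. The one conceptual obstacle is the restriction that $\A$'s output $\Ugrav$ be unitary: this ensures $\Ugrav\ket{\BH}$ is a pure state, on which the SWAP test cleanly implements fidelity estimation. Extending to arbitrary quantum channels (as in Theorem~\ref{thm:intro-fidelity-general-learner}) would require either a fidelity-estimation primitive for mixed states or a different reduction entirely, which is why the quantum-channel generalization is handled separately in Section~\ref{sec:general-learner}.
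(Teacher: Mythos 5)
Your proof is correct and follows the same high-level hybrid strategy as the paper --- bound the Haar case by Theorem~\ref{thm:fidelity-upperbd-random}, then hop to $(\PRU,\Haar)$ and then to $(\PRU,\PRS)$ using the respective pseudorandomness guarantees. However, your instantiation of the two distinguishers differs from the paper's in ways worth noting. In the PRU hop, the paper's Lemma~\ref{lem:fidelity-rand-PRU} first applies an averaging argument to locate a single state $\ket{\phi_\Hn}$ witnessing a non-negligible fidelity gap, and hands $\ket{\phi_\Hn}^{\otimes 2}$ to the distinguisher as non-uniform advice; you instead sample $\ket{\BH}$ from an efficient 2-design, exploiting that the squared inner product $\left|\bra{\BH}\Ugrav^\dagger\widetilde{\Qgrav}\ket{\BH}\right|^2$ is a degree-2 polynomial in $\ket{\BH}\bra{\BH}$ so its 2-design average equals the Haar average. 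This avoids advice altogether and makes the distinguisher uniform. In the PRS hop, the paper's Lemma~\ref{lem:fidelity-rand-PRS} again averages to fix a single unitary $\Qgrav_\Hn$ and then invokes the efficiently-computable property (Definition~\ref{defn:efficient-U}) to obtain a poly-size circuit for it; you instead sample a fresh PRU key and use the efficient generator $\Qalg$ directly. Your route is more concrete and buys a uniform distinguisher, at the cost of being less general: the paper's two lemmas are stated for arbitrary state distributions $\Phi$ and arbitrary efficiently-computable unitary ensembles $\cD$, whereas your argument is tailored to $\Phi=\Haar$ and $\cD=\PRU$ (which is all that is needed for this theorem). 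Both reductions are valid and produce $\poly(\Hn)$-size distinguishers; the remaining bookkeeping (SWAP test acceptance probability, chaining the negligible terms, and the fact that $\SCcol=1$ when $\Ugrav$ is unitary) matches the paper.
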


For the reader concerned about existence and constructions of PRUs and PRSs, we make a brief  remark before proceeding.

\paragraph{Constructions of PRU and PRS.} Note that a construction of pseudorandom unitaries gives a construction of pseudorandom states (e.g. $U_k\ket{0}$ for a random $k\gets\Key_\secp$). Refs. \cite{JiLiu18, BraShm20} show that PRS can be constructed from any quantum-secure one-way function, and \cite{JiLiu18} gives candidate constructions for PRUs based on similar ideas. Ref. \cite{BouFef19} gives evidence for the pseudorandomness of a different construction of states (in the context of the AdS/CFT correspondence in quantum gravity). Much of the prior work on quantum pseudorandomness focuses on statistically secure quantum $k$-designs, and has only recently begun exploring potential constructions and applications of computationally secure PRU and PRS.

\subsection{From Random to Pseudorandom Unitary}\label{sec:fidelity-PRU}
\begin{lem}\label{lem:fidelity-rand-PRU}
For any pseudorandom unitary (PRU) ensemble $\PRU$ (Definition~\ref{defn:PRU}), any distribution ensemble of states $\Phi$ with probability measure $\nu_\Phi$, and any family of $\poly(\Hn)$-size quantum circuits $\A=\{\A_\Hn\}_\Hn$, there exists a negligible function $\neglf$ such that for every $\Hn\in\N$,
$$\left|\Eover{\substack{U\gets\Haar\\ \ket{\phi}\gets\Phi\\ \ket{\phi'}\gets\A^\Qgrav_\Hn(\ket{\phi})}}\[F\(\ket{\phi'},\Qgrav\ket{\phi}\)\]-\Eover{\substack{U\gets\PRU\\ \ket{\phi}\gets\Phi\\ \ket{\phi'}\gets\A^\Qgrav_\Hn(\ket{\phi})}}\[F\(\ket{\phi'},\Qgrav\ket{\phi}\)\]\right|\leq\neglf(\Hn)$$
where $F\(\ket{\phi'},\Qgrav\ket{\phi}\)=\left|\bra{\phi'} \Qgrav\ket{\phi}\right|^2$ and $\Haar$ is the Haar measure over $\Hd=2^\Hn$ dimensional unitaries.
\end{lem}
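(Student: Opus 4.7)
The strategy is a standard cryptographic reduction: assume for contradiction that the gap between the two expected fidelities is non-negligible, and use the learner $\A$ together with a SWAP test to build a polynomial-size distinguisher $\Dist$ that violates the computational pseudorandomness of $\PRU$ (Definition~\ref{defn:PRU}). The SWAP test is the natural bridge, since on inputs $\rho$ and $\sigma$ it accepts with probability $\tfrac{1}{2}(1+\Tr(\rho\sigma))$, which in our setting with pure $\sigma=U\ket{\phi}\bra{\phi}U^\dagger$ reduces to $\tfrac{1}{2}(1+F(\rho,\sigma))$.

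Concretely, suppose there is a non-negligible function $\mu$ such that the absolute value of the difference of the two expectations in the lemma exceeds $\mu(\Hn)$ for infinitely many $\Hn$. A one-line averaging argument over $\ket{\phi}\gets\Phi$ yields, for each such $\Hn$, a specific state $\ket{\phi^*}_\Hn$ in the support of $\Phi$ for which the signed gap survives after conditioning on $\ket{\phi}=\ket{\phi^*}_\Hn$. Crucially, we do \emph{not} assume that $\Phi$ is efficiently sampleable; instead we incorporate two copies of $\ket{\phi^*}_\Hn$ as non-uniform quantum advice into the distinguisher, which is legitimate since Definition~\ref{defn:PRU} allows $\Dist$ to be any non-uniform family of $\poly(\Hn)$-size quantum circuits.

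The distinguisher $\Dist_\Hn$ with oracle $\widetilde{\Qgrav}$ (either Haar random or drawn from $\PRU_\Hn$) operates as follows. First, it simulates the query-and-learn phase of $\A_\Hn$, forwarding each of $\A_\Hn$'s oracle calls to its own oracle $\widetilde{\Qgrav}$. Second, in the prediction phase it hands $\A_\Hn$ one of the advice copies of $\ket{\phi^*}_\Hn$ and collects the returned state $\ket{\phi'}$. Third, it makes one additional query, applying $\widetilde{\Qgrav}$ to the second advice copy to obtain $\widetilde{\Qgrav}\ket{\phi^*}_\Hn$. Finally, it performs the SWAP test between $\ket{\phi'}$ and $\widetilde{\Qgrav}\ket{\phi^*}_\Hn$ and outputs $1$ iff the test accepts (flipping the output bit if needed so that the signed advantage points the same way for every $\Hn$). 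Because $\A_\Hn$ is of polynomial size and the SWAP test uses only $O(\Hn)$ elementary gates, $\Dist_\Hn$ is polynomial size and makes only one more oracle query than $\A_\Hn$.

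Taking expectations over $\A_\Hn$'s internal randomness and over $\widetilde{\Qgrav}$ within each ensemble, the acceptance probability of $\Dist_\Hn^{\widetilde{\Qgrav}}$ equals $\tfrac{1}{2}$ plus half the conditional expected fidelity $\Eover{\widetilde{\Qgrav},\ket{\phi'}}[F(\ket{\phi'},\widetilde{\Qgrav}\ket{\phi^*}_\Hn)]$; the identity holds whether $\A_\Hn$'s output is pure or mixed, since $\Tr(\rho\cdot\widetilde{\Qgrav}\ket{\phi^*}_\Hn\bra{\phi^*}_\Hn\widetilde{\Qgrav}^\dagger)=\bra{\phi^*}_\Hn\widetilde{\Qgrav}^\dagger\rho\,\widetilde{\Qgrav}\ket{\phi^*}_\Hn$ is precisely the (generalized) fidelity. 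Consequently the absolute difference of $\Dist_\Hn$'s acceptance probabilities under $\Haar$ and $\PRU_\Hn$ equals exactly half the conditional fidelity gap, and is therefore at least $\mu(\Hn)/2$ for infinitely many $\Hn$, contradicting the computational pseudorandomness of $\PRU$. The only genuinely delicate step is the averaging-plus-advice argument in the second paragraph, which sidesteps the fact that $\Phi$ need not be efficiently sampleable; the rest is a routine SWAP-test reduction.
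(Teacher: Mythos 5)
Your proposal is correct and follows essentially the same reduction as the paper: select a state from the support of $\Phi$ by an averaging argument, hardwire two copies as non-uniform advice into the distinguisher, forward the learner's queries to the oracle, and compare the learner's prediction to the oracle's output via a SWAP test whose acceptance probability encodes the fidelity. The only cosmetic difference is the SWAP-test output convention (the paper has the distinguisher output the measurement bit, which is $1$ with probability $\frac{1}{2}-\frac{1}{2}F$, whereas you take the complementary "accept" convention with probability $\frac{1}{2}+\frac{1}{2}F$); this flips the sign but not the magnitude of the distinguishing advantage, so the contradiction with Definition~\ref{defn:PRU} goes through identically.
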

\begin{proof}
For readability, we abbreviate $\ket{\phi'}\gets\A^\Qgrav_\Hn(\ket{\phi})$ by $\ket{\phi'}\gets\A^\Qgrav$ and $F\(\ket{\phi'},\Qgrav\ket{\phi}\)$ by $F$.

Assume towards contradiction that there exists a pseudorandom unitary ensemble $\PRU=\{\PRU_\Hn\}_\Hn$ (Definition~\ref{defn:PRU}), a distribution ensemble of states $\Phi=\{\Phi_\Hn\}_\Hn$ with probability measure $\nu_\Phi=\{\nu_\Hn\}_\Hn$, a family of $\poly(\Hn)$-size quantum circuits $\A=\{\A_\Hn\}_\Hn$, and a polynomial $\p$ such that for infinitely many $\Hn\in\N$,
$$\left|\Eover{\substack{U\gets\Haar\\ \ket{\phi}\gets\Phi\\ \ket{\phi'}\gets\A^\Qgrav_\Hn(\ket{\phi})}}\[F\(\ket{\phi'},\Qgrav\ket{\phi}\)\]-\Eover{\substack{U\gets\PRU\\ \ket{\phi}\gets\Phi\\ \ket{\phi'}\gets\A^\Qgrav_\Hn(\ket{\phi})}}\[F\(\ket{\phi'},\Qgrav\ket{\phi}\)\]\right|\geq\frac{1}{\p(\Hn)}$$
or equivalently
$$\left|\;\int\limits_{\C^\Hd}
\Eover{\substack{U\gets\Haar\\ \ket{\phi'}\gets\A^\Qgrav}}\[F\]-\Eover{\substack{U\gets\PRU\\ \ket{\phi'}\gets\A^\Qgrav}}\[F\]
\;\;d\nu_\Phi(\ket{\phi})\;\right|\geq\frac{1}{\p(\Hn)}.$$

For any such $\Hn$, we construct a circuit $\Dist_\Hn$. This will construct a non-uniform distinguisher $\Dist=\{\Dist_\Hn\}_\Hn$ that breaks the pseudorandomness of $\PRU$ (Definition~\ref{defn:PRU}).
Consider any such $\Hn$. By the triangle inequality, we have
$$\int\limits_{\C^\Hd}\;\left|
\Eover{\substack{U\gets\Haar\\ \ket{\phi'}\gets\A^\Qgrav}}\[F\]-\Eover{\substack{U\gets\PRU\\ \ket{\phi'}\gets\A^\Qgrav}}\[F\]
\right|\;d\nu_\Phi(\ket{\phi})\geq\frac{1}{\p(\Hn)}.$$
Therefore there exists an $\Hn$ qubit state $\ket{\phi_\Hn}\in\C^\Hd$ such that
\begin{equation}\label{eqn:PRU-E-F-1/poly}
\left|
\Eover{\substack{U\gets\Haar\\ \ket{\phi'}\gets\A^\Qgrav_\Hn(\ket{\phi_\Hn})}}\[F\(\ket{\phi'},\Qgrav\ket{\phi_\Hn}\)\]-\Eover{\substack{U\gets\PRU\\ \ket{\phi'}\gets\A^\Qgrav_\Hn(\ket{\phi_\Hn})}}\[F\(\ket{\phi'},\Qgrav\ket{\phi_\Hn}\)\]
\right|\geq\frac{1}{\p(\Hn)}.
\end{equation}

Let $\Dist_\Hn$ have $\ket{\phi_\Hn}^{\otimes 2}$ as advice and proceed as follows.
\begin{enumerate}
\item $\Dist^\Qgrav_\Hn$ runs $\A_\Hn$ and simulates the oracle for $\A_\Hn$: it receives each query from $\A_\Hn$, sends the query to its oracle $\Qgrav$, and returns the response to $\A_\Hn$.
\item\label{item:PRU-A-and-U-outputs} $\Dist^\Qgrav_\Hn$ gives $\ket{\phi_\Hn}$ to $\A_\Hn$ and receives back $\ket{\phi'}$. It also gives $\ket{\phi_\Hn}$ to its oracle $\Qgrav$ and receives back $\Qgrav\ket{\phi_\Hn}$.
\item $\Dist^\Qgrav_\Hn$ performs the $\SWAP$ test\footnote{The $\SWAP$ test is a procedure in quantum computation that measures how much two quantum states differ. It takes as input two states 
$\ket{\phi}$ and 
$\ket{\psi}$ and outputs a Bernoulli random variable that is $1$ with probability 
${\frac {1}{2}}-{\frac {1}{2}}{|\langle \psi |\phi \rangle |}^{2}$.} between $\ket{\phi'}$ and $\Qgrav\ket{\phi_\Hn}$ and outputs the bit from the $\SWAP$ test.
\end{enumerate}
Note that $\Dist=\{\Dist_\Hn\}_\Hn$ runs in $\poly(\Hn)$-time since $\A=\{\A_\Hn\}_\Hn$ runs in $\poly(\Hn)$-time and the $\SWAP$ test is efficient.

Next we show $\Dist_\Hn$ distinguishes between $\Qgrav\gets\Haar$ and $\Qgrav\gets\PRU$. For any unitary $\Qgrav$ as the oracle and any state $\ket{\phi'}$ output by $\A_\Hn$ in Step~\ref{item:PRU-A-and-U-outputs},
\begin{align*}
\Pr\[\SWAP(\ket{\phi'},\Qgrav\ket{\phi_\Hn})=1\]&=\frac{1}{2}-\frac{1}{2}\left|\bra{\phi'}\Qgrav\ket{\phi_\Hn}\right|^2\\
&=\frac{1}{2}-\frac{1}{2}F(\ket{\phi'},\Qgrav\ket{\phi_\Hn}).
\end{align*}
Thus for Haar random $\Qgrav\gets\mu$ and the corresponding distribution of $\ket{\phi'}$ that $\A_\Hn$ outputs in Step~\ref{item:PRU-A-and-U-outputs},
\begin{align*}
\Pr_{\Qgrav\gets\mu}\[\Dist^\Qgrav_\Hn=1\]&=\Eover{\substack{\Qgrav\gets\mu\\ \ket{\phi'}\gets\A_\Hn\text{ in Step~\ref{item:PRU-A-and-U-outputs}}}}\[\Pr\[\SWAP(\ket{\phi'},\Qgrav\ket{\phi_\Hn})=1\]\]\\
&=\frac{1}{2}-\frac{1}{2}\Eover{\substack{\Qgrav\gets\mu\\ \ket{\phi'}\gets\A_\Hn\text{ in Step~\ref{item:PRU-A-and-U-outputs}}}}\[F(\ket{\phi'},\Qgrav\ket{\phi_\Hn})\]\\
&=\frac{1}{2}-\frac{1}{2}\Eover{\substack{U\gets\Haar\\ \ket{\phi'}\gets\A^\Qgrav_\Hn(\ket{\phi_\Hn})}}\[F\(\ket{\phi'},\Qgrav\ket{\phi_\Hn}\)\].
\end{align*}
where the last line transitions from $\ket{\phi'}$ and $\Qgrav\ket{\phi_\Hn}$ obtained in Step~\ref{item:PRU-A-and-U-outputs} of $\Dist^\Qgrav_\Hn$, to $\ket{\phi'}$ and $\Qgrav\ket{\phi_\Hn}$ defined in the original context where $\A$ has $\Qgrav$ as an oracle.
This holds since $\Dist^\Qgrav_\Hn$ emulates the oracle $\Qgrav\sim\Haar$ for $\A$ and gives it $\ket{\phi_\Hn}$ to predict for.

Similarly, for pseudorandom $\Qgrav\gets\PRU$ and the corresponding distribution of $\ket{\phi'}$ that $\A_\Hn$ outputs in Step~\ref{item:PRU-A-and-U-outputs},
$$\Pr_{\Qgrav\gets\PRU}\[\Dist^\Qgrav_\Hn=1\]=\frac{1}{2}-\frac{1}{2}\Eover{\substack{U\gets\PRU\\ \ket{\phi'}\gets\A^\Qgrav_\Hn(\ket{\phi_\Hn})}}\[F\(\ket{\phi'},\Qgrav\ket{\phi_\Hn}\)\].$$
Therefore
\begin{align*}
&\left|\Pr_{\Qgrav\gets\mu}\[\Dist^\Qgrav_\Hn=1\]-\Pr_{\Qgrav\gets\PRU}\[\Dist^\Qgrav_\Hn=1\]\right|\\
&\quad=\frac{1}{2}\left|\Eover{\substack{U\gets\Haar\\ \ket{\phi'}\gets\A^\Qgrav_\Hn(\ket{\phi_\Hn})}}\[F\(\ket{\phi'},\Qgrav\ket{\phi_\Hn}\)\]-\Eover{\substack{U\gets\PRU\\ \ket{\phi'}\gets\A^\Qgrav_\Hn(\ket{\phi_\Hn})}}\[F\(\ket{\phi'},\Qgrav\ket{\phi_\Hn}\)\]\right|.
\end{align*}
By Equation~\ref{eqn:PRU-E-F-1/poly},
$$\left|\Pr_{\Qgrav\gets\mu}\[\Dist^\Qgrav_\Hn=1\]-\Pr_{\Qgrav\gets\PRU}\[\Dist^\Qgrav_\Hn=1\]\right|\geq\frac{1}{2\,\p(\Hn)}.$$
This shows that for infinitely many $\Hn\in\N$, $\Dist=\{\Dist_\Hn\}_\Hn$ distinguishes between $\Qgrav\gets\Haar$ and $\Qgrav\gets\PRU$. Since $\Dist$ runs in $\poly(\Hn)$-time, this contradicts the pseudorandomness of $\PRU$ (Definition~\ref{defn:PRU}). Thus based on the pseudorandomness of $\PRU$, Lemma~\ref{lem:fidelity-rand-PRU} holds.
\end{proof}

\subsection{From Random to Pseudorandom State}\label{sec:fidelity-PRS}

We begin by formally defining distribution ensembles of unitaries that are efficiently computable. This property will be used in Lemma~\ref{lem:fidelity-rand-PRS} below, and is satisfied by any pseudorandom unitary ensemble (Definition~\ref{defn:PRU}).
\begin{defn}(Efficiently Computable Unitaries)\label{defn:efficient-U}
Let $\cD=\{\cD_\Hn\}_\Hn$ be a distribution ensemble of unitaries where each $\cD_\Hn$ is a distribution over $2^\Hn$ dimensional unitaries. The distribution ensemble of unitaries $\cD$ is \emph{efficiently computable} if for every sequence of unitaries $\{U_\Hn\}_\Hn$ where each $U_\Hn$ has positive support in $\cD_\Hn$, there exists a sequence of $\poly(\Hn)$-size quantum circuits $\Qcir=\{\Qcir_\Hn\}_\Hn$ such that for every $\Hn$, $\Qcir_\Hn(\ket{\phi})=U_\Hn\ket{\phi}$ for every $\Hn$ qubit state $\ket{\phi}$.
\end{defn}

\begin{lem}\label{lem:fidelity-rand-PRS}
For any efficiently computable distribution ensemble of unitaries $\cD$ (Definition~\ref{defn:efficient-U}) with probability measure $\nu_\cD$, any pseudorandom state (PRS) ensemble $\PRS$ (Definition~\ref{defn:PRS}), and any family of $\poly(\Hn)$-size quantum circuits $\A=\{\A_\Hn\}_\Hn$, there exists a negligible function $\neglf$ such that for every $\Hn\in\N$,
$$\left|\Eover{\substack{\Qgrav\gets\cD\\ \ket{\BH}\gets\Haar\\ \ket{\BH'}\gets\A^\Qgrav_\Hn(\ket{\BH})}}\[F\(\ket{\BH'},\Qgrav\ket{\BH}\)\]-\Eover{\substack{\Qgrav\gets\cD\\ \ket{\BH}\gets\PRS\\ \ket{\BH'}\gets\A^\Qgrav_\Hn(\ket{\BH})}}\[F\(\ket{\BH'},\Qgrav\ket{\BH}\)\]\right|\leq\neglf(\Hn)$$
where $F\(\ket{\BH'},\Qgrav\ket{\BH}\)=\left|\bra{\BH'} \Qgrav\ket{\BH}\right|^2$ and $\Haar$ is the Haar measure over $\Hd=2^\Hn$ dimensional quantum states.
\end{lem}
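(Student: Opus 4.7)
The plan is to mirror the reduction used in Lemma~\ref{lem:fidelity-rand-PRU}, replacing the PRU-to-Haar swap with a PRS-to-Haar swap. Assume for contradiction there exists $\cD$, $\PRS$, $\A=\{\A_\Hn\}$ and a polynomial $\p$ so that the fidelity gap is at least $1/\p(\Hn)$ for infinitely many $\Hn$. By expanding the expectation over $\Qgrav\gets\cD$ as an integral with respect to $\nu_\cD$ and applying the triangle inequality, for each such $\Hn$ there is a fixed unitary $\Qgrav_\Hn$ in the support of $\cD_\Hn$ satisfying
$$\left|\Eover{\substack{\ket{\BH}\gets\Haar\\ \ket{\BH'}\gets\A^{\Qgrav_\Hn}_\Hn(\ket{\BH})}}\[F\(\ket{\BH'},\Qgrav_\Hn\ket{\BH}\)\]-\Eover{\substack{\ket{\BH}\gets\PRS\\ \ket{\BH'}\gets\A^{\Qgrav_\Hn}_\Hn(\ket{\BH})}}\[F\(\ket{\BH'},\Qgrav_\Hn\ket{\BH}\)\]\right|\geq\frac{1}{\p(\Hn)}.$$
This is exactly the place where efficient computability of $\cD$ (Definition~\ref{defn:efficient-U}) is used: it guarantees that this fixed $\Qgrav_\Hn$ admits a $\poly(\Hn)$-size quantum circuit $\Qcir_\Hn$, which we can hardwire as (non-uniform) advice into the PRS distinguisher we build next.

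Construct a family $\Dist=\{\Dist_\Hn\}$ of $\poly(\Hn)$-size circuits which takes two copies of the challenge state as input (recall the PRS definition allows any $t=t(\Hn)$ copies; we use $t=2$). On input $\ket{\chi}^{\otimes 2}$, $\Dist_\Hn$ runs $\A_\Hn$ while simulating its oracle calls to $\Qgrav_\Hn$ by applying the hardwired circuit $\Qcir_\Hn$, then hands the first copy $\ket{\chi}$ to $\A_\Hn$ in the prediction phase and receives back $\ket{\chi'}$; in parallel $\Dist_\Hn$ applies $\Qcir_\Hn$ to the second copy to obtain $\Qgrav_\Hn\ket{\chi}$, and finally performs a $\SWAP$ test on $\ket{\chi'}$ and $\Qgrav_\Hn\ket{\chi}$, outputting the resulting bit. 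As in the PRU case, for any fixed $\ket{\chi}$ and any $\ket{\chi'}$ the $\SWAP$-test acceptance probability equals $\tfrac{1}{2}-\tfrac{1}{2}F(\ket{\chi'},\Qgrav_\Hn\ket{\chi})$, so averaging over the two input ensembles gives
$$\left|\Pr_{\ket{\chi}\gets\Haar}\[\Dist_\Hn(\ket{\chi}^{\otimes 2})=1\]-\Pr_{\ket{\chi}\gets\PRS}\[\Dist_\Hn(\ket{\chi}^{\otimes 2})=1\]\right|\geq\frac{1}{2\p(\Hn)},$$
contradicting the $t=2$ computational pseudorandomness of $\PRS$.

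The step I expect to be the main obstacle, and the reason the lemma needs the efficient-computability hypothesis on $\cD$, is providing the distinguisher with working oracle access to $\Qgrav_\Hn$ and with a second application of $\Qgrav_\Hn$ to the input copy: unlike in Lemma~\ref{lem:fidelity-rand-PRU}, where the challenge object (the unitary) is itself given as an oracle and the fixed classical advice was merely a state, here the distinguisher must implement the unitary internally in $\poly(\Hn)$ time, so we need a polynomial-size circuit for the particular $\Qgrav_\Hn$ picked by the averaging argument. A small but important bookkeeping point is that the distinguisher receives only $t=2$ copies, which is sufficient because $\A$ is invoked on a single copy and the comparison against $\Qgrav_\Hn\ket{\BH}$ consumes one additional copy; the remaining steps are the same routine SWAP-test manipulation as in the PRU case.
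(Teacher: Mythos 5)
Your proposal is correct and follows essentially the same argument as the paper: an averaging/triangle-inequality step to extract a fixed $\Qgrav_\Hn$, use of efficient computability to hardwire a polynomial-size circuit for $\Qgrav_\Hn$ into a non-uniform distinguisher, and a two-copy SWAP-test reduction to the $t=2$ pseudorandomness of $\PRS$. You also correctly identified why efficient computability is needed here but not in the PRU lemma, which is exactly the point the paper's proof relies on.
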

\begin{proof}
The proof of Lemma~\ref{lem:fidelity-rand-PRS} is similar to the proof of Lemma~\ref{lem:fidelity-rand-PRU} with the pseudorandom states here taking place of the pseudorandom unitaries in that proof. Nonetheless, we will provide the proof below for completeness.

For readability, we abbreviate $\ket{\BH'}\gets\A^\Qgrav_\Hn(\ket{\BH})$ by $\ket{\BH'}\gets\A^\Qgrav$ and $F\(\ket{\BH'},\Qgrav\ket{\BH}\)$ by $F$.

Assume towards contradiction that there exists an efficiently computable distribution ensemble of unitaries $\cD=\{\cD_\Hn\}_\Hn$ (Definition~\ref{defn:efficient-U}) with probability measure $\nu_\cD=\{\nu_\Hn\}_\Hn$, a pseudorandom state ensemble $\PRS=\{\PRS_\Hn\}_\Hn$ (Definition~\ref{defn:PRS}), a family of $\poly(\Hn)$-size quantum circuits $\A=\{\A_\Hn\}_\Hn$, and a polynomial $\p$ such that for infinitely many $\Hn\in\N$,
$$\left|\Eover{\substack{\Qgrav\gets\cD\\ \ket{\BH}\gets\Haar\\ \ket{\BH'}\gets\A^\Qgrav_\Hn(\ket{\BH})}}\[F\(\ket{\BH'},\Qgrav\ket{\BH}\)\]-\Eover{\substack{\Qgrav\gets\cD\\ \ket{\BH}\gets\PRS\\ \ket{\BH'}\gets\A^\Qgrav_\Hn(\ket{\BH})}}\[F\(\ket{\BH'},\Qgrav\ket{\BH}\)\]\right|\geq\frac{1}{\p(\Hn)}$$
or equivalently
$$\left|\;\int\limits_{\C^{\Hd\times\Hd}}
\Eover{\substack{\ket{\BH}\gets\Haar\\ \ket{\BH'}\gets\A^\Qgrav}}\[F\]-\Eover{\substack{\ket{\BH}\gets\PRS\\ \ket{\BH'}\gets\A^\Qgrav}}\[F\]
\;\;d\nu_\cD(\Qgrav)\;\right|\geq\frac{1}{\p(\Hn)}.$$

For any such $\Hn$, we construct a circuit $\Dist_\Hn$. This will construct a non-uniform distinguisher $\Dist=\{\Dist_\Hn\}_\Hn$ that breaks the pseudorandomness of $\PRS$ (Definition~\ref{defn:PRS}).
Consider any such $\Hn$. By the triangle inequality, we have
$$\int\limits_{\C^{\Hd\times\Hd}}\;\left|
\Eover{\substack{\ket{\BH}\gets\Haar\\ \ket{\BH'}\gets\A^\Qgrav}}\[F\]-\Eover{\substack{\ket{\BH}\gets\PRS\\ \ket{\BH'}\gets\A^\Qgrav}}\[F\]
\right|\;d\nu_\cD(\Qgrav)\geq\frac{1}{\p(\Hn)}.$$
Therefore there exists a $2^\Hn$ dimensional unitary $\Qgrav_\Hn\in\C^{\Hd\times\Hd}$ such that
\begin{equation}\label{eqn:PRS-E-F-1/poly}
\left|
\Eover{\substack{\ket{\BH}\gets\Haar\\ \ket{\BH'}\gets\A^{\Qgrav_\Hn}_\Hn(\ket{\BH})}}\[F\(\ket{\BH'},\Qgrav_\Hn\ket{\BH}\)\]-\Eover{\substack{\ket{\BH}\gets\PRS\\ \ket{\BH'}\gets\A^{\Qgrav_\Hn}_\Hn(\ket{\BH})}}\[F\(\ket{\BH'},\Qgrav_\Hn\ket{\BH}\)\]
\right|\geq\frac{1}{\p(\Hn)}.
\end{equation}
Since $\cD$ is efficiently computable (Definition~\ref{defn:efficient-U}), there exists a sequence $\Qcir=\{\Qcir_\Hn\}_\Hn$ of $\poly(\Hn)$-size quantum circuits where $\Qcir_\Hn$ computes $\Qgrav_\Hn$ for the infinitely many such $\Hn$.

Let $\Dist_\Hn\(\ket{\BH}^{\otimes 2}\)$, given as input two copies of a state $\ket{\BH}$, proceed as follows.
\begin{enumerate}
\item $\Dist_\Hn$ runs $\A_\Hn$ and simulates the oracle for $\A_\Hn$: it receives each query $\ket{\phi_i}$ from $\A_\Hn$, runs $\Qcir_\Hn(\ket{\phi_i})$ to obtain $\Qgrav_\Hn\ket{\phi_i}$, and returns $\Qgrav_\Hn\ket{\phi_i}$ to $\A_\Hn$.
\item\label{item:PRS-A-and-U-outputs} $\Dist_\Hn$ gives one copy of $\ket{\BH}$ to $\A_\Hn$ and receives back $\ket{\BH'}$. It uses the second copy to run $\Qcir_\Hn(\ket{\BH})$ and obtains $\Qgrav_\Hn\ket{\BH}$.
\item $\Dist_\Hn$ performs the $\SWAP$ test between $\ket{\BH'}$ and $\Qgrav_\Hn\ket{\BH}$ and outputs the bit from the $\SWAP$ test.
\end{enumerate}
Note that $\Dist=\{\Dist_\Hn\}_\Hn$ runs in $\poly(\Hn)$-time since $\A=\{\A_\Hn\}_\Hn$ runs in $\poly(\Hn)$-time and $\Dist$ runs $\Qcir=\{\Qcir_\Hn\}_\Hn$, which is $\poly(\Hn)$-size, at most $\poly(\Hn)$ times since $\A$ can only make polynomially many queries. Additionally the $\SWAP$ test is efficient.

Next we show $\Dist_\Hn$ distinguishes between $\ket{\BH}\gets\Haar$ and $\ket{\BH}\gets\PRS$. For any input state $\ket{\BH}$ and any state $\ket{\BH'}$ output by $\A_\Hn$ in Step~\ref{item:PRS-A-and-U-outputs},
\begin{align*}
\Pr\[\SWAP(\ket{\BH'},\Qgrav_\Hn\ket{\BH})=1\]&=\frac{1}{2}-\frac{1}{2}\left|\bra{\BH'}\Qgrav_\Hn\ket{\BH}\right|^2\\
&=\frac{1}{2}-\frac{1}{2}F(\ket{\BH'},\Qgrav_\Hn\ket{\BH}).
\end{align*}
Thus for Haar random $\ket{\BH}\gets\mu$ and the corresponding distribution of $\ket{\BH'}$ that $\A_\Hn$ outputs in Step~\ref{item:PRS-A-and-U-outputs},
\begin{align*}
\Pr_{\ket{\BH}\gets\mu}\[\Dist_\Hn\(\ket{\BH}^{\otimes 2}\)=1\]&=\Eover{\substack{\ket{\BH}\gets\mu\\ \ket{\BH'}\gets\A_\Hn\text{ in Step~\ref{item:PRS-A-and-U-outputs}}}}\[\Pr\[\SWAP(\ket{\BH'},\Qgrav_\Hn\ket{\BH})=1\]\]\\
&=\frac{1}{2}-\frac{1}{2}\Eover{\substack{\ket{\BH}\gets\mu\\ \ket{\BH'}\gets\A_\Hn\text{ in Step~\ref{item:PRS-A-and-U-outputs}}}}\[F(\ket{\BH'},\Qgrav_\Hn\ket{\BH})\]\\
&=\frac{1}{2}-\frac{1}{2}\Eover{\substack{\ket{\BH}\gets\Haar\\ \ket{\BH'}\gets\A^{\Qgrav_\Hn}_\Hn(\ket{\BH})}}\[F\(\ket{\BH'},\Qgrav_\Hn\ket{\BH}\)\].
\end{align*}
where the last line transitions from $\ket{\BH'}$ and $\Qgrav_\Hn\ket{\BH}$ obtained in Step~\ref{item:PRS-A-and-U-outputs} of $\Dist_\Hn$, to $\ket{\BH'}$ and $\Qgrav_\Hn\ket{\BH}$ defined in the original context where $\A$ has $\Qgrav_\Hn$ as an oracle.
This holds since $\Dist_\Hn$ emulates the oracle $\Qgrav_\Hn$ for $\A$ and gives it $\ket{\BH}\sim\Haar$ to predict for.

Similarly, for pseudorandom $\ket{\BH}\gets\PRS$ and the corresponding distribution of $\ket{\BH'}$ that $\A_\Hn$ outputs in Step~\ref{item:PRS-A-and-U-outputs},
$$\Pr_{\ket{\BH}\gets\PRS}\[\Dist_\Hn\(\ket{\BH}^{\otimes 2}\)=1\]=\frac{1}{2}-\frac{1}{2}\Eover{\substack{\ket{\BH}\gets\PRS\\ \ket{\BH'}\gets\A^{\Qgrav_\Hn}_\Hn(\ket{\BH})}}\[F\(\ket{\BH'},\Qgrav_\Hn\ket{\BH}\)\].$$

Therefore
\begin{align*}
&\left|\Pr_{\ket{\BH}\gets\mu}\[\Dist_\Hn\(\ket{\BH}^{\otimes 2}\)=1\]-\Pr_{\ket{\BH}\gets\PRS}\[\Dist_\Hn\(\ket{\BH}^{\otimes 2}\)=1\]\right|\\
&\quad=\frac{1}{2}\left|\Eover{\substack{\ket{\BH}\gets\Haar\\ \ket{\BH'}\gets\A^{\Qgrav_\Hn}_\Hn(\ket{\BH})}}\[F\(\ket{\BH'},\Qgrav_\Hn\ket{\BH}\)\]-\Eover{\substack{\ket{\BH}\gets\PRS\\ \ket{\BH'}\gets\A^{\Qgrav_\Hn}_\Hn(\ket{\BH})}}\[F\(\ket{\BH'},\Qgrav_\Hn\ket{\BH}\)\]\right|.
\end{align*}
By Equation~\ref{eqn:PRS-E-F-1/poly},
$$\left|\Pr_{\ket{\BH}\gets\mu}\[\Dist_\Hn\(\ket{\BH}^{\otimes 2}\)=1\]-\Pr_{\ket{\BH}\gets\PRS}\[\Dist_\Hn\(\ket{\BH}^{\otimes 2}\)=1\]\right|\geq\frac{1}{2\,\p(\Hn)}.$$
This shows that for infinitely many $\Hn\in\N$, $\Dist=\{\Dist_\Hn\}_\Hn$ distinguishes between $\ket{\BH}\gets\Haar$ and $\ket{\BH}\gets\PRS$. Since $\Dist$ runs in $\poly(\Hn)$-time, this contradicts the pseudorandomness of $\PRS$ (Definition~\ref{defn:PRS}). Thus based on the pseudorandomness of $\PRS$, Lemma~\ref{lem:fidelity-rand-PRS} holds.
\end{proof}

\subsection{Proof of Theorem~\ref{thm:fidelity-upperbd-PRU-PRS}}\label{sec:PR-thm-proof}
Our final lemma in this section establishes that the difference in predictions in the random and pseudorandom settings is negligible.

\begin{lem}\label{lem:neglPRU} For all quantities defined as in the statement of Theorem~\ref{thm:fidelity-upperbd-PRU-PRS}, there exist negligible functions $\eta$ and $\eta'$ such that for every $n \in \mathbb{N}$,
\begin{equation}
\left|\Eover{\substack{U\gets\Haar\\ \ket{\phi}\gets\Haar\\ \Ugrav\ket{\phi}\gets\A^\Qgrav_\Hn(\ket{\phi})}}\[F\(\Ugrav\ket{\phi},\Qgrav\ket{\phi}\)\]-\Eover{\substack{\Qgrav\gets\PRU\\ \ket{\BH}\gets\PRS\\ \Ugrav\ket{\BH}\gets\A^\Qgrav_\Hn(\ket{\BH})}}\[F\(\Ugrav\ket{\BH},\Qgrav\ket{\BH}\)
\]\right|\leq\neglf(\Hn)+\neglf'(\Hn).
\end{equation}
\end{lem}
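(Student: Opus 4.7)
The plan is a standard two-step hybrid argument chaining together Lemma~\ref{lem:fidelity-rand-PRU} and Lemma~\ref{lem:fidelity-rand-PRS}, using the triangle inequality on the intermediate "hybrid" distribution in which the unitary is pseudorandom but the input state is still Haar random.

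First I would introduce the hybrid quantity
\[
H(n) \;:=\; \Eover{\substack{\Qgrav \gets \PRU \\ \ket{\phi}\gets \Haar \\ \Ugrav\ket{\phi}\gets \A^\Qgrav_\Hn(\ket{\phi})}}\!\!\[F(\Ugrav\ket{\phi},\Qgrav\ket{\phi})\],
\]
and bound the target by
\[
\left|\,\text{(Haar,Haar)} - \text{(PRU,PRS)}\,\right| \;\leq\; \left|\,\text{(Haar,Haar)} - H(n)\,\right| \;+\; \left|\,H(n) - \text{(PRU,PRS)}\,\right|.
\]

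For the first term, I would apply Lemma~\ref{lem:fidelity-rand-PRU} with $\Phi_\Hn$ instantiated to the Haar distribution on $\Hn$-qubit states: since $\A$ is a family of $\poly(\Hn)$-size quantum circuits, the lemma directly yields a negligible function $\eta$ with $|\text{(Haar,Haar)} - H(n)| \leq \eta(\Hn)$. For the second term, I would apply Lemma~\ref{lem:fidelity-rand-PRS} with $\mathcal{D}_\Hn$ set to $\PRU_\Hn$. The one thing that needs to be checked before this application is that $\PRU_\Hn$ qualifies as an efficiently computable ensemble in the sense of Definition~\ref{defn:efficient-U}: this is immediate from the \textbf{Efficient Computation} clause of Definition~\ref{defn:PRU}, since for any sample $\pru_\key$ the circuit $\Qalg_\Hn(\key,\cdot)$ implements $\pru_\key$ in $\poly(\Hn)$ size (the key $\key$ can be hardcoded as non-uniform advice). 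Lemma~\ref{lem:fidelity-rand-PRS} then gives a negligible $\eta'$ with $|H(n) - \text{(PRU,PRS)}| \leq \eta'(\Hn)$.

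Combining the two bounds by the triangle inequality yields the claim with the required negligible function $\eta(\Hn) + \eta'(\Hn)$. I do not anticipate any genuine obstacle: the two replacement lemmas have already done all of the cryptographic work, and the only subtlety is the bookkeeping check that each lemma is applied with hypotheses its statement actually allows (the Haar measure is a valid choice of $\Phi$ for Lemma~\ref{lem:fidelity-rand-PRU}, and the PRU ensemble is a valid efficiently computable choice of $\mathcal{D}$ for Lemma~\ref{lem:fidelity-rand-PRS}). The entire argument is contained in this single hybrid; no new analytic or information-theoretic input is required.
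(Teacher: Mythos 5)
Your proof is correct and follows essentially the same route as the paper: introduce the hybrid with $\Qgrav\gets\PRU$, $\ket{\phi}\gets\Haar$, apply Lemma~\ref{lem:fidelity-rand-PRU} with $\Phi=\Haar$ and Lemma~\ref{lem:fidelity-rand-PRS} with $\cD=\PRU$ (after noting the PRU ensemble is efficiently computable in the sense of Definition~\ref{defn:efficient-U}), and chain by the triangle inequality. No gaps.
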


\begin{proof}[Proof of Lemma~\ref{lem:neglPRU}]
Consider any PRU ensemble $\PRU$, any PRS ensemble $\PRS$, and any family of $\poly(\Hn)$-size quantum circuits $\A=\{\A_\Hn\}_\Hn$ that given a state $\ket{\BH}$ outputs $\Ugrav\ket{\BH}$ where $\Ugrav$ is a unitary.

By Lemma~\ref{lem:fidelity-rand-PRU} for the PRU ensemble $\PRU$, the distribution ensemble of Haar random states as $\Phi$ (so $\nu_\Phi=\Haar$), and the family of quantum circuits $\A=\{\A_\Hn\}_\Hn$, there exists a negligible function $\neglf$ such that for every $\Hn\in\N$,
$$\left|\Eover{\substack{U\gets\Haar\\ \ket{\phi}\gets\Haar\\ \Ugrav\ket{\phi}\gets\A^\Qgrav_\Hn(\ket{\phi})}}\[F\(\Ugrav\ket{\phi},\Qgrav\ket{\phi}\)\]-\Eover{\substack{U\gets\PRU\\ \ket{\phi}\gets\Haar\\ \Ugrav\ket{\phi}\gets\A^\Qgrav_\Hn(\ket{\phi})}}\[F\(\Ugrav\ket{\phi},\Qgrav\ket{\phi}\)\]\right|\leq\neglf(\Hn).$$

Note that any PRU ensemble (Definition~\ref{defn:PRU}) is an efficiently computable distribution ensemble of unitaries (Definition~\ref{defn:efficient-U}).
By Lemma~\ref{lem:fidelity-rand-PRS} for the distribution ensemble of unitaries $\cD=\PRU$, the PRS ensemble $\PRS$, and the family of quantum circuits $\A=\{\A_\Hn\}_\Hn$, there exists a negligible function $\neglf'$ such that for every $\Hn\in\N$,
$$\left|\Eover{\substack{\Qgrav\gets\PRU\\ \ket{\BH}\gets\Haar\\ \Ugrav\ket{\BH}\gets\A^\Qgrav_\Hn(\ket{\BH})}}\[F\(\Ugrav\ket{\BH},\Qgrav\ket{\BH}\)\]-\Eover{\substack{\Qgrav\gets\PRU\\ \ket{\BH}\gets\PRS\\ \Ugrav\ket{\BH}\gets\A^\Qgrav_\Hn(\ket{\BH})}}\[F\(\Ugrav\ket{\BH},\Qgrav\ket{\BH}\)\]\right|\leq\neglf'(\Hn).$$
Therefore for every $\Hn\in\N$,
\begin{equation}\label{eqn:PR-fidelity-diff-rand-PR-negl}
\left|\Eover{\substack{U\gets\Haar\\ \ket{\phi}\gets\Haar\\ \Ugrav\ket{\phi}\gets\A^\Qgrav_\Hn(\ket{\phi})}}\[F\(\Ugrav\ket{\phi},\Qgrav\ket{\phi}\)\]-\Eover{\substack{\Qgrav\gets\PRU\\ \ket{\BH}\gets\PRS\\ \Ugrav\ket{\BH}\gets\A^\Qgrav_\Hn(\ket{\BH})}}\[F\(\Ugrav\ket{\BH},\Qgrav\ket{\BH}\)
\]\right|\leq\neglf(\Hn)+\neglf'(\Hn).
\end{equation}
\end{proof}

We now combine Lemmas~\ref{lem:fidelity-rand-PRU}, \ref{lem:fidelity-rand-PRS}, and \ref{lem:neglPRU} to show the main result of Section~\ref{sec:fidelity-upperbd-pseudorandom}: the hardness of learning a pseudorandom unitary applied to a pseudorandom state.

\begin{proof}[Proof of Theorem~\ref{thm:fidelity-upperbd-PRU-PRS}]
By Theorem~\ref{thm:fidelity-upperbd-random},
$$\Eover{\substack{\Qgrav\gets\Haar\\
\ket{\phi}\gets\Haar\\
\Ugrav\ket{\phi}\gets\A^\Qgrav_\Hn(\ket{\phi})}}\[F\(\Ugrav\ket{\phi},\Qgrav\ket{\phi}\)\]\leq 1-\Omega\(1\).$$
Combining this with Equation~\ref{eqn:PR-fidelity-diff-rand-PR-negl}, we have the desired bound
$$\Eover{\substack{\Qgrav\gets\PRU\\ \ket{\BH}\gets\PRS\\ \Ugrav\ket{\BH}\gets\A^\Qgrav_\Hn(\ket{\BH})}}\[F\(\Ugrav\ket{\BH},\Qgrav\ket{\BH}\)\]\leq 1-\Omega\(1\).$$
\end{proof}

\section{Hardness of Learning for Algorithms Predicting with Quantum Channels}\label{sec:general-learner} 

In this section we relax the requirement that the algorithm predicts using a single linear operator and show the hardness of learning unitary operators for quantum algorithms of bounded complexity that are allowed to use \emph{any} quantum operation -- quantum channel -- to predict the time evolution. We will therefore no longer assume that the learning algorithm produces a single linear operator as its hypothesis of the fundamental dynamics as was assumed for Theorems~\ref{thm:fidelity-upperbd-random} and~\ref{thm:fidelity-upperbd-PRU-PRS}. We now only assume the algorithm $\A$ produces a quantum channel $\qop$ as its hypothesis of the $\Qgrav$ and given the quantum state $\ket{\BH}$, it outputs $\rho=\qop(\ket{\BH}\bra{\BH})$ as its prediction for the time evolution of $\ket{\BH}$ (which in the fundamental description is $\Qgrav\ket{\BH}$). The state $\rho$ predicted by $\A$ may be a mixed state even if the initial state $\ket{\psi}$ is pure and may also be in a different, e.g. larger, Hilbert space. Below we discuss the motivation for allowing such predictions in the context of the black hole information paradox.

Our result in this section, Theorem~\ref{thm:fidelity-general-learners}, is the hardness of learning a randomly sampled unitary from a distribution with sufficiently high differential entropy, for a quantum algorithm predicting a general quantum channel and making a bounded number of queries to $\Qgrav$.  We show the techniques for Theorem~\ref{thm:fidelity-upperbd-random} work for predictions coming from such general quantum maps (which includes non-unitary maps and decohering channels) between the fundamental Hilbert space and any Hilbert space used by the algorithm. This also illustrates the derivation of a bound for general unitary ensembles as described above, not just Haar random or pseudorandom unitaries: the only necessary input is the differential entropy of $U$.

\begin{thm}[Hardness of Learning for Algorithms Predicting with Quantum Channels]\label{thm:fidelity-general-learners}
For $\Hn\in\N$ let $\mathcal{H}$ be a Hilbert space of dimension $\Hd=2^\Hn$, $\mathcal{U}$ be any distribution of unitary operators acting on $\Hil$, and $\Qgrav$ be a unitary randomly sampled from $\mathcal{U}$.
In the following, let $\Hil'$ denote any other Hilbert space and let $\mathcal{M}$ denote any fixed\footnote{By fixed, we mean that the map $\mathcal{M}$ is independent of the instance of $\Qgrav$, $\ket{\BH}$, and $\qop$ that is sampled or output; this is important also for consistency of our results with those of~\cite{AkeEng22}.} CPTP map from $\Hil'$ to $\Hil$. For any quantum algorithm $\A$ with quantum query access to $\Qgrav$ that produces a (circuit implementation of a) CPTP map $\qop:\Hil\to\Hil'$, and then given a state $\ket{\BH}$ outputs $\qop(\ket{\BH}\bra{\BH})$, for $\ket{\BH}$ a Haar random state in $\Hil$, the average fidelity between the image of $\A$'s prediction under ${\cal M}$, i.e. ${\cal M}\left (\qop(\ket{\BH}\bra{\BH}) \right)$ and  $\Qgrav\ket{\BH}\in\Hil$ is
$$\Eover{\substack{\Qgrav\gets\mathcal{U}\\
\ket{\BH}\gets\Haar\\
\qop(\ket{\BH}\bra{\BH})\gets\A^\Qgrav_\Hn(\ket{\BH})}}\[F\(\mathcal{M}\(\qop(\ket{\BH}\bra{\BH})\),\Qgrav\ket{\BH}\bra{\BH}\Qgrav^\dagger\)\]\leq 1-\frac{1}{8\pi e}\cdot\frac{\Hd^2}{e^{\frac{2}{\Hd^4}\[-h\(\left|\Qgrav\otimes\Qgrav\right|\)+\QCC\]}}
$$
where $h(|\Qgrav\otimes\Qgrav|)$ is the joint differential entropy of the elements of $\Qgrav\otimes\Qgrav$ after taking their norms, and the quantum communication complexity $\QCC=2\ell\Hn$ where $\ell$ is the number of queries $\A$ makes to $\Qgrav$.\end{thm}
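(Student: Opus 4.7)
The plan is to reduce Theorem~\ref{thm:fidelity-general-learners} to a tensor-product version of the single-operator argument used to prove Theorem~\ref{thm:fidelity-upperbd-random}. Concretely, I would use the Kraus representation of the composed CPTP map $\mathcal{M}\circ\qop$ to rewrite the channel fidelity as a sum of linear-operator overlaps, and then reuse the two-step strategy of Section~\ref{sec:fidelity-upperbd-random}: first average over $\ket{\BH}\gets\Haar_\Hd$ using Lemma~\ref{lem:avg-random-state}, and then bound the resulting algebraic term via continuous Fano's inequality (Proposition~\ref{thm:cts-Fano-ineq}) together with Corollary~\ref{cor:I-QCC}.

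First, since $\Qgrav\ket{\BH}$ is pure, the fidelity simplifies to $\bra{\BH}\Qgrav^\dagger\mathcal{M}(\qop(\ket{\BH}\bra{\BH}))\Qgrav\ket{\BH}$. Because $\mathcal{M}\circ\qop:\Hil\to\Hil$ is CPTP, it admits a Kraus decomposition $\mathcal{M}(\qop(\rho))=\sum_k E_k\rho E_k^\dagger$ with $E_k\in\C^{\Hd\times\Hd}$ and $\sum_k E_k^\dagger E_k=\Id$. Substituting gives $F=\sum_k|\bra{\BH}\Qgrav^\dagger E_k\ket{\BH}|^2$, which is a sum of precisely the quantities addressed by Lemma~\ref{lem:avg-random-state}. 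Averaging over $\ket{\BH}\gets\Haar_\Hd$ and applying that lemma termwise, the trace-preservation identity $\sum_k\Tr(E_k^\dagger E_k)=\Tr(\Id)=\Hd$ collapses the second-moment contributions, yielding
$$\Eover{\ket{\BH}\gets\Haar_\Hd}\[F\]=\frac{1}{\Hd(\Hd+1)}\(\sum_k|\Tr(\Qgrav^\dagger E_k)|^2+\Hd\).$$

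Next, I would expand $|\Tr(\Qgrav^\dagger E_k)|^2=\sum_{i,j,i',j'}\Qgrav^*_{ij}\Qgrav_{i'j'}E_{k,ij}E^*_{k,i'j'}$, apply the triangle inequality to its (real, non-negative) value, and interchange the $k$-sum with the index sums:
$$\sum_k|\Tr(\Qgrav^\dagger E_k)|^2\leq\sum_{i,j,i',j'}|\Qgrav_{ij}||\Qgrav_{i'j'}|\cdot\widetilde{E}_{(ij,i'j')},\quad \widetilde{E}_{(ij,i'j')}:=\sum_k|E_{k,ij}||E_{k,i'j'}|.$$
The key observation is that $|\Qgrav_{ij}||\Qgrav_{i'j'}|$ is exactly the modulus of an entry of $\Qgrav\otimes\Qgrav$, so the right-hand side is a Frobenius-type pairing between the $\Hd^2\times\Hd^2$ matrix $|\Qgrav\otimes\Qgrav|$ and the $\qop$-dependent matrix $\widetilde{E}$. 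I would then apply continuous Fano's inequality coordinate-wise with $X=|\Qgrav_{ij}||\Qgrav_{i'j'}|$ and estimator $\widehat{X}(\qop)=\widetilde{E}_{(ij,i'j')}$. The two second-moment terms, summed over the $\Hd^4$ coordinates, are both bounded by $\Hd^2$: the $\Qgrav$ side gives $\sum|\Qgrav_{ij}|^2|\Qgrav_{i'j'}|^2=(\sum_{ij}|\Qgrav_{ij}|^2)^2=\Hd^2$, and Cauchy--Schwarz combined with $\sum_{k,ij}|E_{k,ij}|^2=\Tr(\sum_k E_k^\dagger E_k)=\Hd$ gives $\sum\widetilde{E}_{(ij,i'j')}^2\leq\Hd^2$.

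Finally, to consolidate the $\Hd^4$ individual conditional entropies appearing in Fano's inequality into the joint quantity $h(|\Qgrav\otimes\Qgrav|\,|\,\qop)$, I would invoke sub-additivity of conditional differential entropy (i.e., $\sum h(X_i|Y)\geq h(X_1,\ldots,X_n|Y)$, which follows from the chain rule together with the fact that conditioning cannot increase entropy) and Jensen's inequality for the convex exponential, giving
$$\sum_{i,j,i',j'} e^{2h(|\Qgrav_{ij}\Qgrav_{i'j'}|\,|\,\qop)}\;\geq\;\Hd^4\cdot e^{\tfrac{2}{\Hd^4}\,h(|\Qgrav\otimes\Qgrav|\,|\,\qop)}.$$
Corollary~\ref{cor:I-QCC} combined with the data processing inequality (valid because $|\Qgrav\otimes\Qgrav|$ is a deterministic function of $\Qgrav$) then yields $h(|\Qgrav\otimes\Qgrav|\,|\,\qop)\geq h(|\Qgrav\otimes\Qgrav|)-\QCC$; the fixed CPTP map $\mathcal{M}$ does not affect this bound since $\mathcal{M}\circ\qop$ depends on $\Qgrav$ only through $\qop$. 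Substituting back and using $\Hd^3/(\Hd+1)\geq\Hd^2/2$ produces the stated inequality. The main obstacle I anticipate is the Kraus bookkeeping of the second step: verifying that the triangle-inequality slack is harmless, that the natural tensor-product $\Qgrav\otimes\Qgrav$ structure emerges cleanly on the $\Qgrav$ side, and that trace-preservation of $\mathcal{M}\circ\qop$ (pooled across all Kraus indices) supplies the necessary second-moment control on $\widetilde{E}$.
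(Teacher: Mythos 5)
Your proposal is correct and arrives at the same bound, but it handles the Kraus index $k$ by a genuinely different (and in my view cleaner) device than the paper. The paper's proof keeps the $k$-sum outside, and for each $k$ rescales the Kraus operator $\Kraus_k$ by the random variable $\Knorm_k = \tfrac{1}{\Hd}\sum_{i,j}|\Kraus_{kij}|^2$ so that its row-norm matches that of $\Qgrav$; it then applies continuous Fano's inequality with estimator $\widehat{X}(\{\Kraus_k\})=|\Kraus_{kij}\Kraus_{k\ell m}|/\Knorm_k$, relying on $\sum_k \Knorm_k = 1$ at the very end. As written, that argument moves $\Knorm_k$ in and out of the expectation $\Eover{\Qgrav,\{\Kraus_k\}}$ even though $\Knorm_k$ is itself random, which is an informality that your route sidesteps entirely: you pull the $k$-sum inside first, define the pooled estimator $\widetilde{E}_{(ij,i'j')}=\sum_k|E_{k,ij}||E_{k,i'j'}|$ (a bona fide deterministic function of the channel), and control its second moment by the Cauchy--Schwarz identity $\sum_{k,k'}(\sum_{ij}|E_{k,ij}||E_{k',ij}|)^2 \le (\sum_k\Tr E_k^\dagger E_k)^2 = \Hd^2$, where the trace-preservation constraint of $\mathcal{M}\circ\qop$ does the ``balancing'' the paper sought. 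From there both approaches agree: Lemma~\ref{lem:avg-random-state} applied termwise (with $\sum_k\Tr(E_k^\dagger E_k)=\Hd$ cleaning up the second term), continuous Fano coordinate-wise on $X=|\Qgrav_{ij}||\Qgrav_{i'j'}|$, Jensen on the $\Hd^4$ exponentials, sub-additivity of conditional differential entropy, and Corollary~\ref{cor:I-QCC} plus data processing to replace $h(|\Qgrav\otimes\Qgrav|\,|\,\cdot)$ by $h(|\Qgrav\otimes\Qgrav|)-\QCC$. Your final bound $1-\tfrac{1}{4\pi e}\tfrac{\Hd^3}{\Hd+1}e^{\tfrac{2}{\Hd^4}(h-\QCC)}$ dominates the stated $1-\tfrac{1}{8\pi e}\Hd^2 e^{-\tfrac{2}{\Hd^4}(-h+\QCC)}$ via $\tfrac{\Hd^3}{\Hd+1}\ge\tfrac{\Hd^2}{2}$. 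Two small points: be sure that when you invoke Corollary~\ref{cor:I-QCC} you frame party $B$ as outputting a \emph{specific} Kraus representation $\{E_k\}$ (as the paper does), since $\widetilde{E}$ depends on that choice and the fidelity expansion requires a concrete decomposition; and note that it is the trace-preservation of the \emph{composite} $\mathcal{M}\circ\qop$ that gives $\sum_k E_k^\dagger E_k=\Id$, which is why the bound is insensitive to the choice of the fixed $\mathcal{M}$.
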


\medskip

In the context of the black hole information problem or AdS/CFT, $\Hil$ should be thought of as the fundamental Hilbert space, $\Hil'$ the Hilbert space of the effective description, ${\cal M}$ maps the latter into the former, and $\qop$ a computationally-bounded reconstruction map.

To measure the fidelity between $\A$'s prediction $\qop(\ket{\BH}\bra{\BH})$ in $\Hil'$ and the fundamental time evolution $\Qgrav\ket{\BH}$ in $\Hil$, we have to map $\A$'s prediction to $\Hil$ via some map ${\cal M}$ and then measure correlation as the fidelity between its image and $\Qgrav\ket{\BH}\bra{\BH}\Qgrav^\dagger$, now both in $\Hil$. $\mathcal{M}$ can be \emph{any} CPTP map, with no bound on its complexity. It can be, for instance, any (combination) of the following CPTP maps:
\begin{itemize}
\item any isometry (a single linear operator ${\cal M}:\Hil'\to\Hil$ such that ${\cal M}^\dagger {\cal M}=I$).
\item if $\Hil'=\Hil\otimes\Hil''$ (or can be mapped to such via a CPTP map):
\begin{itemize}
\item the partial trace over $\Hil''$ (which in general increases rank).
\item measurement over $\Hil''$ (which for instance would include post-selection.
\end{itemize}
\end{itemize}
One may wonder why using an auxiliary map $\mathcal{M}$ does not artificially inflate the fidelity between $\A$'s prediction and the fundamental time evolution.
The important property here is  $\mathcal{M}$ is \emph{fixed}: independent of $\Qgrav$ and $\ket{\BH}$ (the sources of complexity (randomness)) and the model $\qop$ that $\A$ learns (the model of a bounded observer). This prevents $\mathcal{M}$ from artificially bridging the gap `in complexity' by depending on the particular instance of $\Qgrav$, $\ket{\BH}$, and $\qop$. What underlies the proof of Theorem~\ref{thm:fidelity-general-learners} is that if $\Qgrav$ has high differential entropy (i.e. less negative $h$), then any algorithm $\A$ making a bounded number of queries (i.e. bounded $\QCC$) to $\Qgrav$ can only learn $\qop$ that has limited (sub-maximal) correlation with $\Qgrav$. More precisely, $\qop$ has an operator-sum representation and the correlation between each operator and $\Qgrav$ is limited.
Post-processing outputs of $\qop$ (by applying a fixed CPTP map $\mathcal{M}$) will not increase the correlation (only decreasing it if a poor choice of channel is used as $\mathcal{M}$).

By allowing the quantum learning algorithm $\A$ to use any quantum channel $\qop$ instead of a single linear operator $\SCgrav$, we leave open the possibility that $\A$ outputs any quantum state $\rho$ as its prediction; $\rho$ may even be a \textbf{mixed} state in a \textbf{larger} Hilbert space, even though the fundamental time evolution $\Qgrav\ket{\BH}$ is a pure state in $\Hil$. A learning algorithm $\A$ may not necessarily perform better by using a channel $\qop$ to predict evolution by a unitary operator $\Qgrav$.
Our purpose in proving a fidelity bound for any algorithm $\A$ that predicts using \emph{any quantum channel} is to more closely model Hawking's semiclassical prediction for the time evolution of an evaporating black hole.

If Hawking's calculation is indeed a result of coarse-graining over complexity, as suggested by prior works~\cite{EngWal17b, EngPen21b, AkeEng22}, then in our model, the semiclassical prediction is the ``simple'' description produced by a general quantum learning algorithm (quantum channels) and quantum gravity dynamics are modeled as a highly random (or apparently random) unitary, then a fidelity bound for such algorithms would suggest that the inconsistency of Hawking's prediction can be traced back to a bound in computational complexity. 

Let us now discuss the assumption of high differential entropy in this and more general contexts.

\paragraph{Extending to bounds for random and pseudorandom $\Qgrav$.} We have focused on the most general result that we can obtain -- i.e. without subscribing to a particular choice of measure or ensemble for the operator $\Qgrav$. Our results are purely in terms of differential entropy, and we may expect that operators $U\otimes U$ of high differential entropy will yield bounds of similar order to Theorems~\ref{thm:fidelity-upperbd-random} and~\ref{thm:fidelity-upperbd-PRU-PRS}. While we have gone through a rigorous check of this, we expect that one may be able to compute an explicit bound on the average fidelity when $\Qgrav$ is a Haar random or pseudorandom unitary (an analog of Theorem~\ref{thm:fidelity-upperbd-random} for quantum channels $\qop$ instead of single operator models $\SCgrav$ as was assumed there), using some combination of the approximation theorem of~\cite{Jia10} (Proposition~\ref{thm:approx-Haar-U}) and our techniques in the proofs above  to replace the differential entropies of elements of $\Qgrav$ by known differential entropies.

\medskip We now proceed to the proof of Theorem~\ref{thm:fidelity-general-learners}.

\subsection{Proof of Theorem~\ref{thm:fidelity-general-learners}}

\medskip

We will use the following result to show our bound for algorithms that predict general quantum channels.
By Kraus' theorem, a quantum operation, mapping one quantum state to another, is a completely positive trace preserving (CPTP) map that has the following representation. This is also known as the operator-sum representation. For a reference see e.g.~\cite{NieChu11}.

\begin{prop}[Kraus' theorem]\label{thm:Kraus}
Let $\Hil_1$ be any Hilbert space of dimension $\Hd_1$, $\Hil_2$ be any Hilbert space of dimension $\Hd_2$, and $\qop$ be any quantum operation mapping density operators in $\Hil_1$ to density operators in $\Hil_2$. There exist $M\leq\Hd_1\Hd_2$ matrices $\{\Kraus_{k}\}_{k\in[M]}$ called the \emph{Kraus operators} such that for any $\rho\in\Hil_1$, $\qop$ maps $\rho$ to
$$\qop(\rho)=\sum_{k\in[M]}\Kraus_{k}\rho \Kraus_{k}^\dagger.$$
Conversely, any map of this form such that $\sum_{k}\Kraus_{k}^\dagger \Kraus_{k}=I$ is a quantum operation.
\end{prop}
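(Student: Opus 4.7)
The plan is to prove Kraus' theorem via the Choi--Jamio{\l}kowski isomorphism, which translates the structural question about completely positive trace preserving (CPTP) maps into a spectral question about positive semidefinite operators.

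First I would dispatch the converse direction, which is routine. Given any family $\{\Kraus_k\}$ with $\sum_k \Kraus_k^\dagger \Kraus_k = \Id$, the map $\rho \mapsto \sum_k \Kraus_k \rho \Kraus_k^\dagger$ is linear and trace preserving by cyclicity of the trace, and positivity on any extension $\Id_{\Hil_R} \otimes \qop$ follows because for any positive input $\sigma \geq 0$, each term $(\Id_{\Hil_R} \otimes \Kraus_k)\,\sigma\,(\Id_{\Hil_R} \otimes \Kraus_k)^\dagger$ is positive, and sums of positive operators are positive; hence complete positivity holds.

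For the forward direction, I would introduce the Choi matrix $J(\qop) := (\Id_{\Hil_1} \otimes \qop)(|\Omega\rangle\langle\Omega|)$, where $|\Omega\rangle = \sum_{i=1}^{\Hd_1} |i\rangle \otimes |i\rangle \in \Hil_1 \otimes \Hil_1$ is the unnormalized maximally entangled state. The key lemma is that complete positivity of $\qop$ is equivalent to $J(\qop) \succeq 0$ as an operator on $\Hil_1 \otimes \Hil_2$. One direction is immediate since $|\Omega\rangle\langle\Omega|\succeq 0$ and CP maps preserve positivity on the extended system. The other direction uses the reconstruction identity $\qop(\rho) = \Tr_{\Hil_1}\bigl[(\rho^T \otimes \Id_{\Hil_2})\,J(\qop)\bigr]$, valid by linearity once checked on the basis $\rho = |i\rangle\langle j|$.

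Since $J(\qop)$ is a positive semidefinite operator on a space of dimension $\Hd_1 \Hd_2$, it admits a spectral decomposition $J(\qop) = \sum_{k=1}^{M} |v_k\rangle\langle v_k|$ with $M \leq \Hd_1 \Hd_2$ (absorbing the eigenvalues into the vectors $|v_k\rangle \in \Hil_1 \otimes \Hil_2$). I would then define each Kraus operator $\Kraus_k : \Hil_1 \to \Hil_2$ by ``unvectorizing'' $|v_k\rangle$: explicitly $(\Kraus_k)_{ji} := (\langle i|_{\Hil_1} \otimes \langle j|_{\Hil_2})\,|v_k\rangle$. Substituting this definition into the reconstruction identity and tracking indices carefully would yield $\qop(\rho) = \sum_{k=1}^{M} \Kraus_k \rho \Kraus_k^\dagger$, with the bound on $M$ inherited from the rank of $J(\qop)$. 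Finally, the trace-preserving condition $\Tr[\qop(\rho)] = \Tr[\rho]$ for every $\rho$ translates to $\sum_k \Kraus_k^\dagger \Kraus_k = \Id$ by testing against $\rho = |i\rangle\langle j|$ and comparing matrix entries. The main obstacle will be purely notational: keeping consistent conventions for vectorization/matricization so that the transpose in the reconstruction identity is placed correctly, since a misplaced conjugate or transpose would give $\Kraus_k^T$ in place of $\Kraus_k$ and break the identification.
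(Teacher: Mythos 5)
Your proof is correct. Note, however, that the paper does not actually prove Proposition~\ref{thm:Kraus}: it is stated as a standard imported result with a pointer to the literature (Nielsen--Chuang), and is then used as a black box in the proof of Theorem~\ref{thm:fidelity-general-learners}. So there is no in-paper argument to compare against; what you have supplied is the standard modern proof via the Choi--Jamio{\l}kowski isomorphism, and all the steps check out. The converse direction is handled correctly (linearity, trace preservation by cyclicity, and complete positivity because conjugation by $\Id\otimes\Kraus_k$ preserves positivity term by term). For the forward direction, the reconstruction identity $\qop(\rho)=\Tr_{\Hil_1}\bigl[(\rho^{T}\otimes \Id_{\Hil_2})\,J(\qop)\bigr]$ is correct with your conventions, the spectral decomposition of the positive semidefinite Choi matrix on the $\Hd_1\Hd_2$-dimensional space gives exactly the claimed bound $M\leq\Hd_1\Hd_2$ via its rank, and the unvectorization $|v_k\rangle=(\Id\otimes\Kraus_k)|\Omega\rangle$ turns the rank-one terms of $J(\qop)$ into the conjugation terms $\Kraus_k\rho\Kraus_k^\dagger$. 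The closing step, extracting $\sum_k\Kraus_k^\dagger\Kraus_k=\Id$ from trace preservation on the matrix units $|i\rangle\langle j|$, is also fine. You are right that the only real hazard is the placement of the transpose; with $|\Omega\rangle=\sum_i|i\rangle\otimes|i\rangle$ and the transpose taken in that same basis, your index convention $(\Kraus_k)_{ji}=(\langle i|\otimes\langle j|)|v_k\rangle$ is the consistent one.
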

\noindent The number of Kraus operators does not matter for us, so we will denote $\{\Kraus_k\}_{k\in[M]}$ by $\{\Kraus_k\}$.

If $M=1$ then $\qop$ takes any $\rho=\ket{\psi}\bra{\psi}$ corresponding to a pure state to $\qop(\rho)$ which is also a pure state.
If $M>1$ then $\qop$ can take $\rho=\ket{\psi}\bra{\psi}$ corresponding to a pure state to $\qop(\rho)$ which is a mixed state.

\begin{proof}[Proof of Theorem~\ref{thm:fidelity-general-learners}]

Let $\Hil$ be any Hilbert space and $\cal{U}$ be any distribution of unitaries in $\Hil$. Let $\A$ be any quantum algorithm, given oracle access to $\Qgrav\gets\mathcal{U}$.  During the ``query and learn'' phase, we assume $\A^{\Qgrav}_\Hn$ learns, as its model of $\Qgrav$, the description of a quantum operation (CPTP map) $\qop:\Hil\to\Hil'$ where $\Hil'$ can be any Hilbert space, not necessarily $\Hil$. Let $\{E_i\}_{i}$ be any set of Kraus operators for the CPTP map $\qop$, guaranteed to exist by Proposition~\ref{thm:Kraus}. 
Given $\ket{\BH}$ in the prediction phase, $\A$ applies the CPTP map $\qop$ and outputs
$$\rho=\qop(\ket{\BH}\bra{\BH})=\sum_{i\in[N]} E_i\ket{\BH}\bra{\BH}E_i^\dagger.$$
Consider any CPTP map $\mathcal{M}$ from $\Hil'$ to $\Hil$ that is fixed, i.e. independent of the instances of $\Qgrav$, $\ket{\BH}$, and $\qop$. Let $\{M_j\}_j$ be a fixed set of Kraus operators for $\mathcal{M}$. We now apply $\mathcal{M}$ to the prediction $\rho$ in order to define and assess its fidelity with the fundamental time evolution $\Qgrav\ket{\BH}$. Let
$$\rho':=\mathcal{M}(\rho)=\sum_{j}M_j\(\sum_{i\in[N]} E_i\ket{\BH}\bra{\BH}E_i^\dagger\)M_j^\dagger.$$
$(\mathcal{M}\circ\qop)$ is a CPTP map (a ``concatenated channel'') from $\Hil$ to $\Hil$, with Kraus operators $\{\Kraus_k\in\C^{\Hd\times\Hd}\}_k:=\{M_j E_i\}_{j,i}$. (The number of $\Kraus_k$ may be larger than as guaranteed in Proposition~\ref{thm:Kraus}. This does not matter for this proof.) The operators $\{\Kraus_k\}_k$ depend on and are defined by the random variables $\Qgrav\gets\mathcal{U}$ and $\qop\sim\A^\Qgrav_\Hn$, and the Kraus operators $\{E_i\}_i$ chosen for $\qop$.

\medskip
We now proceed to consider the fidelity between $\rho'$ and $\Qgrav\ket{\BH}\bra{\BH}\Qgrav^\dagger$. Consider any unitary $\Qgrav\in\C^{\Hd\times\Hd}$, CPTP map $\qop$ with Kraus operators $\{E_i\in\C^{\Hd'\times\Hd}\}$, and state $\ket{\BH}\in\C^{\Hd}$. As defined above, let $\{\Kraus_k\in\C^{\Hd\times\Hd}\}$ be Kraus operators for the CPTP map $(\mathcal{M}\circ\qop)$ and
$$\rho'=(\mathcal{M}\circ\qop)(\ket{\BH}\bra{\BH})=\sum\limits_k\Kraus_k\ket{\BH}\bra{\BH}\Kraus_k^\dagger.$$
Let $F(\rho',\Qgrav\ket{\BH}\bra{\BH}\Qgrav^\dagger)$ be the quantum state fidelity between $\rho'$ and $\Qgrav\ket{\BH}\bra{\BH}\Qgrav^\dagger$. The fidelity is symmetric, $$F(\rho',\Qgrav\ket{\BH}\bra{\BH}\Qgrav^\dagger)=F(\Qgrav\ket{\BH}\bra{\BH}\Qgrav^\dagger,\rho').$$
For any pure state $\tau=\ket{\phi}\bra{\phi}$ and density operator $\sigma$,
$$F(\tau,\sigma)=\left(\operatorname {Tr} {\sqrt {|\phi \rangle \langle \phi |\sigma |\phi \rangle \langle \phi |}}\right)^{2}=\langle \phi |\sigma |\phi \rangle \left(\operatorname {Tr} {\sqrt {|\phi \rangle \langle \phi |}}\right)^{2}=\langle \phi |\sigma |\phi \rangle.$$
Thus
\begin{align*}
F(\rho',\Qgrav\ket{\BH}\bra{\BH}\Qgrav^\dagger)&=\bra{\BH}\Qgrav^\dagger\rho'\Qgrav\ket{\BH}\\
&=\bra{\BH}\Qgrav^\dagger\(\sum_k \Kraus_k\ket{\BH}\bra{\BH}\Kraus_k^\dagger\)\Qgrav\ket{\BH}\\
&=\sum_k \bra{\BH}\Qgrav^\dagger\Kraus_k\ket{\BH}\bra{\BH}\Kraus_k^\dagger\Qgrav\ket{\BH}\\
&=\sum_k\left|\bra{\BH}\Kraus_k^\dagger\Qgrav\ket{\BH}\right|^2.
\end{align*}

Taking the average over Haar random $\ket{\BH}\gets\Haar$, the above equation becomes
\begin{equation}\label{eqn:fidelity-general-learner-avg-state-sum_k}
\Eover{\ket{\BH}\gets\Haar}\[F(\rho',\Qgrav\ket{\BH}\bra{\BH}\Qgrav^\dagger)\]=\sum_k \Eover{\ket{\BH}\gets\Haar}\left|\bra{\BH}\Kraus_k^\dagger\Qgrav\ket{\BH}\right|^2.
\end{equation}
Consider any $k$.
By Lemma~\ref{lem:avg-random-state} for unitary $\Qgrav$ and matrix $\Kraus_k$, we have the following average, over Haar random $\ket{\BH}$, of the squared inner product between $\Kraus_k\ket{\BH}$ and $\Qgrav\ket{\BH}$,
$$\Hd(\Hd+1)\(\Eover{\ket{\BH}\gets\Haar}\left|\bra{\BH}\Kraus_k^\dagger\Qgrav\ket{\BH}\right|^2\)=\left|\sum_{i,j\in[\Hd]}\Kraus^*_{kij}\Qgrav_{ij}\right|^2+\Tr(\Kraus_k^\dagger\Kraus_k).$$
Using the above equation for every $k$, and then $\sum\limits_k\Kraus_k^\dagger\Kraus_k=I$, Equation~\ref{eqn:fidelity-general-learner-avg-state-sum_k} becomes
\begin{align*}
\Eover{\ket{\BH}\gets\Haar}\[F(\rho',\Qgrav\ket{\BH}\bra{\BH}\Qgrav^\dagger)\]&=\frac{1}{\Hd(\Hd+1)}\sum_k\left|\sum_{i,j\in[\Hd]}\Kraus^*_{kij}\Qgrav_{ij}\right|^2+\sum_k\frac{\Tr(\Kraus_k^\dagger\Kraus_k)}{\Hd(\Hd+1)}\\
&=\frac{1}{\Hd(\Hd+1)}\sum_k\left|\sum_{i,j\in[\Hd]}\Kraus^*_{kij}\Qgrav_{ij}\right|^2+\frac{1}{(\Hd+1)}.
\end{align*}

We now take the average of this last equation over the distribution of $\Qgrav$ and $\{\Kraus_k\}$ that we have in our context ($\Qgrav\gets\mathcal{U}$ and $\{\Kraus_k\}$ is defined by the CPTP map $\qop$ that $\A^\Qgrav_\Hn$ learns, the Kraus operators for $\qop$, and the fixed Kraus operators $\{M_j\}$ for $\mathcal{M}$).
\begin{equation}\label{eqn:fidelity-general-learner-avg-F-upperbd-sumOk*U}
\Eover{\substack{\Qgrav\gets\mathcal{U}\\\qop\sim\A^\Qgrav_\Hn\\\ket{\BH}\gets\Haar}}\[F(\rho',\Qgrav\ket{\BH}\bra{\BH}\Qgrav^\dagger)\]=\frac{1}{\Hd(\Hd+1)}\sum_k\(\Eover{\Qgrav,\{\Kraus_k\}}\left|\sum_{i,j\in[\Hd]}\Kraus^*_{kij}\Qgrav_{ij}\right|^2\)+\frac{1}{(\Hd+1)}.
\end{equation}
To upper bound the right side, we focus on upper bounding, for each $k$, the inner product between $\Kraus_{k}$ and $\Qgrav$. Consider any $k$. Let the random variable $\Knorm_k:=\frac{1}{\Hd}\sum_{i,j\in[\Hd]} |\Kraus_{kij}|^2$.
\begin{align}
\Eover{\Qgrav,\{\Kraus_k\}}\left|\sum_{i,j\in[\Hd]}\Kraus^*_{kij}\Qgrav_{ij}\right|^2&=\Eover{\Qgrav,\{\Kraus_k\}}\[\(\sum_{i,j\in[\Hd]}\Kraus^*_{kij}\Qgrav_{ij}\)\(\sum_{\ell,\emm\in[\Hd]} \overline{\Kraus^*_{k\ell\emm}\Qgrav_{\ell\emm}}\)\]\nonumber\\
&=\Eover{\Qgrav,\{\Kraus_k\}}\[\sum_{i,j,\ell,\emm\in[\Hd]} \Kraus^*_{kij}\Qgrav_{ij}\Kraus_{k\ell\emm}\Qgrav^*_{\ell\emm}\]\nonumber\\
&\leq\Eover{\Qgrav,\{\Kraus_k\}}\[\sum_{i,j,\ell,\emm\in[\Hd]}\left|\Kraus_{kij}\Qgrav_{ij}\Kraus_{k\ell\emm}\Qgrav_{k\ell\emm}\right|\]\nonumber\\
&=\Knorm_k\sum_{i,j,\ell,\emm\in[\Hd]}\Eover{\Qgrav,\{\Kraus_k\}}\left|\frac{\Kraus_{kij}}{\sqrt{\Knorm_k}}\Qgrav_{ij}\frac{\Kraus_{k\ell\emm}}{\sqrt{\Knorm_k}}\Qgrav_{\ell\emm}\right|\label{eqn:fidelity-general-learner-upperbd-sum-Ok*U}
\end{align}

The scaling of the entries of $\Kraus_k$ by $\frac{1}{\sqrt{\Knorm_k}}$ in the last line above will allow us to ``balance out'' the norms of $\Kraus_k$ and $\Qgrav$ in the calculations to follow. This is needed to get a nontrivial bound on the average fidelity.

Next for any $i,j,\ell,\emm\in[\Hd]$, by continuous Fano's inequality (Proposition~\ref{thm:cts-Fano-ineq}) with $X=\left|\Qgrav_{ij}\Qgrav_{\ell\emm}\right|$, $Y=\{\Kraus_k\}_{k}$, and $\widehat{X}(Y)=\frac{\left|\Kraus_{kij}\Kraus_{k\ell\emm}\right|}{\Knorm_k}$,

$$\Eover{\Qgrav,\{\Kraus_k\}}\left|\frac{\Kraus_{kij}}{\sqrt{\Knorm_k}}\Qgrav_{ij}\frac{\Kraus_{k\ell\emm}}{\sqrt{\Knorm_k}}\Qgrav_{\ell\emm}\right|\leq\Eover{\Qgrav,\{\Kraus_k\}}\[\frac{\left|\Qgrav_{ij}\Qgrav_{\ell\emm}\right|^2}{2}\]+\Eover{\Qgrav,\{\Kraus_k\}}\[\frac{\left|\Kraus_{kij}\Kraus_{k\ell\emm}\right|^2}{{2}\Knorm_k^2}\]-\frac{e^{2h\(\left|\Qgrav_{ij}\Qgrav_{\ell\emm}\right|\middle|\{\Kraus_k\}\)}}{4\pi e}.$$
Using the above equation for every $i,j,\ell,\emm\in[\Hd]$, Equation~\ref{eqn:fidelity-general-learner-upperbd-sum-Ok*U} becomes

\begin{align*}
\Eover{\Qgrav,\{\Kraus_k\}}&\left|\sum_{i,j\in[\Hd]}\Kraus^*_{kij}\Qgrav_{ij}\right|^2\\
&\quad\quad\leq\Knorm_k\sum_{i,j,\ell,\emm\in[\Hd]}\(\Eover{\Qgrav,\{\Kraus_k\}}\frac{\left|\Qgrav_{ij}\Qgrav_{\ell\emm}\right|^2}{2}+\Eover{\Qgrav,\{\Kraus_k\}}\frac{\left|\Kraus_{kij}\Kraus_{k\ell\emm}\right|^2}{2\Knorm_k^2}-\frac{1}{4\pi e}e^{2h\(\left|\Qgrav_{ij}\Qgrav_{\ell\emm}\right|\middle|\{\Kraus_k\}\)}\)\\
&\quad\quad=\Knorm_k\(\frac{\Hd^2}{2}+\frac{\Hd^2}{2}-\frac{1}{4\pi e}\sum_{i,j,\ell,\emm\in[\Hd]}e^{2h\(\left|\Qgrav_{ij}\Qgrav_{\ell\emm}\right|\middle|\{\Kraus_k\}\)}\)\\
&\quad\quad=\Knorm_k\Hd^2-\frac{\Knorm_k}{4\pi e}\sum_{i,j,\ell,\emm\in[\Hd]}e^{2h\(\left|\Qgrav_{ij}\Qgrav_{\ell\emm}\right|\middle|\{\Kraus_k\}\)}.
\end{align*}

Returning to Equation~\ref{eqn:fidelity-general-learner-avg-F-upperbd-sumOk*U}, we now use the above equation for every $k$ to obtain the following. We then use $\sum_k\Knorm_k=1$ (which follows from the definition of the $\Knorm_k$ and $\sum_k\Kraus_k\Kraus_k^\dagger=I\in\C^{\Hd\times\Hd}$).
\begin{align}
\Eover{\substack{\Qgrav\gets\mathcal{U}\\\qop\sim\A^\Qgrav_\Hn\\\ket{\BH}\gets\Haar}}&\[F(\rho',\Qgrav\ket{\BH}\bra{\BH}\Qgrav^\dagger)\]\nonumber\\
&\quad\quad\leq\frac{1}{\Hd(\Hd+1)}\(\sum_k\Knorm_k\)\(\Hd^2-\frac{1}{4\pi e}\sum_{i,j,\ell,\emm\in[\Hd]}e^{2h\(\left|\Qgrav_{ij}\Qgrav_{\ell\emm}\right|\middle|\{\Kraus_k\}\)}\)+\frac{1}{(\Hd+1)} 
\nonumber\\
&\quad\quad=1-\frac{1}{\Hd(\Hd+1)}\(\frac{1}{4\pi e}\sum_{i,j,\ell,\emm\in[\Hd]}e^{2h\(\left|\Qgrav_{ij}\Qgrav_{\ell\emm}\right|\middle|\{\Kraus_k\}\)}\). 
\label{eqn:fidelity-general-learner-EF-leq-1-sum-e^2h}
\end{align}
It suffices to lower bound the sum of (exponentiated) differential entropies in order to upper bound the average fidelity.

By Jensen's inequality applied to the convex exponential function,
\begin{equation}\label{eqn:fidelity-general-learner-sum-e^2h-Jensens}
\sum_{i,j,\ell,\emm\in[\Hd]}e^{2h\(\left|\Qgrav_{ij}\Qgrav_{\ell\emm}\right|\middle|\{\Kraus_k\}\)} \geq\Hd^4\cdot e^{\frac{2}{\Hd^4}\[\sum\limits_{i,j,\ell,\emm\in[\Hd]} h\(\left|\Qgrav_{ij}\Qgrav_{\ell\emm}\right|\middle|\{\Kraus_k\}\)\]}. 
\end{equation}
By the conditional chain rule in Lemma~\ref{lem:h-cond-chain-rule}, and then by Lemma~\ref{lem:I-expr-h},
\begin{align}
\sum\limits_{i,j,\ell,\emm\in[\Hd]} h\(\left|\Qgrav_{ij}\Qgrav_{\ell\emm}\right|\middle|\{\Kraus_k\}\)&\geq h\(\{\left|\Qgrav_{ij}\Qgrav_{\ell\emm}\right|\}_{i,j,\ell,\emm\in[\Hd]}\middle|\{\Kraus_k\}\)\nonumber\\
&=h\(\{\left|\Qgrav_{ij}\Qgrav_{\ell\emm}\right|\}_{i,j,\ell,\emm\in[\Hd]}\)-I\(\{\left|\Qgrav_{ij}\Qgrav_{\ell\emm}\right|\}_{i,j,\ell,\emm\in[\Hd]};\{\Kraus_k\}\).\label{eqn:fidelity-general-learner-sum-h-lowerbd}
\end{align}

We define a protocol $\prot$ between two parties, party $A$ and party $B$ who are not bounded in runtime. Sample $\Qgrav\gets\mathcal{U}$ and give $\(\Qgrav,\{|\Qgrav_{ij}\Qgrav_{\ell\emm}|\}_{i,j,\ell,\emm\in[\Hd]}\)$ to party $A$ (and nothing to party $B$). We assume $A$ is given or can compute a sequence of quantum gates that approximates applying $U\in\C^{\Hd\times\Hd}$ (say within exponential errors). Party $B$ runs the quantum learning algorithm $\A$ and whenever it queries for a state (a set of qubits in its registers) $\ket{\phi_i}$, party $B$ sends those qubits to party $A$ who applies $\Qgrav$ via its sequence of gates and sends the qubits back, in state $\Qgrav\ket{\phi_i}$. (As mentioned, one can consider the learning algorithm $\A$ to query the oracle on qubits in its registers that may be either pure or mixed states, represented by density operators $\sigma_i$. The protocol here does not change -- party A still applies the gates for $\Qgrav$ and returns the qubits, now having density operator $\Qgrav\sigma_i\Qgrav^\dagger$.) This continues until $\A$ finishes its querying and computing, and has learned its CPTP map $\qop$ as its model of $\Qgrav$. ($\A$ may learn a description of $\qop$, e.g. its implementation via a  quantum circuit, defining the CPTP map $\qop$.) Party $B$ uses the description of $\qop$ to compute a set of Kraus operators $\{E_i\}_i$ for $\qop$. We assume there is a procedure to do this; recall party $B$ is not limited in runtime. Finally party $B$ uses the fixed $M_j$ to compute and output $\{\Kraus_k\}_k:=\{M_jE_i\}_{j,i}$.
By Corollary~\ref{cor:I-QCC}, 
\begin{equation}\label{eqn:fidelity-general-learner-I-QCC}
I\(\{\Kraus_k\}_k;\(\Qgrav,\{|\Qgrav_{ij}\Qgrav_{\ell\emm}|\}_{i,j,\ell,\emm\in[\Hd]}\)\)\leq\QCC(\prot).
\end{equation}

By the chain rule (Lemma~\ref{lem:I-chain-rule}) and the nonnegativity (Lemma~\ref{lem:I-nonneg}) properties of mutual information,
\begin{align*}
&I\(\(\Qgrav,\{|\Qgrav_{ij}\Qgrav_{\ell\emm}|\}_{i,j,\ell,\emm\in[\Hd]}\);\{\Kraus_k\}_k\)\\
&\quad\quad\quad\quad=I\(\{|\Qgrav_{ij}\Qgrav_{\ell\emm}|\}_{i,j,\ell,\emm\in[\Hd]};\{\Kraus_k\}_k\)+I\(\Qgrav;\{\Kraus_k\}_k|\{|\Qgrav_{ij}\Qgrav_{\ell\emm}|\}_{i,j,\ell,\emm\in[\Hd]}\)\\
&\quad\quad\quad\quad\geq I\(\{|\Qgrav_{ij}\Qgrav_{\ell\emm}|\}_{i,j,\ell,\emm\in[\Hd]};\{\Kraus_k\}_k\).
\end{align*}

Combining this with the $\QCC$ bound above (Equation~\ref{eqn:fidelity-general-learner-I-QCC}),
\begin{equation*}
\QCC(\prot)\geq I\(\{|\Qgrav_{ij}\Qgrav_{\ell\emm}|\}_{i,j,\ell,\emm\in[\Hd]};\{\Kraus_k\}_k\).
\end{equation*}
Using this and Equation~\ref{eqn:fidelity-general-learner-sum-h-lowerbd}, Equation~\ref{eqn:fidelity-general-learner-sum-e^2h-Jensens} becomes
\begin{equation}
\sum_{i,j,\ell,\emm\in[\Hd]}e^{2h\(\left|\Qgrav_{ij}\Qgrav_{\ell\emm}\right|\middle|\{\Kraus_k\}\)}\geq\Hd^4\cdot e^{\frac{2}{\Hd^4}\[h\(\{\left|\Qgrav_{ij}\Qgrav_{\ell\emm}\right|\}_{i,j,\ell,\emm\in[\Hd]}\)-\QCC(\prot).\]}.
\end{equation}
Using this in Equation~\ref{eqn:fidelity-general-learner-EF-leq-1-sum-e^2h},
\begin{align*}
\Eover{\substack{\Qgrav\gets\mathcal{U}\\\qop\sim\A^\Qgrav_\Hn\\\ket{\BH}\gets\Haar}}&\[F(\rho',\Qgrav\ket{\BH}\bra{\BH}\Qgrav^\dagger)\]\leq 1-\frac{1}{4\pi e}\(\frac{\Hd^3}{(\Hd+1)}\)\cdot e^{\frac{2}{\Hd^4}\[h\(\{\left|\Qgrav_{ij}\Qgrav_{\ell\emm}\right|\}_{i,j,\ell,\emm\in[\Hd]}\)-\QCC(\prot)\]}.
\end{align*}
We note that $\{\left|\Qgrav_{ij}\Qgrav_{\ell\emm}\right|\}_{i,j,\ell,\emm\in[\Hd]}$ are the matrix elements of $\Qgrav\otimes\Qgrav$ after taking their norms, and $\QCC(\prot)=2\ell\Hn$ where $\ell$ is the number of queries that $\A$ makes to $\Qgrav$.
Recalling that
$\rho'=\mathcal{M}\(\qop(\ket{\BH}\bra{\BH})\)$, this proves Theorem~\ref{thm:fidelity-general-learners}.
\end{proof}

\section*{Acknowledgments}
It is a pleasure to thank S. Aaronson, C. Akers, D. Harlow, A. Harrow, and E. Verheijden for helpful discussions. LY is supported by an NSF Graduate Research Fellowship and NCNS-215414. NE is supported in part by NSF grant no. PHY-2011905, by the U.S. Department of Energy Early Career Award DE-SC0021886, by the John Templeton Foundation and the Gordon and Betty Moore Foundation via the Black Hole Initiative, by the Sloan Foundation, and by funds from the MIT department of physics.

\appendix
\section{Technical Parts of Proof of Theorem~\ref{thm:fidelity-upperbd-random}}\label{sec:appendix-tech}

In this section we provide any proofs or details thereof that were too long or technical to include in the main body of the paper.

We first give an explicit definition of the expected value (which we use in all of our proofs), since this expected value notation is not commonly used in physics and since we work with continuous random variables. 
\begin{defn}\label{defn:expected-value}(Expected Value)
If $X$ is a real-valued random variable defined on a probability space $(\Omega,\Sigma,P)$, then the expected value of $X$, denoted by $\E[X]$, is defined as the Lebesgue integral
$$\E[X]=\int_{\Omega} X\,dP.$$
When $X$ depends on other random variables, e.g. $Y$, we write those random variables below $\E$, e.g. $\Eover{Y}[X]$.
\end{defn}

\noindent The remainder of this section contains parts of the proof of Theorem~\ref{thm:fidelity-upperbd-random}. 

\begin{proof}[Proof of Lemma~\ref{lem:fidelity-random-pf-F-sqrt-F-ineq}]
Consider any $\Qgrav$ and $\SCgrav$. Note that 
\begin{align*}
&F_{\SCgrav,\Qgrav}\leq\frac{1}{\Hd}\sqrt{\Hd(\Hd+1)F_{\SCgrav,\Qgrav}-\Hd\SCcol}+\frac{\SCcol}{\sqrt{\Hd}}\\
\iff&\Hd F_{\SCgrav,\Qgrav}-\SCcol\sqrt{\Hd}\leq\sqrt{\Hd(\Hd+1)F_{\SCgrav,\Qgrav}-\Hd\SCcol}\\
\iff&\Hd^2 F_{\SCgrav,\Qgrav}^2-2\SCcol\sqrt{\Hd}\Hd F_{\SCgrav,\Qgrav}+\SCcol^2\Hd\leq \Hd^2 F_{\SCgrav,\Qgrav}+\Hd F_{\SCgrav,\Qgrav}-\Hd\SCcol.
\end{align*}
Since $0\leq F_{\SCgrav,\Qgrav}\leq 1$ (because $\alpha\leq 1$), we have $0\leq \Hd F_{\SCgrav,\Qgrav}$ and $\Hd^2 F_{\SCgrav,\Qgrav}^2\leq \Hd^2 F_{\SCgrav,\Qgrav}$. So the last inequality above is true if $-2\SCcol\sqrt{\Hd}\Hd F_{\SCgrav,\Qgrav}+\SCcol^2\Hd\leq -\Hd\SCcol \iff 1+\SCcol\leq 2\sqrt{\Hd}F_{\SCgrav,\Qgrav}$.

Since either $F_{\SCgrav,\Qgrav}\leq\frac{1+\SCcol}{2\sqrt{\Hd}}$ or $F_{\SCgrav,\Qgrav}\geq\frac{1+\SCcol}{2\sqrt{\Hd}}$, by the above, either
$F_{\SCgrav,\Qgrav}\leq\frac{1}{2\sqrt{\Hd}}+\frac{\SCcol}{2\sqrt{\Hd}}$ or $F_{\SCgrav,\Qgrav}\leq\frac{1}{\Hd}\sqrt{\Hd(\Hd+1)F_{\SCgrav,\Qgrav}-\Hd\SCcol}+\frac{\SCcol}{\sqrt{\Hd}}$.
Thus the desired inequality holds:
$$F_{\SCgrav,\Qgrav}\leq \frac{1}{\Hd}\sqrt{\Hd(\Hd+1)F_{\SCgrav,\Qgrav}-\Hd\SCcol}+\frac{3\SCcol}{2\sqrt{\Hd}}+\frac{1}{2\sqrt{\Hd}}.$$
\end{proof}

\begin{proof}[Proof of Lemma~\ref{lem:fidelity-rand-upperbd-O-U}]

Consider any such distribution of $(\Vmat,\CN,\Qmat)$. We will consider expected values (averages) and probabilities over this joint distribution. 
Let $\Eapp$ be the event that ${\displaystyle \max_{\substack{i\in[\Hd]\\ j\in[\md(\Hd)]}}\left|\Vmat_{ij}-\CN_{ij}\right|\leq\epsilon(\Hd)}$ so we have ${\displaystyle \Pr_{\Vmat,\CN,\Qmat}[\neg\Eapp]\leq\delta(\Hd)}$. ($\Eapp$ only depends on $\Vmat,\CN$; the inclusion of $\Qmat$ is not necessary but will be useful.) The notation $|\Eapp$ denotes conditioning on the event $\Eapp$.
\begin{align*}
\sum_{\substack{i\in[\Hd]\\ j\in[\md(\Hd)]}} \(\Eover{\Vmat,\CN,\Qmat}\left|\Qmat^*_{ij}\Vmat_{ij}\right|\)
&\leq \Pr_{\Vmat,\CN,\Qmat}\[\Eapp\]\Eover{\Vmat,\CN,\Qmat|\Eapp}\[\sum_{\substack{i\in[\Hd]\\ j\in[\md(\Hd)]}} |\Qmat^*_{ij}\Vmat_{ij}|\]\\
&\quad\quad + \Pr_{\Vmat,\CN,\Qmat}\[\neg\Eapp\]\(\max_{\Vmat,\Qmat} \sum_{\substack{i\in[\Hd]\\ j\in[\md(\Hd)]}} |\Qmat^*_{ij}\Vmat_{ij}|\)\\
&\leq \Eover{\Vmat,\CN,\Qmat|\Eapp}\[\sum_{\substack{i\in[\Hd]\\ j\in[\md(\Hd)]}} |\Qmat^*_{ij}\Vmat_{ij}|\]+\delta(\Hd)\cdot\md(\Hd)\cdot\sqrt{\SCcol}
\end{align*}
where in the first line we maximize over the distribution, and the second line uses the Cauchy-Schwarz inequality to bound by the column norms of $\Qmat$ and $\Vmat$.
Furthermore, since whenever $\Eapp$ occurs, $|\Vmat_{ij}|\leq \epsilon(\Hd)+|\CN_{ij}|$, we have
\begin{align}
\sum_{\substack{i\in[\Hd]\\ j\in[\md(\Hd)]}}& \(\Eover{\Vmat,\CN,\Qmat}\left|\Qmat^*_{ij}\Vmat_{ij}\right|\)\leq\delta(\Hd)\cdot\md(\Hd)\cdot\sqrt{\SCcol}+\Eover{\Vmat,\CN,\Qmat|\Eapp}\(\sum_{\substack{i\in[\Hd]\\ j\in[\md(\Hd)]}} |\Qmat^*_{ij}|(\epsilon(\Hd)+|\CN_{ij}|)\) \nonumber\\
&=\delta(\Hd)\cdot\md(\Hd)\cdot\sqrt{\SCcol}+\Eover{\Vmat,\CN,\Qmat|\Eapp}\[\sum_{\substack{i\in[\Hd]\\ j\in[\md(\Hd)]}} |\Qmat^*_{ij}|\]\cdot\epsilon(\Hd)+\Eover{\Vmat,\CN,\Qmat|\Eapp}\[\sum_{\substack{i\in[\Hd]\\ j\in[\md(\Hd)]}} |\Qmat^*_{ij}||\CN_{ij}|\] \nonumber\\
&\leq\delta(\Hd)\cdot\md(\Hd)\cdot\sqrt{\SCcol}+\md(\Hd)\cdot\sqrt{\SCcol\Hd}\cdot\epsilon(\Hd)+\frac{1}{1-\delta(\Hd)}\cdot\Eover{\Vmat,\CN,\Qmat}\[\sum_{\substack{i\in[\Hd]\\ j\in[\md(\Hd)]}} |\Qmat^*_{ij}||\CN_{ij}|\]. \label{eqn:fidelity-random-pf-SC*Q-below-m(d)-CN}
\end{align}

Now we use continuous Fano's inequality (Proposition~\ref{thm:cts-Fano-ineq}) to bound the summand here. We also use that for a complex Gaussian variable $Z=\frac{X+iY}{\sqrt{2d}}$ sampled from $\CNdistr(0,\frac{1}{\Hd})$, we have $\E\(|Z|^2\)=\frac{\E(X^2+Y^2)}{2\Hd}=\frac{1}{\Hd}$.
\begin{align*}
\sum_{\substack{i\in[\Hd]\\ j\in[\md(\Hd)]}}\( \Eover{\Vmat,\CN,\Qmat}\[|\Qmat^*_{ij}||\CN_{ij}|\]\)&\leq -\frac{1}{4\pi e}\sum_{\substack{i\in[\Hd]\\ j\in[\md(\Hd)]}}e^{2h\(|\CN_{ij}||\Qmat\)}+\sum_{\substack{i\in[\Hd]\\ j\in[\md(\Hd)]}}\frac{\Eover{\Vmat,\CN,\Qmat}\(|\CN_{ij}|^2\)}{2}+\sum_{\substack{i\in[\Hd]\\ j\in[\md(\Hd)]}}\frac{\Eover{\Vmat,\CN,\Qmat}\(|\Qmat_{ij}|^2\)}{2}\\
&\leq -\frac{1}{4\pi e}\sum_{\substack{i\in[\Hd]\\ j\in[\md(\Hd)]}}e^{2h\(|\CN_{ij}||\Qmat\)}+\frac{\md(\Hd)}{2}+\frac{\md(\Hd)\cdot\SCcol}{2}
\end{align*}
Using this bound on the summand and $\frac{1}{1-\delta(\Hd)}\geq 1$, Equation~\ref{eqn:fidelity-random-pf-SC*Q-below-m(d)-CN} becomes
\begin{equation}\label{eqn:fidelity-random-pf-SC*Q-below-m(d)-e^h}
\sum_{\substack{i\in[\Hd]\\ j\in[\md(\Hd)]}} \(\Eover{\Vmat,\CN,\Qmat}\left|\Qmat^*_{ij}\Vmat_{ij}\right|\)\leq\md(\Hd)\(\epsilon(\Hd)\sqrt{\SCcol\Hd}+\delta(\Hd)\sqrt{\SCcol}+\frac{1+\SCcol}{2(1-\delta(\Hd))}\)-\frac{1}{4\pi e}\sum_{\substack{i\in[\Hd]\\ j\in[\md(\Hd)]}}e^{2h\(|\CN_{ij}||\Qmat\)}.
\end{equation}

Now we seek to bound this summand in terms of the mutual information between $\CN$ and $\Qmat$.
By Jensen's inequality applied to the convex exponential function,
\begin{equation}\label{eqn:fidelity-random-pf-lowerbd-sum-e^h-jensen}
\frac{1}{\Hd\cdot\md(\Hd)} \sum_{\substack{i\in[\Hd]\\ j\in[\md(\Hd)]}}e^{2h\(|\CN_{ij}||\Qmat\)}\geq e^{\frac{2}{\Hd\cdot\md(\Hd)}\[\sum_{i\in[\Hd],j\in[\md(\Hd)]} h\(|\CN_{ij}||\Qmat\)\]}.
\end{equation}
For any independent random variables $X_1,\ldots,X_{\ell'}$, for every $\ell\in[\ell']$, we have $h(X_\ell|\Qmat)=h(X_{\ell}|\Qmat,X_1,\ldots,X_{\ell-1})$. Since the $\CN_{ij}$ are independent, by this equation and the conditional chain rule (Lemma~\ref{lem:h-cond-chain-rule}),
\begin{equation*} 
\sum_{\substack{i\in[\Hd]\\ j\in[\md(\Hd)]}} h\(|\CN_{ij}||\Qmat\)=h\(\{|\CN_{ij}|\}_{\substack{i\in[\Hd]\\ j\in[\md(\Hd)]}}\,\middle|\,\Qmat\).
\end{equation*}
For notational convenience, we denote the norms of the elements of $\CN$ by $|\CN|:=\{|\CN_{ij}|\}_{\substack{i\in[\Hd]\\ j\in[\md(\Hd)]}}$.
By the relationship between conditional entropy and mutual information (Lemma~\ref{lem:I-expr-h}),
\begin{equation*} 
h\(|\CN|\,\middle|\,\Qmat\)=h\(|\CN|\)-I\(|\CN|;\Qmat\).
\end{equation*}
By the two equations above, Equation~\ref{eqn:fidelity-random-pf-lowerbd-sum-e^h-jensen} becomes
$$\sum_{\substack{i\in[\Hd]\\ j\in[\md(\Hd)]}}e^{2h\(|\CN_{ij}||\Qmat\)}\geq \Hd\cdot\md(\Hd)\cdot e^{\frac{2}{\Hd\cdot\md(\Hd)}\[h\(|\CN|\)-I\(|\CN|;\Qmat\)\]}.$$

For a complex Gaussian variable $Z=\frac{X+iY}{\sqrt{2d}}$ sampled from $\CNdistr\(0,\frac{1}{\Hd}\)$, we have $h\(|Z|\)=h\(\frac{|X+iY|}{\sqrt{2\Hd}}\)=h\(|X+iY|\)+\log\left|\frac{1}{\sqrt{2d}}\right|$ by Lemma~\ref{lem:h-scale-constant}. Note that $|X+iY|=\sqrt{X^2+Y^2}$ is a Rayleigh (or more generally Chi) random variable so, by calculation using its PDF, we have its differential entropy is $1+\log \frac{1}{\sqrt{2}}+\frac{\gamma_E}{2}$ where $\gamma_E$ is Euler's constant. Therefore
$$h\(|Z|\)=1-\log 2+\frac{\gamma_E}{2}-\frac{\log\Hd}{2}.$$
Since the $\CN_{ij}$ are independent complex Gaussians with distribution $\CNdistr\(0,\frac{1}{\Hd}\)$,
\begin{align*}
\sum_{\substack{i\in[\Hd]\\ j\in[\md(\Hd)]}}e^{2h\(|\CN_{ij}||\Qmat\)}&\geq \Hd\cdot\md(\Hd)\cdot e^{\frac{2}{\Hd\cdot\md(\Hd)}\[\Hd\cdot\md(\Hd)\cdot\(1-\log 2+\frac{\gamma_E}{2}-\frac{\log\Hd}{2}\)-I\(|\CN|;\Qmat\)\]} \nonumber\\
&=\md(\Hd)\cdot e^{2-2\log 2+\gamma_E}\cdot e^{\frac{-2\cdot I\(|\CN|;\Qmat\)}{\Hd\cdot\md(\Hd)}}. 
\end{align*}
Using this lower bound on the summand in Equation~\ref{eqn:fidelity-random-pf-SC*Q-below-m(d)-e^h}, we get Lemma~\ref{lem:fidelity-rand-upperbd-O-U}:
\begin{equation*} 
\sum_{\substack{i\in[\Hd]\\ j\in[\md(\Hd)]}} \(\Eover{\Vmat,\CN,\Qmat}\left|\Qmat^*_{ij}\Vmat_{ij}\right|\)\leq\md(\Hd)\(\epsilon(\Hd)\sqrt{\SCcol\Hd}+\delta(\Hd)\sqrt{\SCcol}+\frac{1+\SCcol}{2(1-\delta(\Hd))}-\frac{e^{\gamma_E+1}}{16\pi}\cdot e^{\frac{-2\cdot I\(|\CN|;\Qmat\)}{\Hd\cdot\md(\Hd)}}\).
\end{equation*}
\end{proof}

\paragraph{Instantiating the parameters in the proof of Theorem~\ref{thm:fidelity-upperbd-random}.} Here we explain the technical details omitted from the proof of Theorem~\ref{thm:fidelity-upperbd-random} for clarity. 
After Equation~\ref{eqn:fidelity-random-pf-before-params}, we instantiate our parameters to derive our final bound. We will use the parameters $\md(\Hd),\epsilon(\Hd),\delta(\Hd)$ given by Proposition~\ref{thm:approx-Haar-U}.
Since $\delta(\Hd)\leq \frac{2}{3}$ in Proposition~\ref{thm:approx-Haar-U}, we have $\frac{1}{1-\delta(\Hd)} \leq 1+3\delta(\Hd)$, which yields
\begin{align*}
\Eover{\Qgrav,\SCgrav}\[F_{\SCgrav,\Qgrav}\]&\leq\frac{\qd\cdot\md(\Hd)}{\Hd}\(\epsilon(\Hd)\sqrt{\SCcol\Hd}+\delta(\Hd)\sqrt{\SCcol}+3\delta(\Hd)\(\frac{1+\SCcol}{2}\)-\frac{e^{\gamma_E+1}}{16\pi}\cdot e^{\frac{-2\cdot \QCC(\pi)}{\Hd\cdot\md(\Hd)}}\)\\
&\quad\quad+\(\frac{1+\SCcol}{2}\)+\frac{3\SCcol}{2\sqrt{\Hd}}+\frac{1}{2\sqrt{\Hd}}.
\end{align*}
Recall that we partitioned the columns of $\Qgrav$ into $\qd:=\lfloor\frac{\Hd}{\md(\Hd)}\rfloor$ sets. Using $\frac{\Hd}{\md(\Hd)}-1\leq \qd \leq\frac{\Hd}{\md(\Hd)}$, we have
\begin{align*}
\Eover{\Qgrav,\SCgrav}\[F_{\SCgrav,\Qgrav}\]&\leq\(\epsilon(\Hd)\sqrt{\SCcol\Hd}+\delta(\Hd)\sqrt{\SCcol}+3\delta(\Hd)\(\frac{1+\SCcol}{2}\)\)+\(\frac{\md(\Hd)}{\Hd}-{1}\)\cdot\frac{e^{\gamma_E+1}}{16\pi}\cdot e^{\frac{-2\cdot \QCC(\pi)}{\Hd\cdot\md(\Hd)}}\\
&\quad\quad+\(\frac{1+\SCcol}{2}\)+\frac{3\SCcol}{2\sqrt{\Hd}}+\frac{1}{2\sqrt{\Hd}}.
\end{align*}
Let $t$ denote a parameter in $(0,6]$ which we will set shortly.
By Proposition~\ref{thm:approx-Haar-U} we can set $\md(\Hd)=\frac{t^2\Hd}{72\log\Hd}$, $\epsilon(\Hd)=\frac{3t}{\sqrt{\Hd}}$, and $\delta(\Hd)=O\(\frac{1}{\Hd}\)$. For the right side above to be sub-maximal, we need $\epsilon(\Hd)\sqrt{\SCcol\Hd}=3t\sqrt{\SCcol}$ to be a small constant. Hence we will prove our bound for $\SCcol=O(1)$. With these parameters, our bound becomes
\begin{align*}
\Eover{\Qgrav,\SCgrav}\[F_{\SCgrav,\Qgrav}\]&\leq 3t\sqrt{\SCcol}+O\(\frac{1}{\Hd}\)+\(\frac{t^2}{72\log\Hd}-{1}\)\cdot\frac{e^{\gamma_E+1}}{16\pi}\cdot e^{\frac{-2\cdot \QCC(\pi)}{\Hd\cdot\md(\Hd)}}\\
&\quad\quad+\(\frac{1+\SCcol}{2}\)+O\(\frac{1}{\sqrt{\Hd}}\).
\end{align*}
 
As in the statement of Theorem~\ref{thm:fidelity-upperbd-random} let $c$ denote any positive constant.\footnote{Contrary to popular practice, we will not set this $c$ equal to $1$. \\(You can if you wish, though the bound is best for small $c$, e.g. $c\leq \frac{1}{10^3}$.)} For algorithms with bounded query (communication) complexity with the oracle, specifically for $\QCC(\prot)\leq c\(\frac{\Hd\cdot\md(\Hd)}{2}\)$, we have
\begin{align*}
\Eover{\Qgrav,\SCgrav}\[F_{\SCgrav,\Qgrav}\]&\leq 3t\sqrt{\SCcol}+\(\frac{t^2}{72\log\Hd}-{1}\)\cdot\frac{e^{\gamma_E+1}}{16\pi}\cdot e^{-c}+\(\frac{1+\SCcol}{2}\)+O\(\frac{1}{\sqrt{\Hd}}\)\\
&= 3t\sqrt{\SCcol}+O\(\frac{1}{\log\Hd}\)-\frac{e^{\gamma_E+1}}{16\pi}\cdot e^{-c}+\(\frac{1+\SCcol}{2}\).
\end{align*}

It only remains to give a bound on $3t\sqrt{\SCcol}$ where, per the statement of Theorem~\ref{thm:fidelity-upperbd-random}, $\sqrt{\SCcol}$ is the maximal value for the column norms of $\SCgrav$. For any positive constant $\beta < 1$, let $t=\frac{\beta}{3}\cdot\(\frac{e^{\gamma_E+1}}{16\pi}\cdot e^{-c}\)\cdot\min\{1,\frac{1}{\sqrt{\SCcol}}\}$.

Then for any $\SCcol=O(1)$ (regardless of whether $\SCcol > 1$ or $\SCcol \leq 1$), we have $3t\sqrt{\SCcol}\leq\beta\cdot\(\frac{e^{\gamma_E+1}}{16\pi}\cdot e^{-c}\)$. Thus
\begin{align*}
\Eover{\Qgrav,\SCgrav}\[F_{\SCgrav,\Qgrav}\]\leq \frac{1+\SCcol}{2}-\(\frac{e^{\gamma_E+1}}{16\pi}\cdot e^{-c}\,(1-\beta)\)+O\(\frac{1}{\log\Hd}\).
\end{align*}
From this we obtain the bound in Theorem~\ref{thm:fidelity-upperbd-random}.

\medskip
Lastly we state the original theorem in~\cite{Jia10} for approximating a Haar random unitary by complex Gaussians.  
\begin{prop}[Theorem A.2 in~\cite{Jia10}]\label{thm:approx-Haar-U-Jiang}
For any $\Hd\geq 2$, let $Z=\(Z_{ij}\)_{i,j\in[\Hd]}$ be a $\Hd\times\Hd$ matrix where $Z_{ij}\gets\CNdistr(0,1)$ are independent complex Gaussian variables\footnote{Equivalently, let $Z_{ij}=\frac{X_{ij}+iY_{ij}}{\sqrt{2}}$ where $X_{ij},Y_{ij}\gets\Ndistr(0,1)$ are independent standard Gaussian variables.}. Let $U=\(U_{ij}\)_{i,j\in[\Hd]}$ be the $\Hd\times\Hd$ matrix that results from performing the Gram-Schmidt procedure\footnote{The Gram–Schmidt procedure is a simple process that takes a finite, linearly independent set of vectors and generates an orthogonal set that spans the same subspace.} on the $\Hd$ columns vectors of $Z$ and normalizing them. $U$ is distributed according to the Haar measure on the unitary group.
For any $r\in(0,1/4)$, $s>0$, $t>0$, and $\md\leq\frac{r\Hd}{2}$, 
\begin{align*}
\Pr_{Z}&\[\max_{i\in[\Hd],j\in[\md]}\left|\sqrt{\Hd}\cdot\Qgrav_{ij}-Z_{ij}\right|\geq rs+2t\]\\
&\quad\leq 4\md e^{-\Hd r^2/8}+\md\Hd e^{-s^2}+\frac{6\md\Hd}{t}\(1+\frac{t^2}{12(\md+t\sqrt{\Hd})}\)^{-\Hd}
\end{align*}
where ${\displaystyle \Pr_{Z}}$ is the probability averaged over $Z$.
\end{prop}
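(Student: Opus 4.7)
The plan is to use the Gram--Schmidt construction of $U$ from $Z$ to decompose each error $\sqrt{\Hd}\,\Qgrav_{ij} - Z_{ij}$ into two terms -- one from renormalization, one from orthogonalization -- and to bound the failure events for each. Writing $\tilde w_j := Z_j - \sum_{k<j}\langle u_k, Z_j\rangle u_k$ so that $u_j = \tilde w_j/\|\tilde w_j\|$, one obtains the identity
$$\sqrt{\Hd}\,\Qgrav_{ij} - Z_{ij} \;=\; Z_{ij}\!\left(\frac{\sqrt{\Hd}}{\|\tilde w_j\|} - 1\right) \;-\; \frac{\sqrt{\Hd}}{\|\tilde w_j\|}\,S_{ij},\qquad S_{ij} := \sum_{k<j}\langle u_k, Z_j\rangle u_{ki}.$$
On the intersection of the events $|\sqrt{\Hd}/\|\tilde w_j\| - 1| \leq r$ (so in particular $\sqrt{\Hd}/\|\tilde w_j\| \leq 2$), $|Z_{ij}| \leq s$, and $|S_{ij}| \leq t$, the error is at most $rs + 2t$. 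The three summands in the target bound will arise as the failure probabilities of these three events, union-bounded over $(i,j) \in [\Hd]\times[\md]$.

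The first two events are handled by standard Gaussian concentration. Conditionally on $u_1, \ldots, u_{j-1}$, the vector $\tilde w_j$ is the projection of $Z_j$ onto a $(\Hd - j + 1)$-dimensional subspace independent of $Z_j$, so $\|\tilde w_j\|^2 \sim \mathrm{Gamma}(\Hd - j + 1, 1)$. The hypothesis $\md \leq r\Hd/2$ keeps the mean $\Hd - j + 1$ close to $\Hd$, and a standard Chernoff bound for this half chi-squared variable yields $\Pr[|\sqrt{\Hd}/\|\tilde w_j\| - 1| > r] \leq 4 e^{-\Hd r^2/8}$; union-bounding over $j \in [\md]$ produces the first summand $4\md e^{-\Hd r^2/8}$. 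For the Gaussian entries, $\Pr[|Z_{ij}| \geq s] \leq e^{-s^2}$, and union-bounding over $i\in[\Hd], j\in[\md]$ yields $\md\Hd e^{-s^2}$.

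The main obstacle is bounding $S_{ij}$. Conditionally on $u_1, \ldots, u_{j-1}$, the inner products $\langle u_k, Z_j\rangle$ are i.i.d.\ $\CNdistr(0,1)$ (since $Z_j$ is independent of the earlier columns and the $u_k$ are orthonormal), so $S_{ij} \mid (u_k)_{k<j} \sim \CNdistr(0, \tau_{ij}^2)$ with $\tau_{ij}^2 := \sum_{k<j} |u_{ki}|^2$; equivalently, $|S_{ij}|^2 \mid (u_k)_{k<j}$ has law $\tau_{ij}^2 \cdot \mathrm{Exp}(1)$. The scalar $\tau_{ij}^2$ is $\mathrm{Beta}(j-1, \Hd-j+1)$-distributed -- being the sum of the first $j-1$ squared magnitudes of the Haar-random unit row of $U$ -- and so has mean at most $\md/\Hd$ but nontrivial density near zero, so the naive integration $\E[\exp(-t^2/\tau_{ij}^2)]$ is too crude. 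The third summand arises from a Chernoff-type argument: bound the moment generating function of $|S_{ij}|^2$ by integrating against the $\mathrm{Beta}$ density, optimize over the MGF parameter, and apply Markov's inequality. The $(1-x)^{\Hd-j}$ factor of the Beta density produces the exponent $\Hd$, the competing scales $\md/\Hd$ (typical $\tau_{ij}^2$) and $t/\sqrt{\Hd}$ (effective $\tau$ picked out by the tail) combine into the denominator $12(\md + t\sqrt{\Hd})$, and the prefactor $1/t$ comes from the Markov step; this yields $\Pr[|S_{ij}| \geq t] \leq \frac{6}{t}\bigl(1 + t^2/(12(\md + t\sqrt{\Hd}))\bigr)^{-\Hd}$ for a single pair, and a union bound over $(i,j) \in [\Hd]\times[\md]$ gives the third summand. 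Summing the three failure probabilities finishes the proof.
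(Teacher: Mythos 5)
First, a point about the comparison itself: the paper does not prove this proposition --- it is imported verbatim as Theorem A.2 of Jiang's paper, and the only thing actually proved in the appendix is the specialization (Proposition~2.2 of the main text) obtained by choosing parameters $r,s,t,m$. So there is no in-paper proof to measure you against; the relevant benchmark is Jiang's own argument, and your outline does follow its skeleton. The Gram--Schmidt identity $\sqrt{d}\,U_{ij}-Z_{ij}=Z_{ij}\bigl(\sqrt{d}/\|\tilde w_j\|-1\bigr)-\bigl(\sqrt{d}/\|\tilde w_j\|\bigr)S_{ij}$, the three events, and the union bound over $i\in[d]$, $j\in[m]$ are the right decomposition, and your attribution of the three summands to the three failure events is correct. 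The first two bounds are essentially complete: conditionally $\|\tilde w_j\|^2\sim\mathrm{Gamma}(d-j+1,1)$, the hypothesis $m\le rd/2$ keeps the deterministic bias $\sqrt{d/(d-j+1)}-1$ below a fraction of $r$, so a Gamma Chernoff bound delivers $4e^{-dr^2/8}$ with room to spare, and $\Pr[|Z_{ij}|\ge s]=e^{-s^2}$ holds exactly since $|Z_{ij}|^2\sim\mathrm{Exp}(1)$.

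The gap is the third bound, which is where all of the difficulty of the theorem lives. You correctly identify that $S_{ij}$ given $(u_k)_{k<j}$ is $\mathcal{CN}(0,\tau_{ij}^2)$ with $\tau_{ij}^2\sim\mathrm{Beta}(j-1,d-j+1)$, so that $\Pr\bigl[|S_{ij}|\ge t\bigr]=\E_{\tau}\bigl[e^{-t^2/\tau_{ij}^2}\bigr]$ --- note this is an exact identity, not a ``too crude'' one, and the danger comes from the upper tail of $\tau_{ij}^2$ (where $e^{-t^2/\tau^2}$ is close to $1$), not from the density near zero as you state. From there you assert that an MGF-plus-Markov computation against the Beta density ``yields'' $\frac{6}{t}\bigl(1+t^2/(12(m+t\sqrt{d}))\bigr)^{-d}$, with the exponent $d$, the denominator $12(m+t\sqrt{d})$, and the prefactor $6/t$ each attributed to a step by name only. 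None of this is carried out: no moment generating function is computed, no optimization is performed, and the constants $6$ and $12$ together with the precise combination $m+t\sqrt{d}$ are exactly what such a computation must produce. As written, this step is a plausible story about why a bound of this shape should exist, not a derivation that it holds. To close the gap you would need to actually bound $\int_0^1 e^{-t^2/x}\,p_{\mathrm{Beta}(j-1,d-j+1)}(x)\,dx$, e.g.\ by splitting at a threshold $x_0\asymp(m+t\sqrt{d})/d$ and controlling the Beta upper tail there, or else simply cite Jiang's Theorem A.2 as the paper does.
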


Theorem~\ref{thm:fidelity-upperbd-random} uses Proposition~\ref{thm:approx-Haar-U} which we show follows from using certain parameters in Proposition~\ref{thm:approx-Haar-U-Jiang}.
\begin{proof}[Proof of Proposition~\ref{thm:approx-Haar-U}]
Consider any $t\in(0,6]$ and $\Hd>\max\{e^{1/t^4},e^4\}$ such that $\frac{\sqrt{\Hd}}{\log\Hd}\geq \frac{72}{t}$.
Let the parameters in Proposition~\ref{thm:approx-Haar-U-Jiang} be as follows: $r=\frac{1}{\log\Hd}$, $s=(\log\Hd)^{3/4}$, $t=t$, and $\md=\md(\Hd)=\frac{t^2\Hd}{72\log\Hd}$. Note the conditions are satisfied: $\Hd>1$ implies $r>0$ and $s>0$, $\Hd> e^4$ implies $r=\frac{1}{\log\Hd}< 1/4$, and $t\leq 6$ implies $\md(\Hd)\leq\frac{\Hd}{2\log\Hd}=\frac{r\Hd}{2}$.

Since $\Hd\geq e^{1/t^4}$, $rs\leq t$. Thus if
$$\max_{i\in[\Hd],j\in[\md(\Hd)]}\left|\Qgrav_{ij}-\frac{Z_{ij}}{\sqrt{\Hd}}\right|\geq\epsilon(\Hd)=\frac{3t}{\sqrt{d}}\enspace,$$
then
$$\max_{i\in[\Hd],j\in[\md]}\left|\sqrt{\Hd}\cdot\Qgrav_{ij}-Z_{ij}\right|\geq 3t\geq rs+2t\enspace.$$
This shows
$$\Pr_{Z}\[\max_{i\in[\Hd],j\in[\md(\Hd)]}\left|\Qgrav_{ij}-\frac{Z_{ij}}{\sqrt{\Hd}}\right|\geq\epsilon(\Hd)\]\leq 4\md e^{-\Hd r^2/8}+\md\Hd e^{-s^2}+\frac{6\md\Hd}{t}\(1+\frac{t^2}{12(\md+t\sqrt{\Hd})}\)^{-\Hd}.$$
Define $\delta(\Hd)$ to be the right side above. It suffices to show $\delta(\Hd)=O\(\frac{1}{\Hd}\)$. By the choice of $r$ and $s$, $4\md e^{-\Hd r^2/8}=O\(\frac{1}{\Hd}\)$ and $\md\Hd e^{-s^2}=O\(\frac{1}{\Hd}\)$. We focus on the remaining term. Since $\frac{\sqrt{\Hd}}{\log\Hd}\geq \frac{72}{t}$ and by the choice of $\md$, we have $\frac{1}{\md+t\sqrt{\Hd}}\geq\frac{1}{2\md}$. Using this and the choice of $\md$,
$$1+\frac{t^2}{12(\md+t\sqrt{\Hd})}\geq 1+\frac{t^2}{12\cdot 2\md}=1+\frac{3\log\Hd}{\Hd}.$$
Thus
$$\(1+\frac{t^2}{12(\md+t\sqrt{\Hd})}\)^{-\Hd}\leq\(1+\frac{3\log\Hd}{\Hd}\)^{-\Hd}=O\(\frac{1}{\Hd^3}\)$$
and again by the choice of $\md$, and $t\leq 6$,
$$\frac{6\md\Hd}{t}\(1+\frac{t^2}{12(\md+t\sqrt{\Hd})}\)^{-\Hd}=O\(\frac{1}{\Hd\log\Hd}\)=O\(\frac{1}{\Hd}\).$$
This shows $\delta(\Hd)=O\(\frac{1}{\Hd}\)$, as desired.
\end{proof}

\section{A Fidelity Bound implies a Distinguishing Measurement}\label{sec:appendix-POVM}  

In this section we explain how our bounds on the fidelity between two states imply the existence of a quantum measurement (specifically, a positive operator-valued measure, POVM) that distinguishes with significant probability between these two states. The relationship between fidelity and distinguishing measurements is a known result (see e.g.~\cite{NieChu11}); we include it here for completeness.

We note that the fidelity between two \emph{random variables} $X$ and $Y$, each taking a value in $\{1,\ldots,\ell\}$ with probabilities $\vec{p}=(p_1,\ldots,p_\ell)$ and $\vec{q}=(q_1,\ldots,q_\ell)$ respectively, is defined to be
$${\displaystyle F({\vec {p}},{\vec {q}})\equiv \left(\sum _{k\in[\ell]}{\sqrt {p_{k}q_{k}}}\right)^{2}}.$$
The fidelity between two \emph{quantum states (density operators)}, $ F(\rho,\sigma )$, is a generalization of this notion. These two notions of fidelity have the following relation.
\begin{lem}[Fidelity of States and Fidelity of Measurement Outcomes~\cite{NieChu11}]\label{lem:fidelity-states-measurement}
For any two quantum states (density operators) $\rho$ and $\sigma$, 
$${\displaystyle F(\rho ,\sigma )=\min _{\{E_{k}\}}F({\vec {p}},{\vec {q}})}$$
where $F(\rho,\sigma)$ is the fidelity between $\rho$ and $\sigma$,
the minimum is taken over all possible POVMs $\{E_{k}\}$, and $F({\vec {p}},{\vec {q}})$ is the fidelity between the \emph{measurement outcomes} of $\{E_{k}\}$ applied to $\rho$ (which has probability vector $\vec{p}$) and of $\{E_{k}\}$ applied to $\sigma$ (which has probability vector $\vec{q}$).
\end{lem}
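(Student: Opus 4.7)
}
The plan is to establish the two halves of the min-equality separately: first the lower bound $F(\rho,\sigma)\leq F(\vec p,\vec q)$ for every POVM $\{E_k\}$, and second the existence of a specific POVM at which the bound is attained.

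For the lower bound, I would start from the Nielsen--Chuang form of the state fidelity, $\sqrt{F(\rho,\sigma)}=\operatorname{Tr}|\sqrt{\rho}\sqrt{\sigma}|$, and exploit the polar decomposition $\sqrt{\rho}\sqrt{\sigma}=U|\sqrt{\rho}\sqrt{\sigma}|$ so that $\sqrt{F(\rho,\sigma)}=\operatorname{Tr}(U^\dagger\sqrt{\rho}\sqrt{\sigma})$ for this particular unitary $U$. For a POVM $\{E_k\}$ with outcome probabilities $p_k=\operatorname{Tr}(\rho E_k)$ and $q_k=\operatorname{Tr}(\sigma E_k)$, set $A_k=\sqrt{E_k}\sqrt{\rho}$ and $B_k=\sqrt{E_k}\sqrt{\sigma}\,U^\dagger$ and apply the Cauchy--Schwarz inequality in the Hilbert--Schmidt inner product: $|\operatorname{Tr}(A_k^\dagger B_k)|\leq \sqrt{\operatorname{Tr}(A_k^\dagger A_k)\operatorname{Tr}(B_k^\dagger B_k)}=\sqrt{p_k q_k}$, where the right-hand side comes out to $\sqrt{p_kq_k}$ by the cyclic property of the trace together with unitarity of $U$. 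Summing in $k$ and using $\sum_k E_k=I$ collapses the left-hand side to $|\operatorname{Tr}(U^\dagger\sqrt{\rho}\sqrt{\sigma})|=\operatorname{Tr}|\sqrt{\rho}\sqrt{\sigma}|=\sqrt{F(\rho,\sigma)}$, which upon squaring is exactly $F(\rho,\sigma)\leq F(\vec p,\vec q)$.

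For equality, I would first handle the generic case in which $\rho$ is invertible, since the singular case can then be recovered by a perturbation-and-limit argument (replace $\rho$ by $\rho_\varepsilon=(1-\varepsilon)\rho+\varepsilon I/d$, note joint continuity of both sides in $\rho$, and take $\varepsilon\to0$). When $\rho$ is invertible, I would introduce the operator $M=\rho^{-1/2}\,\sqrt{\sqrt{\rho}\,\sigma\,\sqrt{\rho}}\,\rho^{-1/2}$, which is positive and satisfies $\sigma=M\rho M$ and $\operatorname{Tr}(\rho M)=\operatorname{Tr}\sqrt{\sqrt{\rho}\sigma\sqrt{\rho}}=\sqrt{F(\rho,\sigma)}$. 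Taking the projective measurement $\{E_k=\lvert m_k\rangle\langle m_k\rvert\}$ in the eigenbasis of $M$ with eigenvalues $\mu_k$, I would compute $p_k=\langle m_k|\rho|m_k\rangle$ and $q_k=\langle m_k|\sigma|m_k\rangle=\mu_k^2\langle m_k|\rho|m_k\rangle=\mu_k^2 p_k$, so that $\sqrt{p_kq_k}=\mu_k p_k$. Summing gives $\sum_k\sqrt{p_kq_k}=\sum_k\mu_k\langle m_k|\rho|m_k\rangle=\operatorname{Tr}(M\rho)=\sqrt{F(\rho,\sigma)}$, i.e.\ $F(\vec p,\vec q)=F(\rho,\sigma)$, which combined with the previous inequality realizes the minimum.

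The main obstacle is the second step, in two respects: choosing the \emph{right} unitary $U$ so that the Cauchy--Schwarz bound is actually tight, and building the extremizing POVM in a way that is robust under $\rho$ or $\sigma$ being rank-deficient. The first is handled by the polar decomposition trick above, which is precisely what lets Cauchy--Schwarz saturate along the eigendirections of $M$. The second is primarily a technical continuity argument rather than a conceptual one: both $F(\rho,\sigma)$ and $\min_{\{E_k\}} F(\vec p,\vec q)$ (over POVMs with a uniformly bounded number of outcomes, say $d^2$) are continuous in $(\rho,\sigma)$, so establishing equality on the dense set of invertible states suffices.
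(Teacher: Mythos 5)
Your proof is correct: the Cauchy--Schwarz/polar-decomposition argument for the inequality $F(\rho,\sigma)\leq F(\vec p,\vec q)$ and the projective measurement in the eigenbasis of $M=\rho^{-1/2}\sqrt{\sqrt{\rho}\,\sigma\sqrt{\rho}}\,\rho^{-1/2}$ for saturation is exactly the standard argument in the cited reference. The paper itself offers no proof of this lemma --- it is stated ``for completeness'' with a citation to Nielsen--Chuang --- so the only comparison available is to that source, whose proof yours faithfully reproduces, including the routine regularization and continuity step for rank-deficient $\rho$.
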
 

Let $\{E_k\}_{k\in[\ell]}$ be the minimizing POVM in Lemma~\ref{lem:fidelity-states-measurement} and let $\vec{p}\equiv(p_1,\ldots,p_\ell)$ denote the probabilities of obtaining each of the $\ell$ possible measurement outcomes when $\{E_k\}_{k\in[\ell]}$ is applied to $\rho$. Similarly let $\vec{q}\equiv(q_1,\ldots,q_\ell)$ denote the outcome probabilities for $\{E_k\}_{k\in[\ell]}$ applied to $\sigma$. 
A natural way to compare the measurement outcomes for $\rho$ and $\sigma$ is to use the \emph{total variation distance} between the distribution of $\rho$'s measurement outcome and $\sigma$'s measurement outcome, which is related to how well one can guess whether a given state is $\rho$ or $\sigma$.

The definition of the total variation distance between two distributions with probability vectors $\vec{p}$ and $\vec{q}$ is
$$d_{\TV}(\vec{p},\vec{q})\equiv\frac{1}{2}\sum_{k\in[\ell]}|p_k-q_k|.$$
The total variation distance is $1$ if and only if the measurement outcomes that occur for $\rho$ and $\sigma$ are disjoint, making them perfectly distinguishable. For instance, if the measurement outcomes of applying $\{E_k\}_{k\in[\ell]}$ are denoted by $\{1,\ldots,\ell\}$, then an example of disjoint measurement outcomes is $\{E_k\}$ applied to $\rho$ always gives an even element of $\{1,\ldots,\ell\}$ and $\{E_k\}$ applied to $\sigma$ always gives an odd element of $\{1,\ldots,\ell\}$. The distribution of the measurement outcomes for $\rho$ and $\sigma$ are exactly the same if and only if the total variation distance is $0$.

Concretely, to distinguish whether a state is either $\rho$ or $\sigma$ with equal probability (e.g. either the algorithm's prediction $\qop\(\ket{\psi}\bra{\psi}\)$ or the fundamental time evolution $\Qgrav\ket{\psi}$), apply the minimal POVM $\{E_k\}$ from Lemma~\ref{lem:fidelity-states-measurement} to the state to get a measurement outcome $i\in\{1,\ldots,\ell\}$. If a measurement outcome is a more likely outcome for $\rho$ (i.e. if $p_i > q_i$), then guess the state is $\rho$; if $q_i \geq p_i$, then guess it is $\sigma$. This guess will be correct with probability $\frac{1}{2} + \frac{1}{2}\cdot d_{\TV}(\vec{p},\vec{q})$, an ``advantage'' of $\frac{1}{2}\cdot d_{\TV}(\vec{p},\vec{q})$ over a random guess. 

\medskip\medskip
Using Lemma~\ref{lem:fidelity-states-measurement} and e.g. relating $F(\vec{p},\vec{q})$ to the Hellinger distance, and the Hellinger distance to the total variation distance, we have the following.
\begin{lem} 
For any $\rho$ and $\sigma$, let $1-\delta\equiv F(\rho,\sigma)$. Then
$$\frac{\delta}{2}\leq d_{\TV}(\vec{p},\vec{q})\leq \sqrt{2\delta}$$
where $\vec{p}$ and $\vec{q}$ are the probability vectors corresponding to applying the minimal POVM of Lemma~\ref{lem:fidelity-states-measurement} to $\rho$ and $\sigma$ respectively. 
\end{lem}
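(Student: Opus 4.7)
The plan is to route both inequalities through the squared Hellinger distance $H^2(\vec{p},\vec{q}) := \tfrac{1}{2}\sum_{k\in[\ell]}(\sqrt{p_k}-\sqrt{q_k})^2$. Expanding the square and using $\sum_k p_k = \sum_k q_k = 1$ gives the standard identity $H^2(\vec{p},\vec{q}) = 1 - \sum_{k\in[\ell]}\sqrt{p_k q_k} = 1 - \sqrt{F(\vec{p},\vec{q})}$, where $F(\vec{p},\vec{q})$ is the classical fidelity of the two probability vectors. By the choice of $\{E_k\}$ as the minimizing POVM in Lemma~\ref{lem:fidelity-states-measurement}, we have $F(\vec{p},\vec{q}) = F(\rho,\sigma) = 1-\delta$, so
\begin{equation*}
H^2(\vec{p},\vec{q}) = 1 - \sqrt{1-\delta}.
\end{equation*}

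Next I would invoke the two standard comparison inequalities between $H$ and $d_{\TV}$, namely $H^2(\vec{p},\vec{q}) \leq d_{\TV}(\vec{p},\vec{q}) \leq \sqrt{2}\,H(\vec{p},\vec{q})$. The lower bound comes from $(\sqrt{p_k}-\sqrt{q_k})^2 \leq |\sqrt{p_k}-\sqrt{q_k}|\cdot(\sqrt{p_k}+\sqrt{q_k}) = |p_k-q_k|$, summed over $k$ and divided by $2$. The upper bound follows by writing $|p_k-q_k| = |\sqrt{p_k}-\sqrt{q_k}|\cdot(\sqrt{p_k}+\sqrt{q_k})$, applying Cauchy--Schwarz, and using $\sum_k(\sqrt{p_k}+\sqrt{q_k})^2 \leq 2\sum_k(p_k+q_k) = 4$. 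Both of these are textbook facts that I would simply cite/state.

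The last step is the algebraic simplification of $1-\sqrt{1-\delta}$. I would use the rationalization $1-\sqrt{1-\delta} = \delta/(1+\sqrt{1-\delta})$. Since $\delta \in [0,1]$ implies $1 \leq 1+\sqrt{1-\delta} \leq 2$, this yields the pinch $\delta/2 \leq 1-\sqrt{1-\delta} \leq \delta$. Plugging into the Hellinger--TV inequalities,
\begin{equation*}
d_{\TV}(\vec{p},\vec{q}) \geq H^2(\vec{p},\vec{q}) = 1-\sqrt{1-\delta} \geq \frac{\delta}{2},
\end{equation*}
and
\begin{equation*}
d_{\TV}(\vec{p},\vec{q}) \leq \sqrt{2}\,H(\vec{p},\vec{q}) = \sqrt{2(1-\sqrt{1-\delta})} \leq \sqrt{2\delta},
\end{equation*}
which is exactly the claimed two-sided bound.

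There is no real obstacle here — every ingredient (Hellinger--fidelity identity, Hellinger--TV comparison, and the elementary $\delta/2 \leq 1-\sqrt{1-\delta} \leq \delta$ inequality) is standard. The only point deserving a sentence of care is that Lemma~\ref{lem:fidelity-states-measurement} is precisely what lets us substitute $F(\rho,\sigma)$ for $F(\vec{p},\vec{q})$: because $\vec{p},\vec{q}$ are defined as the outcome distributions of the minimizing POVM, the minimum in that lemma is attained and the classical and quantum fidelities coincide, so the derivation closes without any slack.
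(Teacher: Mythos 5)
Your proof is correct and follows exactly the route the paper sketches: pass from fidelity to the squared Hellinger distance via $H^2=1-\sqrt{F}$ together with Lemma~\ref{lem:fidelity-states-measurement}, apply the standard two-sided Hellinger--TV comparison, and finish with the elementary pinch $\delta/2\leq 1-\sqrt{1-\delta}\leq\delta$. All steps check out; nothing is missing.
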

\noindent Thus:
\begin{itemize}
    \item If $F(\rho,\sigma)\geq 1-\frac{1}{\exp(S)}$ (corresponding to our definition of successful learning, Equation~\ref{eqn:intro-def-success}), then the total variation distance $d_{\TV}(\vec{p},\vec{q})$ for \emph{any} quantum measurement (POVM) is at \emph{most} $\frac{\sqrt{2}}{\sqrt{\exp(S)}}$ so it is not possible to distinguish with any significant probability whether the state is $\rho$ or $\sigma$ by applying a POVM.
    \item If $F(\rho,\sigma)\leq 1-\frac{1}{\poly(S)}$ (corresponding to failure in learning, Equation~\ref{eqn:intro-def-failure}), then there exists a POVM $\{E_k\}$ such that the total variation distance is at \emph{least} $\frac{1}{2\poly(S)}$ so the above procedure distinguishes whether the state is $\rho$ or $\sigma$ with a significant advantage.
\end{itemize}
(This, in part, motivates our definition of successful learning and failure, or hardness, of learning.)

\section{Information Theory}\label{sec:appendix-info-theory}
In this section we state the results about differential entropy and mutual information that we use in our proofs, following the presentation in \cite{CovTho06}.

\paragraph{Differential Entropy.}
Note that the joint entropy of several variables $X_1,\ldots,X_\ell$ is the same as the entropy of the single vector-valued random variable $(X_1,\ldots,X_\ell)$. Furthermore, for any $k\in[\ell]$, letting $(X_{k+1},\ldots,X_\ell)$ be a vector-valued random variable, we have
\begin{equation}\label{eqn:joint-h-equiv-h}
    h(X_1,\ldots,X_\ell)=h(X_1,\ldots,X_k,(X_{k+1},\ldots,X_\ell)).
\end{equation}

\begin{lem}\label{lem:h-scale-constant}
For a random variable $X$ and constant $a\in\R$,
$h(aX)=h(X)+\log |a|.$
\end{lem}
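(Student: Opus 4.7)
The plan is to prove this by a direct change of variables in the defining integral for differential entropy. First I would handle the case $a > 0$: if $X$ has PDF $p_X$, then $Y := aX$ has PDF $p_Y(y) = \frac{1}{a} p_X(y/a)$ by the standard scaling rule for densities. Substituting this into the definition $h(Y) = -\int p_Y(y)\log p_Y(y)\,dy$ and then changing variables via $x = y/a$ (with $dy = a\,dx$) collapses the Jacobian factors, leaving $-\int p_X(x)[\log p_X(x) - \log a]\,dx = h(X) + \log a$.

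Next I would handle $a < 0$ by noting that $aX = -|a|X$, so it suffices to reduce to the positive case together with the claim $h(-X) = h(X)$. The latter follows from the same change-of-variables argument with $a = -1$: the density of $-X$ is $p_X(-x)$, and the reflection $x \mapsto -x$ is a measure-preserving bijection on $\mathbb{R}$, so the entropy integral is unchanged. Combining these two steps yields $h(aX) = h(X) + \log|a|$ for all nonzero $a \in \mathbb{R}$.

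The only mildly subtle points are technical hygiene rather than genuine obstacles: one must assume $X$ is absolutely continuous (which is already our standing assumption, per Section~\ref{sec:prelim-info-theory}), one must justify the change of variables in the Lebesgue integral (which is routine since $y \mapsto y/a$ is a smooth bijection with constant Jacobian), and one must note that the case $a = 0$ is excluded since $\log|a|$ is undefined and $aX$ is degenerate. No new ideas are required beyond the basic scaling law for PDFs, so the proof is essentially a one-line calculation once the substitution is set up cleanly.
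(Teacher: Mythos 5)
Your proof is correct and matches the standard argument (Theorem 8.6.4 in Cover--Thomas, which the paper cites for this section and does not reprove). The only cosmetic difference is that you split into cases $a>0$ and $a<0$ and then invoke $h(-X)=h(X)$, whereas the usual presentation handles all $a\neq 0$ in one step via the density formula $p_{aX}(y)=\tfrac{1}{|a|}\,p_X(y/a)$, which produces the $\log|a|$ directly without a separate reflection argument.
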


\begin{lem}\label{lem:h-chain-rule}(Chain Rule)
For any jointly distributed random variables $X_1,\ldots,X_\ell$,
$$h(X_1,\ldots,X_\ell)=\sum_{k=1}^\ell h(X_k|X_1,\ldots,X_{k-1}).$$
\end{lem}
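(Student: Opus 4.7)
The plan is to prove the chain rule by induction on $\ell$, mirroring the standard derivation for Shannon entropy but replacing sums with Lebesgue integrals over the appropriate supports. The key algebraic identity I will exploit is the factorization of the joint density into a product of conditionals,
\[
p_{X_1,\ldots,X_\ell}(x_1,\ldots,x_\ell) \;=\; \prod_{k=1}^{\ell} p_{X_k\mid X_1,\ldots,X_{k-1}}(x_k\mid x_1,\ldots,x_{k-1}),
\]
which holds whenever the joint density exists (with the convention that the $k=1$ factor is the marginal $p_{X_1}(x_1)$). Taking $-\log$ of both sides turns the product into a sum, and this will be the algebraic backbone of the argument.

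First I would handle the base case $\ell=1$, which is vacuous since $h(X_1\mid\emptyset)=h(X_1)$ by convention. For the inductive step, assuming the identity for $\ell-1$, I would write out $h(X_1,\ldots,X_\ell)$ from Definition~\ref{defn:h}, substitute the factorization above inside the logarithm, and split the resulting integral into the sum
\[
-\!\!\intover{\mathcal{X}_1\times\cdots\times\mathcal{X}_\ell}\!\! p_{X_1,\ldots,X_\ell}(x_1,\ldots,x_\ell)\sum_{k=1}^{\ell}\log p_{X_k\mid X_1,\ldots,X_{k-1}}(x_k\mid x_1,\ldots,x_{k-1})\,dx_1\cdots dx_\ell.
\]
For each $k$, I would marginalize out the variables $X_{k+1},\ldots,X_\ell$, leaving an integral against $p_{X_1,\ldots,X_k}$; then, using $p_{X_1,\ldots,X_k}(x_1,\ldots,x_k)=p_{X_1,\ldots,X_{k-1}}(x_1,\ldots,x_{k-1})\,p_{X_k\mid X_1,\ldots,X_{k-1}}(x_k\mid x_1,\ldots,x_{k-1})$, the $k$-th term reduces precisely to $h(X_k\mid X_1,\ldots,X_{k-1})$ (where $h(Y\mid Z):=\E_Z[h(Y\mid Z=z)]$ is the standard conditional differential entropy). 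Summing over $k$ yields the claim.

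Alternatively (and more slickly), I can avoid induction by a one-shot computation: starting from $h(X_1,\ldots,X_\ell)=\E[-\log p_{X_1,\ldots,X_\ell}(X_1,\ldots,X_\ell)]$, substitute the factorization, use linearity of expectation, and identify each summand as a conditional differential entropy via the tower property $\E[\,\cdot\,]=\E_{X_1,\ldots,X_{k-1}}\!\left[\E[\,\cdot\mid X_1,\ldots,X_{k-1}]\right]$. I will likely present this shorter version in the final write-up.

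The main technical subtlety, rather than any deep obstacle, is ensuring that the conditional densities $p_{X_k\mid X_1,\ldots,X_{k-1}}$ are well-defined (i.e.\ the conditioning events have positive density) and that the integrals are finite so that the splitting of the integral and the use of Fubini are justified. These conditions are implicit in the standing assumption of the paper that all random variables under consideration are absolutely continuous with a joint PDF, so no additional regularity beyond what is stated in Section~\ref{sec:prelim} is required. Once this is in place the proof is essentially a bookkeeping exercise with logarithms and marginal/conditional densities.
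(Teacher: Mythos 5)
Your proof is correct: the factorization of the joint density into successive conditionals, followed by $-\log$ and linearity of expectation (or equivalently the induction), is exactly the standard Cover--Thomas derivation, and your caveat about the conditional densities and finiteness of each term is the right one to flag. Note that the paper itself states this lemma without proof, citing it as a standard fact from its information-theory reference and only proving the conditional chain rule (Lemma~\ref{lem:h-cond-chain-rule}) as a consequence, so there is no in-paper argument to diverge from; your write-up supplies the expected textbook proof.
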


\begin{lem}\label{lem:h-cond-chain-rule}(Conditional Chain Rule)
For any jointly distributed random variables $X_1,\ldots,X_\ell,Y$,
$$h(X_1,\ldots,X_\ell|Y)=\sum_{k=1}^\ell h(X_k|Y,X_1,\ldots,X_{k-1}).$$
\end{lem}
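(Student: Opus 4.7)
The plan is to derive the conditional chain rule directly from the definition of conditional differential entropy and the elementary chain rule for conditional probability densities. Throughout, I will assume all relevant joint PDFs exist and the integrals converge (which is the standard implicit hypothesis for any statement involving $h$).

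First, I would write out the left-hand side using the definition of conditional differential entropy together with Equation~\ref{eqn:joint-h-equiv-h}, treating $(X_1,\ldots,X_\ell)$ as a single vector-valued random variable:
\[
h(X_1,\ldots,X_\ell \mid Y) = -\intover{\mathcal{X}_1\times\cdots\times\mathcal{X}_\ell\times\mathcal{Y}} p_{X_1,\ldots,X_\ell,Y}(x_1,\ldots,x_\ell,y)\,\log p_{X_1,\ldots,X_\ell\mid Y}(x_1,\ldots,x_\ell\mid y)\, dx_1\cdots dx_\ell\, dy.
\]
Next, I would apply the ordinary probability chain rule to the conditional density inside the logarithm:
\[
p_{X_1,\ldots,X_\ell\mid Y}(x_1,\ldots,x_\ell\mid y) = \prod_{k=1}^\ell p_{X_k\mid Y,X_1,\ldots,X_{k-1}}(x_k\mid y,x_1,\ldots,x_{k-1}),
\]
which holds by iterating the definition $p_{A,B\mid C} = p_{A\mid B,C}\, p_{B\mid C}$. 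Taking logarithms converts this product into a sum, and linearity of the integral then splits the entropy into $\ell$ terms.

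Finally, for each $k\in[\ell]$, the integrand depends only on $(x_1,\ldots,x_k,y)$, so I would marginalize out $x_{k+1},\ldots,x_\ell$ from the joint density $p_{X_1,\ldots,X_\ell,Y}$ to obtain $p_{X_1,\ldots,X_k,Y}(x_1,\ldots,x_k,y)$. Each resulting term is then, by definition, precisely $h(X_k\mid Y,X_1,\ldots,X_{k-1})$, so summing gives the claimed identity. The main (and only) potential obstacle is bookkeeping: keeping track of the variables being marginalized versus conditioned on, and verifying the measure-theoretic assumption that the joint densities and their conditionals are well-defined. Since the statement is parallel to the unconditional chain rule (Lemma~\ref{lem:h-chain-rule}) with $Y$ simply appended to the conditioning set in every line, the argument is essentially a one-line reduction once the conditional-density chain rule is invoked.
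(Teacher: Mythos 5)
Your proposal is correct, but it takes a different route from the paper. You prove the identity directly from the definition of conditional differential entropy: factor the conditional density $p_{X_1,\ldots,X_\ell\mid Y}$ inside the logarithm via the probability chain rule, split the resulting sum by linearity, and marginalize out the unused variables in each term. The paper instead reduces to its already-stated unconditional chain rule (Lemma~\ref{lem:h-chain-rule}): it expands $h(Y,X_1,\ldots,X_\ell)$ in two ways --- once grouping $(X_1,\ldots,X_\ell)$ as a single vector, giving $h(Y)+h(X_1,\ldots,X_\ell\mid Y)$, and once fully, giving $h(Y)+\sum_k h(X_k\mid Y,X_1,\ldots,X_{k-1})$ --- then invokes Equation~\ref{eqn:joint-h-equiv-h} to equate the two left-hand sides and cancels $h(Y)$. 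The paper's argument is shorter given that Lemma~\ref{lem:h-chain-rule} is available as a black box; yours is self-contained and makes explicit exactly where the density factorization and Fubini-type interchange are used (your stated caveats about existence of densities and absolute convergence are the right ones, and they are implicitly present in the paper's version too, hidden inside the proof of Lemma~\ref{lem:h-chain-rule}). Both are standard and both are valid.
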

\begin{proof}
By Lemma~\ref{lem:h-chain-rule}, we have
$$h(Y,(X_1,\ldots,X_\ell))=h(Y)+h((X_1,\ldots,X_\ell)|Y)$$
$$h(Y,X_1,\ldots,X_\ell)=h(Y)+h(X_1|Y)+h(X_2|Y,X_1)+\ldots+h(X_\ell|Y,X_1,\ldots,X_{\ell-1}).$$
Since the joint entropy of several variables is the same as the entropy of their corresponding vector (Equation~\ref{eqn:joint-h-equiv-h}),
$$h(X_1,\ldots,X_\ell|Y)=h(X_1|Y)+h(X_2|Y,X_1)+\ldots+h(X_\ell|Y,X_1,\ldots,X_{\ell-1}).$$
\end{proof}

\paragraph{Mutual Information.}
The mutual information can be defined in terms of the Kullback–Leibler (KL) divergence.
\begin{defn}(Kullback–Leibler (KL) divergence)\label{defn:KL-divergence}
For probability density functions $p$ and $q$ of a continuous random variable, the Kullback–Leibler (KL) divergence is
$$\KLD(p\|q)=\int_{-\infty }^{\infty }p(x)\log \({\frac {p(x)}{q(x)}}\)\,dx.$$
The conventions $0\log\frac{0}{0}=0$, $0\log\frac{0}{q}=0$, and $p\log\frac{p}{0}=\infty$ are used.

\end{defn}

Let $X,Y,Z$ denote any jointly distributed random variables.
\begin{defn}(Mutual Information)\label{defn:I}
$$I(X;Y)=\KLD(p_{X,Y}\|p_{X}\otimes p_{Y})$$

\end{defn}

\begin{defn}(Conditional Mutual Information)\label{defn:cond-I}
\begin{align*}
    I(X;Y|Z)&=\Eover{Z}[\KLD(p_{X,Y|Z}\|p_{X|Z}\otimes p_{Y|Z})]\\
    &=\Eover{z\sim p_{Z}}[I(X;Y|Z=z)]
\end{align*}
\end{defn}

\begin{lem}(Nonnegativity of $\KLD$ and $I$)\label{lem:I-nonneg}
For any probability density functions $p$ and $q$,
$$\KLD(p\|q)\ge 0.$$
This implies
$$I(X;Y)\ge 0.$$
Then by definition
$$I(X;Y|Z)\ge 0$$
since for every $z\in\mathcal{Z}$, we can let $(X_z,Y_z)\equiv(X,Y)_{|Z=z}$ and then $I(X;Y|Z=z)=I(X_z;Y_z)\ge 0$. Thus $I(X;Y|Z)=\Eover{z\sim p_{Z}}[I(X;Y|Z=z)]\ge 0$.
\end{lem}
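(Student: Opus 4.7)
The plan is to prove the fundamental inequality $\KLD(p\|q) \geq 0$ via Jensen's inequality applied to the strict concavity of the logarithm, and then obtain both consequences ($I(X;Y)\geq 0$ and $I(X;Y|Z)\geq 0$) directly from the definitions.

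First I would establish nonnegativity of the KL divergence. Starting from Definition~\ref{defn:KL-divergence} and restricting the integral to the support of $p$ (which is justified by the conventions $0\log\tfrac{0}{q}=0$ and $p\log\tfrac{p}{0}=\infty$ stated alongside the definition), I would write
$$-\KLD(p\|q) = \int_{\mathrm{supp}(p)} p(x)\log\!\left(\frac{q(x)}{p(x)}\right)\,dx.$$
Since $\log$ is concave and $p(x)\,dx$ is a probability measure on $\mathrm{supp}(p)$, Jensen's inequality gives
$$-\KLD(p\|q) \leq \log\!\left(\int_{\mathrm{supp}(p)} p(x)\cdot\frac{q(x)}{p(x)}\,dx\right) = \log\!\left(\int_{\mathrm{supp}(p)} q(x)\,dx\right) \leq \log 1 = 0,$$
where the last inequality uses that $q$ is a probability density, so its integral over any subset of its domain is at most $1$. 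Rearranging yields $\KLD(p\|q)\geq 0$.

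Next, $I(X;Y)\geq 0$ follows immediately: by Definition~\ref{defn:I}, $I(X;Y) = \KLD(p_{X,Y}\|p_X\otimes p_Y)$, and since $p_X\otimes p_Y$ is itself a valid joint probability density function on $\mathcal{X}\times\mathcal{Y}$, the previous step applies. Finally, for the conditional case, as already noted in the lemma statement, Definition~\ref{defn:cond-I} gives $I(X;Y|Z) = \E_{z\sim p_Z}[I(X;Y|Z=z)]$; each inner term $I(X;Y|Z=z)$ is the ordinary mutual information of the random variables $(X,Y)$ conditioned on $Z=z$ and hence is nonnegative by the preceding step, so its expectation over $Z$ is also nonnegative.

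There is no real obstacle here — the only subtlety is bookkeeping around the support of $p$ versus $q$ in the Jensen step, which is already handled by the conventions attached to Definition~\ref{defn:KL-divergence} (if $p$ is not absolutely continuous with respect to $q$, then $\KLD(p\|q)=\infty\geq 0$ trivially; otherwise the Jensen argument above applies on $\mathrm{supp}(p)\subseteq\mathrm{supp}(q)$).
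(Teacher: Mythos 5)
Your proposal is correct and follows exactly the route the paper intends: the paper states the KL nonnegativity as a standard fact (deferring to the cited Cover--Thomas presentation, where the proof is precisely your Jensen's-inequality argument on the concave logarithm), and your derivations of $I(X;Y)\ge 0$ from Definition~\ref{defn:I} and of $I(X;Y|Z)\ge 0$ from Definition~\ref{defn:cond-I} match the one-line reductions already sketched in the lemma statement. Your handling of the support issue (infinite divergence when $p$ is not absolutely continuous with respect to $q$, Jensen on $\mathrm{supp}(p)$ otherwise) is also the correct bookkeeping.
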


\begin{lem}(Chain Rule for Mutual Information)\label{lem:I-chain-rule}
$$I(X_1,\ldots,X_\ell;Y)=\sum_{k=1}^{\ell} I(X_k;Y|X_1,\ldots,X_{k-1}).$$
\end{lem}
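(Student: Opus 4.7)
The plan is to derive the chain rule from the already-established identity $I(A;B) = h(A) - h(A|B)$ (Lemma~\ref{lem:I-expr-h}) together with the chain rule and conditional chain rule for differential entropy (Lemmas~\ref{lem:h-chain-rule} and~\ref{lem:h-cond-chain-rule}). The core observation is that both $h(X_1,\ldots,X_\ell)$ and $h(X_1,\ldots,X_\ell \mid Y)$ decompose into telescoping sums indexed by $k \in [\ell]$, so their difference decomposes term by term, and each individual term equals a conditional mutual information.

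First, I would write
\[
I(X_1,\ldots,X_\ell;Y) \;=\; h(X_1,\ldots,X_\ell) - h(X_1,\ldots,X_\ell \mid Y),
\]
treating $(X_1,\ldots,X_\ell)$ as a single vector-valued random variable and invoking Lemma~\ref{lem:I-expr-h}. Next, I would apply Lemma~\ref{lem:h-chain-rule} to the first term to obtain $\sum_{k=1}^{\ell} h(X_k \mid X_1,\ldots,X_{k-1})$, and apply Lemma~\ref{lem:h-cond-chain-rule} to the second term to obtain $\sum_{k=1}^{\ell} h(X_k \mid Y, X_1,\ldots,X_{k-1})$. Subtracting term by term gives
\[
I(X_1,\ldots,X_\ell;Y) \;=\; \sum_{k=1}^{\ell} \Bigl[\, h(X_k \mid X_1,\ldots,X_{k-1}) - h(X_k \mid Y, X_1,\ldots,X_{k-1}) \,\Bigr].
\]
Finally, I would identify each summand as a conditional mutual information. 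Using Lemma~\ref{lem:I-expr-h} applied to the conditional distribution given $X_1,\ldots,X_{k-1} = x_1,\ldots,x_{k-1}$ and then averaging over these conditioning variables (equivalently, using the defining expression in Definition~\ref{defn:cond-I}), one has
\[
h(X_k \mid X_1,\ldots,X_{k-1}) - h(X_k \mid Y, X_1,\ldots,X_{k-1}) \;=\; I(X_k;Y \mid X_1,\ldots,X_{k-1}),
\]
which yields the claimed identity.

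No serious obstacle is anticipated. The only subtlety worth verifying is that the ``$I = h - h(\,\cdot\mid\cdot)$'' relation extends cleanly to conditioning on the extra variables $X_1,\ldots,X_{k-1}$ (needed to interpret each bracketed difference as a conditional mutual information rather than an unconditional one). This follows by applying Lemma~\ref{lem:I-expr-h} pointwise for each fixed value of the conditioning variables and then integrating against the joint density of $X_1,\ldots,X_{k-1}$, matching Definition~\ref{defn:cond-I}. An alternative, essentially equivalent route would be a short induction on $\ell$, with the base case $\ell = 1$ being trivial and the inductive step again combining Lemma~\ref{lem:I-expr-h} with the (conditional) entropy chain rules; I would include the direct telescoping proof above as it avoids any inductive bookkeeping.
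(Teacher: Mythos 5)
Your proof is correct. The paper states this lemma without proof (it is quoted from~\cite{CovTho06}), and your telescoping derivation --- expanding $h(X_1,\ldots,X_\ell)$ and $h(X_1,\ldots,X_\ell\mid Y)$ via Lemmas~\ref{lem:h-chain-rule} and~\ref{lem:h-cond-chain-rule}, subtracting term by term, and identifying each difference as $I(X_k;Y\mid X_1,\ldots,X_{k-1})$ via the conditional form of Lemma~\ref{lem:I-expr-h} --- is the standard argument and is exactly in the spirit of how the paper proves its own Lemma~\ref{lem:h-cond-chain-rule} (by comparing two applications of the entropy chain rule). You also correctly flag and resolve the one genuine subtlety, namely that the identity $I=h-h(\cdot\mid\cdot)$ must be applied pointwise in the conditioning variables and averaged, consistent with Definition~\ref{defn:cond-I}.
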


\bibliographystyle{jhep}
\bibliography{all}

\end{document}